\long\def\@makecaption#1#2{
   \vskip 10pt
   \setbox\@tempboxa\hbox{{\footnotesize {\bf #1.} #2}}
   \ifdim \wd\@tempboxa >\hsize         %
       {\footnotesize {\bf #1.} #2\par}%
     \else                              %
       \hbox to\hsize{\hfil\box\@tempboxa\hfil}
   \fi}
\def\reals{\mathbb{R}}
\let\eps\varepsilon
\let\bd\partial
\def\OS{\mathcal{O}}
\def\WDS{\varUpsilon}
\def\ZWDS{\Upsilon^0}
\def\ZDS{\Sigma}
\def\ZZDS{\Sigma^0}
\def\zpred{\breve{\Pi}}
\def\pregion#1{\widetilde{#1}}
\def\curve{\sigma}
\def\ph{\varphi}
\def\tA{{t_a}}
\def\tB{{t_b}}
\def\tO{{t_o}}
\def\tQ{{t_q}}
\def\plate{\Delta}
\def\obj{O}
\def\query{Q}
\def\I{\mathcal{I}}
\def\oset{\mathcal{O}}
\def\qset{\mathcal{Q}}
\def\ospace{\mathbb{O}}
\def\qspace{\mathbb{Q}}
\def\qpred{\Pi}
\def\xx{\mathbf{x}}
\def\yy{\mathbf{y}}
\def\qtree{\varUpsilon}
\def\Aset{\A}
\def\Bset{\B}
\def\Aobj{A}
\def\Bobj{B}
\def\Qout{\Phi}
\def\A{\mathcal{A}}
\def\B{\mathcal{B}}
\def\C{\mathcal{C}}
\def\E{\mathcal{E}}
\def\EP{\mathcal{E}}
\def\F{\mathcal{F}}
\def\FS{\mathcal{F}}
\def\G{\mathcal{G}}
\def\S{\mathcal{S}}
\def\T{\mathcal{T}}
\def\W{\mathcal{W}}
\def\X{\mathcal{X}}
\def\ES{\mathbb{E}}
\def\TT{\mathbb{T}}
\def\CAD{{\Xi}}
\def\ZCAD{\hat{\Xi}}
\def\fiber{\Omega}
\def\zfiber{\hat{\Omega}}
\def\zcell{\sigma}
\def\zCADcell{\chi}
\def\frag{\varphi}
\def\fragset{\Phi}
\def\zfragset{\Lambda}
\def\arr{\mathcal{A}}
\def\cell{\tau}
\def\xx{\mathbf{x}}
\def\etal{\textit{et~al.}}
\newtheorem{theorem}{Theorem}[section]
\newtheorem{lemma}[theorem]{Lemma}
\newtheorem{corollary}[theorem]{Corollary}
\theoremstyle{remark}
\newtheorem{remark}{Remark}
\newtheorem*{remark*}{Remark}
\newtheorem*{remarks*}{Remarks}
\title{Intersection Queries for Flat Semi-Algebraic Objects in Three Dimensions and Related 
  Problems}
\author{Pankaj K. Agarwal}
\affiliation{%
  \institution{Department of Computer Science, Duke University}
  \city{Durham}
  \state{NC}
  \country{USA}
}
\email{pankaj@cs.duke.edu}
\author{Boris Aronov}
\affiliation{
  \institution{Department of Computer Science and Engineering, 
    Tandon School of Engineering, New York University}
  \city{Brooklyn}
  \state{NY}
  \country{USA}
}
\email{boris.aronov@nyu.edu}
\author{Esther Ezra}
\affiliation{
  \institution{School of Computer Science, Bar Ilan University}
  \city{Ramat Gan}
  \country{Israel}
}
\email{ezraest@cs.biu.ac.il}
\author{Matthew J. Katz}
\affiliation{
  \institution{Department of Computer Science, Ben Gurion University}
  \city{Beer Sheva}
  \country{Israel}
}
\email{matya@cs.bgu.ac.il}
\author{Micha Sharir}
\affiliation{
  \institution{School of Computer Science, Tel Aviv University}
  \city{Tel Aviv}
  \country{Israel}
}
\email{michas@tauex.tau.ac.il}
\begin{document}
 
\begin{abstract}
  Let $\T$ be a set of $n$ flat (planar) semi-algebraic regions in $\reals^3$ of constant complexity 
  (e.g., triangles, disks), which we call \emph{plates}.  We wish to preprocess $\T$ into a data structure 
  so that for a query object $\gamma$, which is also a plate, we can quickly answer various \emph{intersection queries}, 
  such as detecting whether $\gamma$~intersects any plate of~$\T$, reporting all the plates intersected by $\gamma$, or 
  counting them. We also consider two simpler cases of this general setting: (i)~the input objects are plates
  and the query objects are constant-degree parametrized algebraic arcs in $\reals^3$ (\emph{arcs}, for short), or
  (ii)~the input objects are arcs and the query objects are 
  plates in $\reals^3$. Besides being interesting in their own right, the data structures for these 
  two special cases form the building blocks for handling the general case. 
  
  By combining the polynomial-partitioning technique with additional tools from real 
  algebraic geometry, we present many different data structures for intersection queries, which also provide 
  trade-offs between their size and query time.
  For example, if $\T$ is a set of plates and the query objects are algebraic arcs, 
  we obtain a data structure that uses $O^*(n^{4/3})$ storage 
  (where the $O^*(\cdot)$ notation hides factors of the form $n^\eps$, for an arbitrarily small $\eps>0$) and answers an 
	arc-intersection query in $O^*(n^{2/3})$ time.  
  This result is significant since the exponents do not depend on the specific shape of the input and query objects.
  We generalize and slightly improve this result: for a parameter $s\in [n^{4/3}, n^\tQ]$, where $\tQ\ge 3$ is 
	the number of real parameters needed 
	to specify a query arc, the query time can be decreased to~$O^*((n/s^{1/\tQ})^{\tfrac{2/3}{1-1/\tQ}})$ 
  by increasing the storage to $O^*(s)$. 
  Our approach can be extended to many additional intersection-searching problems in three 
  dimensions, even when the input or query objects are not flat. 
\end{abstract}

\ccsdesc[500]{Theory of computation~Computational geometry}
       
\keywords{Intersection searching, Semi-algebraic range searching, Point-enclosure queries, Ray-shooting queries, Polynomial partitions, Cylindrical algebraic decomposition, Multi-level partition trees}

\maketitle

\section{Introduction}
\label{sec:intro}

The general intersection-searching problem asks to preprocess a set $\T$ 
of geometric objects in~$\reals^d$ into a data structure, so that one can quickly report or count 
all objects of $\T$ intersected by a query object~$\gamma$, or just test 
whether $\gamma$ intersects any object of~$\T$ at all.  
Motivated by applications in various areas such as robotics, computer-aided design, computer graphics, and 
solid modeling, intersection-searching problems have been studied since the 1980s.
The early work~\cite{DE87,AE99} on intersection searching in computational geometry mostly focused on those instances
which could be reduced to simplex range searching in 2D or 3D, and more recently 
on segment-intersection or ray-shooting queries amid triangles in $\reals^3$---see the survey 
by Pellegrini~\cite{Pel:surv}.
However, very little is known about more general intersection queries in $\reals^3$, at least from the theoretical perspective. For instance, how quickly can one answer arc-intersection queries amid triangles in 
$\reals^3$, or triangle-intersection queries amid arcs in $\reals^3$? 

In this paper we  make significant, and fairly comprehensive, progress on the design of efficient solutions 
to general intersection-searching problems in $\reals^3$. We mainly investigate
intersection-searching problems in $\reals^3$ where both input and query objects are flat 
(planar) semi-algebraic regions of constant complexity 
(e.g., triangles, disks), which we refer to as \emph{plates}\footnote{%
  Roughly speaking, a semi-algebraic set in $\reals^d$ is the set of points 
  in $\reals^d$ satisfying a Boolean predicate over a set of polynomial inequalities; 
  the complexity of the predicate and of the set is defined in terms of the number of 
  polynomials involved and their maximum degree; see \cite{BPR} for details. We call a semi-algebraic set in $\reals^3$ 
  \emph{flat} (or \emph{planar}) if it is contained in a (two-dimensional) plane.} and/or (not necessarily planar) arcs. 
  In particular, we study the following three broad classes of intersection-searching problems:
\begin{itemize}
	\item[(Q1)] the input objects are plates and the query objects are arcs in $\reals^3$, 
	\item[(Q2)] the input objects are arcs and the query objects are plates in $\reals^3$, and
	\item[(Q3)] both input and query objects are plates in $\reals^3$.
\end{itemize}
 Beyond these three classes of queries, we also study some cases when both query and input objects are non-planar. 

 These instances of intersection searching arise naturally in the applications mentioned above, and 
 no data structures are known for them that perform better than what one would obtain using 
 the recent semi-algebraic range-searching techniques as described below.
 (As far as we know, no one has yet applied these fairly recent techniques to the problems discussed in this work.)

\subsection{Related work}
\label{subsec:related-work}

Intersection searching is a generalization of range 
searching (in which the input objects are points) and point enclosure queries
(in which the query objects are points), so it is not surprising that range-searching techniques have been extensively used for intersection searching~\cite{AE99,AAEZ}. More precisely,
the intersection condition between an input object and a query object
can be written as a first-order formula involving polynomial equalities and inequalities.
Using quantifier elimination, intersection queries can be reduced to \emph{semi-algebraic range} queries,
by working in \emph{object space}, where each input 
object~$O$ is mapped to a point~$O^*$ and a query object~$\gamma$ is mapped to a 
semi-algebraic region~$\pregion{\gamma}$, such 
that $O^*\in\pregion{\gamma}$ if and only if $\gamma$ 
intersects the corresponding input object $O$. Alternatively, the 
problem can be reduced to a \emph{point-enclosure} query, by working 
in \emph{query space}, where now each input object~$O$ is mapped to a 
semi-algebraic region~$\pregion{O}$ and each query object~$\gamma$ is 
mapped to a point~$\gamma^*$, so that 
$\gamma^*\in \pregion{O}$ if and only if $\gamma$ intersects~$O$.
The first approach leads to a linear-size data structure with sublinear 
query time, and the second approach leads to a large-size data structure (albeit polynomial in the input size)
with logarithmic or polylogarithmic query time; see,~e.g.,~\cite{AAEZ,AM94,AMS,BHO*,Ch85,MP,YY85}, where this technique has been applied to simpler %
instances of intersection queries.
One can also combine the two approaches to obtain a query-time/storage trade-off. 
We refer to this approach as the \emph{range-searching based approach}.

The performance of these data structures depends on the number of parameters 
needed to specify the input and query objects. We refer to these numbers as 
the \emph{parametric dimension} 
of the input and query objects, respectively.
Sometimes (quite often in the present study) the performance can be improved using a 
\emph{multi-level data structure}, where the data structure at each level is constructed in a lower-dimensional 
subspace of the parametric space, using only some of the degrees of freedom that specify an object.  
We refer to the maximum dimension of a subspace over all levels of the data structure as the 
\emph{reduced parametric dimension}, which is equal to the maximum number of parameters of 
input/query objects used in a polynomial inequality in the Boolean formula describing the 
intersection condition; see Appendix~\ref{app:multi-level} for a more formal definition.
The performance of such a multi-level data structure depends on the reduced parametric 
dimension.  As an illustration, segments in the plane have four degrees of freedom, but the standard way of computing intersections between them uses a multi-level structure where each level considers only two parameters (the coordinates of one endpoint or the coefficients of the supporting line), so the reduced parametric dimension in this case is only two.
We note that while the parametric dimension of input (or query) objects is independent of the underlying intersection predicate, the reduced parametric dimension depends on the predicate, and one often expresses the intersection 
predicate in a way that minimizes the reduced parametric dimension.

If the reduced parametric dimensions of the input and query objects 
are $\tO $ and $\tQ $, respectively, and~$s\in[n,n^\tQ]$ 
is a \emph{storage parameter}, then using the recently developed techniques for semi-algebraic range searching based on 
the polynomial-partitioning method~\cite{AMS,AAEZ},
a query can be answered in $O^*((n/s^{1/\tQ})^\rho)$
time using $O^*(s)$ storage and expected preprocessing time,
where $\rho=\tfrac{1-1/\tO}{1-1/\tQ}$.
See Appendix~\ref{app:multi-level}.
(As in the abstract, the $O^*(\cdot)$ notation hides factors
of the form $O(n^{\eps})$, for arbitrarily small $\eps > 0$, and their 
coefficients, which depend on $\eps$.)
For example, a segment-intersection 
query amid $n$ triangles in $\reals^3$ can be answered in $O^*(n^{3/4})$ 
time using $O^*(n)$ storage, in $O(\log n)$ time using $O^*(n^4)$ storage, or in
$O^*(n/s^{1/4})$ time using $O^*(s)$ storage,
for $n \le s \le n^4$, by combining the first two solutions~\cite{Pel,Pel:surv}. 
A similar multi-level approach yields data structures in which a 
segment-intersection-detection query amid $n$ planes or spheres in~$\reals^3$ can be answered in
$O^*(n/s^{1/3})$ time using $O^*(s)$ storage, for $n\leq s \leq n^3$,~\cite{MoSh,Pel,Pel:surv,ShSh}---these 
data structures can be extended to answering reporting queries at an additional cost that is linear in the output size.
However, the extension 
to answering counting queries does not apply to the setting of spheres.  

A departure from this approach is the \emph{pedestrian} approach for 
answering ray-shooting queries.  For instance, given a simple polygon $P$ with $n$ edges, a Steiner triangulation of $P$ can be constructed so that a line segment lying inside $P$ intersects only $O(\log n)$ triangles. A query is answered by 
traversing the query ray through this sequence of triangles~\cite{HS95}. 
The pedestrian approach has also been applied 
to polygons with holes in~$\reals^2$~\cite{AAS95,HS95}, to a convex polyhedron 
in~$\reals^3$~\cite{DK83}, and to polyhedral subdivisions in~$\reals^3$~\cite{AAS95,AF99}.
Some of the ray-shooting data structures combine the pedestrian 
approach with the above range-searching tools~\cite{Ag92,AS96,BHO*}.

Recently, Ezra and Sharir~\cite{ES} proposed 
a new approach for answering ray-shooting queries amid triangles in $\reals^3$.
Roughly speaking, 
they first construct a partition tree in $\reals^3$  using the polynomial-partitioning scheme of Guth~\cite{Guth} on the input triangles and their edges, and then reduce the problem to segment-intersection queries amid a set of planes in $\reals^3$.
The latter can 
be formulated as a three-dimensional simplex range-searching problem, and thus their approach
leads to a data structure with~$O^*(n^{3/2})$~storage and $O^*(n^{1/2})$~query time,
which improves upon the earlier solution~\cite{Pel}. 
They also obtain an improved trade-off between storage and query time by augmenting their approach with the previous approaches that  build a data structure in the four-dimensional parametric space of lines.
In addition, their approach supports segment-intersection reporting queries in~$O^*(n^{1/2} + k)$ 
time, where $k$ is the output size (with the same amount of storage).
However, this approach neither supports counting queries nor can it answer 
intersections queries with non-straight arcs (e.g.\ circular arcs), or handle non-flat input objects.  

\subsection{Our results} 
\label{subsec:results}

We present efficient data structures for many instances of the intersection searching tasks (Q1)--(Q3) that perform significantly better than the best known methods. For most settings considered here, the best known method is 
what one obtains using the general semi-algebraic range-searching based technique mentioned above.
The bottleneck in developing efficient data structures for answering (Q1)--(Q3) queries is that
the number of parameters needed to represent a plate   
 is typically much larger than three because of the parameters needed to specify its boundary.
 We aim to circumvent this challenge by first constructing a partition tree in the \emph{ambient space}
 $\reals^3$, which enables us to 
 express the arc-plate intersection condition (for Q1-instances, say) as a semi-algebraic predicate in which 
 each atomic predicate uses only a few of the parameters of a plate or of an arc,
 thereby attaining a small reduced parametric dimension. 
 For instance, if the input (resp. query) objects are plates, then ideally (albeit too ambitiously at first sight) we would like to reduce 
 $\tO$ (resp.\ $\tQ$) to $3$, the number of parameters needed to represent the plane containing the plate,
 and eliminate the dependence on the parameters needed to represent the boundary.
 Unlike segment-triangle intersection studied in~\cite{ES},
 several technical challenges arise in accomplishing this goal because the arc-plate intersection predicate 
 could be quite complex. For example, an arc may intersect a plate $\triangle$ several times and thus 
 may intersect $\triangle$ even if the arc lies above (or below) all of the edges of $\triangle$. 

 One of the main technical contributions of this work is to develop a general approach that leads to 
 a small reduced parametric dimension by combining the polynomial-partitioning technique with tools 
 from real algebraic geometry.
These tools enable us to reduce the original intersection-searching problem to simpler ones and to eliminate the 
asymptotic dependence on the boundary parameters completely in many cases, thereby improving $\tO$ and $\tQ$ 
significantly, e.g., to three for the case of plates. 
The most interesting among these tools is the 
\emph{cylindrical algebraic decomposition} (CAD) (see~\cite{BPR,Col,SS2}) 
that is used to construct a carefully tailored stratification
of a suitable parametric space. We 
exploit for this purpose the full power of CAD (see below). Other more efficient 
decomposition schemes, such as vertical decomposition, do not seem to work.

We describe, in Appendix~\ref{app:multi-level}, the multi-level data structure (mentioned above) for a fairly general 
\emph{semi-algebraic predicate} searching that allows space/query-time trade-off,
using the recent techniques for semi-algebraic range searching based on the 
polynomial-partitioning method~\cite{AMS,AAEZ}. Although this approach is 
similar to the multi-level partition trees described in~\cite{AE99,Mat94}, extending it 
to polynomial-partitioning based data structures is nontrivial and, as far as we know,
has not been described in any previous paper.
 We first describe (in Appendix~\ref{subsec:Pi-lin}) a data structure of $O^*(n)$ size, by applying the polynomial-partioning 
  technique in the object space, that answers a query in $O^*(n^{1-1/\tO})$ time. Next, we obtain 
  a space/query-time trade off
  (see Appendix~\ref{subsec:Pi-tradeoff}) by first applying the 
  polynomial-partitioning technique in the query space and then switching to the first data structure when 
  the size of a subproblem becomes sufficiently small. As mentioned above, for a storage parameter $s\in[n,n^\tQ]$,
  a query can be answered in $O^*\left( (n/s^{1/\tQ})^{\tfrac{1-1/\tO}{1-1/\tQ}}\right)$ time using $O^*(s)$ space.
We regard the careful and detailed presentation of this general machinery as another major contribution of this paper.
In this paper, we augment the aforementioned polynomial-partitioning scheme in the three-dimensional ambient space with 
these data structures to obtain space/query-time trade offs. These multi-level data structures already have  found other 
applications~\cite{ES-22}, and we believe they will continue to find additional applications.

In the rest of this introduction, we describe the specific results that we have obtained, compare them with the best-known
bounds or rather with best bounds achievable with the range-searching based approach (mentioned above), and briefly sketch the key ideas
that lead to the improved bounds. See Table~\ref{table:results} below for a summary of the main results. 
For simplicity, we mostly focus on answering \emph{intersection-detection} queries,
where we want to determine whether a query object intersects any input object of $\T$. 
Our data structures extend to answering \emph{intersection-reporting} queries, where we wish to
report all objects of~$\T$ that the query object intersects.
By combining our intersection-detection data structures with the 
parametric-search framework of Agarwal and Matou\v{s}ek~\cite{AM}, we can also answer \emph{extremal} intersection 
queries with one degree of freedom. For example, we can answer \emph{arc-shooting} queries, an extension of the well studied \emph{ray-shooting} queries, amid a set of plates, where 
for a directed query arc $\gamma$, the goal is to find the first intersection point of $\gamma$ 
with the plates as we walk along $\gamma$ (or report that there is no such a point). 
Most of the data structures extend to answering \emph{intersection-counting} queries as well, 
within the same asymptotic time bound.\footnote{%
        More generally, we can use the so-called semi-group model, i.e., the setup where given a semigroup $(S,+)$, each 
	input object is assigned a weight, which is an element of $S$. For a query object $\gamma$, the query procedure 
	returns the ``sum'' of weights of the
        input objects intersected by the query object. For example, if the 
	semigroup is $(\reals, \max)$, it returns the maximum-weight input plate/arc intersected by a query arc/plate.
        For counting queries, the data structures count the number of connected components of the \emph{intersections} 
        between the input objects and the query object (which may be a set of points or a set of arcs), 
        and not the number of input objects intersected by the query object (except when both input and query 
        objects are triangles).
        
        Throughout this paper, when we say that an intersection query can be 
        answered in $O^*(t(n))$ time, we mean
        that detection, counting, and extremal queries can be answered in $O^*(t(n))$ time,
        and reporting queries in $O^*(t(n))+O(k)$ time, where $k$ is the output size.
        We use an enhanced version of the \emph{real RAM} model of 
        computation~\cite{PS85} for our algorithms, in which we assume that various operations on 
        polynomials of fixed constant degree with a constant number of variables can be performed in $O(1)$ time; see~\cite{BPR}.
}

Before we summarize our specific results, we need
the following definition: We refer to a connected path $\gamma$ as a \emph{parametrized (algebraic) arc} if it is the restriction of a real algebraic curve $\sigma\colon I \rightarrow \reals^3$, where $I$ is either the real axis or the unit circle,
to a subinterval $[a,b]\subseteq I$. 
The \emph{parametric dimension}~$t$ of~$\gamma$ is the number of real parameters needed 
to describe $\gamma$.
Two of these parameters, namely $a$, $b$, specify the respective endpoints $\sigma(a)$ and $\sigma(b)$.
(For instance, the parametric dimension of a line segment is $6$: $3$ parameters to define each of its endpoints, or alternatively,
$4$ parameters to define the line containing the segment and $2$ parameters to define its endpoints.)
We assume that the degree of the polynomials defining $\sigma$ is bounded by 
some constant, the dependence on which will only show up in the constants hiding in the $O^*(\cdot)$ notation.
See Section~\ref{sec:main} for a more detailed discussion on the parametrization of algebraic arcs.\footnote{%
  Many of the data structures described in this paper work even with an implicit representation of the query arcs, where
  the curve supporting a query arc is the one-dimensional locus of real solutions to an equation of the form
  $F_1^2(x,y,z) + F_2^2(x,y,z)=0$, where $F_1, F_2 \in \reals[x,y,z]$ are  polynomials.
  It is known that any irreducible algebraic curve in $\reals^3$ can be described as $f(x,y)=0$ and 
  $z=g_1(x,y)/g_2(x,y)$, where $f, g_1, g_2$ are polynomials and $g_2 \not\equiv 0$, and that such a 
  representation can be computed efficiently~\cite{AB89,Be97}. However, for simplicity, throughout this paper we
  assume that we have a uni-parametric representation of the query arcs.
}

\paragraph*{Intersection searching with arcs amid plates.} (Sections~\ref{sec:main}--\ref{sec:linst}.)
Let $\T$ be a set of $n$ plates in $\reals^3$ in general position.  That is, we assume that  
any plane contains only $O(1)$ plates of $\T$, and any line is contained in supporting planes of $O(1)$ plates of $\T$.
We note that the general-position assumption on the input plates is necessary to make our algorithms efficient, as discussed in the beginning of Section~\ref{sec:main}.
Let $\Gamma$ be a parametrized family of (not necessarily planar) algebraic arcs in~$\reals^3$ (e.g., the family of all line segments, or the family of all circular or parabolic arcs). Let $\tO, \tQ$ be the reduced parametric 
dimensions of $\T$ and $\Gamma$, respectively, for their intersection predicate, which as mentioned above can be considerably smaller than their parametric dimensions.  For instance, although the parametric dimension of triangles in $\reals^3$ is $9$, as we will see below, $\tO=5$ for the intersection predicate with respect to algebraic arcs.
The majority of the paper is devoted to presenting several data structures for answering arc-intersection 
queries amid $\T$ with arcs in $\Gamma$ under various settings.

We begin (see Section~\ref{sec:main}) by presenting an $O^*(n^{4/3})$-size data structure that can be constructed in 
$O^*(n^{4/3})$~expected time and that supports arc-intersection queries in $O^*(n^{2/3})$ time.
(In fact, the exponent in the query time can be reduced to  slightly less than $2/3$, as stated in~Theorem~\ref{thm:space-query-tradeoff}.)
Our key technical contribution is that the asymptotic query time bound depends neither on the parametric dimension of the query 
arc nor on that of the input plates, though the coefficients hiding in the $O^*$-notation do depend on them, which we found quite surprising.
If we follow the range-searching based approach outlined in Appendix~\ref{app:multi-level} 
(and use Theorem~\ref{thm:Pi-tradeoff}), an $O^*(n^{4/3})$-size data structure will answer a query in 
time $O^*(n^\rho)$, where 
$\rho= (1-\tfrac{1}{\tO })(1-\tfrac{1}{3(\tQ -1)})$. For instance, if the input is a set of triangles and the query arcs 
are circular arcs, then naively $\tO=9$, $\tQ=8$, and $\rho \approx 0.847$. It might be possible to reduce the value of $\tO$ and $\tQ$ by examining the intersection predicate carefully, but we are unaware of any such result.

As in~\cite{ES}, we also construct a polynomial partitioning in the ambient space $\reals^3$, i.e., we compute 
a trivariate partitioning polynomial~$F$ of degree~$O(D)$ for the input plates as well as their boundary arcs, 
where $D$ is  a sufficiently large constant, using the algorithm in~\cite{AAEZ}. 
The zero set~$Z(F)$ of~$F$ partitions~$\reals^3$ into \emph{cells}, which 
are the connected components of~$\reals^3\setminus Z(F)$. Each cell is crossed by at most $n/D$ input plates and the boundary arcs of at most $n/D^2$ plates.
For a cell~$\tau$ of $\reals^3\setminus Z(F)$, the plates whose relative boundaries intersect $\tau$ 
are called \emph{narrow} at $\tau$, and the other plates intersecting $\tau$ are called \emph{wide}.
For each cell~$\tau$, we recursively preprocess the narrow plates of $\tau$,
and construct a secondary data structure for the wide plates of $\tau$. A~query is answered by traversing all cells 
of $\reals^3\setminus Z(F)$ that are intersected by the query arc. (If the query arc is contained in $Z(F)$ then we 
use a two-dimensional data structure, described in Section~\ref{sec:zero-set}, to answer an intersection query.) 
At each such cell, intersection with narrow plates is handled recursively, and the intersection with wide plates is processed nonrecursively, using the secondary data structure.
Handling wide plates is significantly more challenging than in~\cite{ES} because the query object, as well 
the edges of the  plates, are arcs instead of (straight) line segments, and we use 
a completely different approach that not only works for algebraic arcs but
extends to answering intersection-counting queries.
Roughly speaking, we construct a carefully tailored
CAD of a suitable parametric space, where the CAD is induced by the partitioning polynomial. 
For a plate~$\Delta$ and a partitioning polynomial $F$, $\Delta \setminus Z(F)$ consists of several
connected components. The CAD is used to further subdivide each component into smaller pieces (pseudo-trapezoids)
and label each piece that is fully contained in the relative interior of~$\Delta$.
The label is an explicit semi-algebraic representation of that piece, of constant complexity, 
that \emph{depends only on the equation of~$h_\Delta$}, the plane supporting $\Delta$ (and not on the boundary of~$\Delta$), and \emph{on the fixed polynomial~$F$} (Sections~\ref{subsec:usecad} and~\ref{subsec:plate-partition}). 
These labels enable us to formulate an arc-intersection query on wide plates as a 
three-dimensional semi-algebraic range query (Section~\ref{subsec:rs}), which is how we get
the query time to be independent of the parametric dimension of the plates.  
In our view, using CAD to handle wide plates is the most interesting idea of the paper, which we later apply to other settings as well.

Next, we present data structures (in Sections~\ref{sec:wide-tradeoff} and~\ref{sec:trade-off}) 
for answering arc-intersection queries amid plates, that provide a trade-off 
between size and query time. We first present such data structures (see Section~\ref{sec:wide-tradeoff}) for wide 
plates by using the CAD labels and the general framework of space/query-time trade-off 
described in Appendix~\ref{app:multi-level}. 
In general, if the query arcs have reduced parametric dimension~$\tQ$, then, using $O^*(s)$ storage, 
$s \in [n,n^{\tQ}]$, an arc-intersection query amid wide plates can be answered in 
$O^*\biggl((n/s^{1/\tQ})^{\tfrac{2/3}{1 -1/\tQ}}\biggr)$
time (see~Lemma~\ref{lem:wide-tradeoff}).
If the query arcs are \emph{planar}, then 
by exploiting the geometry of planar arcs, $\tQ$ (for the intersection predicate with respect to wide plates) 
can be improved to the reduced parametric dimension of the curves supporting the arcs in $\Gamma$,
by effectively eliminating the dependence on the endpoints of the query arcs 
(see Sections~\ref{subsec:circ_imp} and~\ref{subsec:planar_imp}).
For example, $\tQ=8$ for circular arcs 
(three for specifying the supporting plane, three for specifying the containing circle in 
that plane, and two for the endpoints) while $\tQ=6$ for circles. Therefore the query time improves
from~$O^*(n^{13/21})$ (the bound for $\tQ =8$) to~$O^*(n^{3/5})$ (the bound for $\tQ =6$), 
for the same asymptotic storage complexity $O^*(n^{3/2})$.
We note that the query time of a data structure, with storage parameter $s\in [n,n^\tQ]$,  based on 
the range-searching approach would be $O^*((n/s^{1/\tQ})^{\tfrac{1-1/\tO}{1-1/\tQ)}})$,
which is larger because $\tO$ is typically much larger than $3$.

With the space/query-time trade-off for wide plates at our disposal, we present
a trade-off for general plates in Section~\ref{sec:trade-off} by combining the partition tree described in Section~\ref{sec:main} with the range-searching technique of Appendix~\ref{app:multi-level}.  
For $s \le n^{3/2}$, we combine it with the $O^*(n)$-size data structure (see Section~\ref{subsec:Pi-lin}), and for 
$s>n^{3/2}$, we combine it with the $O^*(1)$ query-time data structure (see Section~\ref{subsec:Pi-tradeoff}).
For a storage parameter $s\in [n,n^{\tQ}]$,
an arc-intersection query can be answered in time
\[
O^* \biggl ( \frac{n^{2- 3/\tO}}{s^{1-2/\tO}} + 
	\biggl ( \frac{n}{s^{1/\tQ}}\biggr )^{\tfrac{2/3}{1-1/\tQ}} \biggr ) .
\]
If $\Gamma$ is a family of planar arcs, then $\tQ$ in the above bound is the reduced parametric dimension of the curves supporting the arcs in $\Gamma$. 

We next present in Section~\ref{sec:zero-set} a data structure 
for arc-intersection queries for the case when the query arcs lie on a fixed two-dimensional algebraic
surface of constant degree. Such a data structure is needed as a subroutine for the main 
data structure described in Section~\ref{sec:main}. Again, we combine the polynomial-partitioning technique with CAD 
for this task.  Using the fact that the query arcs lie on a fixed two-dimensional surface, we obtain a data structure of $O^*(n)$ size with $O^*(n^{2/3})$ query time.

We conclude the first part of the paper 
in Section~\ref{sec:linst} by considering a special case of plates, namely,
when $\T$ is a set of \emph{triangles} in $\reals^3$. We show that
the intersection condition of a triangle with a parametrized algebraic arc of constant complexity
can be expressed as a semi-algebraic predicate in which each 
polynomial inequality uses at most five of the nine parameters that specify a triangle, namely,
$\tO=5$ in this case.   This is accomplished by constructing a CAD induced by a suitable polynomial
in the joint space of query arcs and 
the space of planes in $\reals^3$, i.e., in $\reals^{t+3}$ if the parametric dimension of the 
curves supporting the query arcs is $t$, and by building a separate data structure for each cell of the CAD.
This leads to an $O^*(n)$-size data structure for triangles, that answers arc-intersection queries in
$O^*(n^{4/5})$ query time, a significant improvement over the best achievable %
query time of $O^*(n^{8/9})$ using Theorem~\ref{thm:Pi-tradeoff}. We also obtain a trade off between the query time and the size of the data structure.

\paragraph*{Intersection searching with plates amid arcs.}
Next, we present data structures (in Section~\ref{sec:lines}) for the complementary setup where the input
objects are arcs and we query with a plate.
We first show that we can preprocess a set $L$ of $n$ lines in $\reals^3$,
in expected time $O^*(n^{3/2})$, into a data structure  of size $O^*(n^{3/2})$,
by constructing a polynomial partitioning in $\reals^3$ on the input lines,
so that a plate-intersection query 
amid $L$ can be answered in~$O^*(n^{1/2})$~time. 
The algorithm constructs a CAD in $\reals^3$ induced by the partitioning polynomial, and uses a topological result 
to reduce the problem to plane-intersection searching amid a set of segments. The latter can be formulated 
as a three-dimensional simplex range-searching problem. The best achievable query time
for a data structure of size $O^*(n^{3/2})$, based on the range searching approach, is 
$O^*(n^\rho)$, where $\rho=\tfrac{3}{4}\biggl(1-\tfrac{1}{2(\tQ-1)}\biggr )$, where $\tQ$ is the parametric 
dimension of the query plates.
More generally, for a parameter $s\in [n, n^{\tQ}]$, an intersection query can be answered in time
$O^* \biggl (n^{5/4}/s^{1/2} + (n/s^{1/\tQ})^{\frac{1}{2 -3/\tQ}} \biggr )$ using $O^*(s)$ space by combining the $O^*(n^{3/2})$-size data structure with the range-searching based technique.
This data structure easily extends, with the same asymptotic performance,
to the case where the input is a set of line segments rather than full lines.

Finally, we consider the case where the input consists of a set of $n$ algebraic 
arcs of reduced parametric dimension $\tO$ in general position, i.e., any plane contains only $O(1)$ input arcs.
The query objects remain plates of reduced parametric dimension $\tQ$. Since a query plate may intersect an input arc multiple times, our data structure for line segments does not extend to arcs. 
Instead, we follow an approach similar to that in Section~\ref{sec:main} and combine polynomial partitioning with CAD. 
For a storage parameter $s\in [n,n^{\tQ}]$, the query time is 
$O^* \biggl ( (n/s^{1/3})^{\frac{1-1/\tO}{2/3}}+(n/s^{1/\tQ})^{\frac{1}{2-3/\tQ}} \biggr )$.
See Theorem~\ref{thm:arcs}.  In particular, for $s=n^{3/2}$, the query time is 
$O^*(n^{\frac{3}{4}(1-1/\tO)})$, which, interestingly, is asymptotically independent of 
$\tQ$. As a comparison point, the exponent in the query time with the range-searching approach would be 
$(1-\tfrac{1}{\tO})(1-\tfrac{1}{2(\tQ-1)})$.
In other words, the new approach reduces~$\tQ$ to~$3$, eliminating altogether the dependence on the 
complexity of the boundary of 
the query plate.

\paragraph*{Intersection searching with plates amid plates.\footnote{%
  This study was inspired by a question of Ovidiu Daescu related to collision detection in robotics.}}
The above results can be used to provide simple solutions for the case where both
input and query objects are plates (see Section~\ref{sec:plate-plate}).
For simplicity, assume first that both input and query objects are triangles in~$\reals^3$.
For $s \in [n,n^4]$, using $O^*(s)$ storage, a detection/reporting/extremal 
query can be answered in time $O^*( n^{5/4}/s^{1/2} + n^{4/5}/s^{1/5})$.
For counting queries, the query time is $O^* (n^{5/4}/s^{1/2} + n^{8/9}/s^{2/9})$. 
(This is the only case in this paper in which intersection-counting queries are
more expensive than intersection-detection or reporting queries.)

The technique can be extended to the case where both input and query objects are arbitrary plates. 
In this case, the boundary of a plate consists of $O(1)$ algebraic arcs of constant complexity. Let
$\tO $ and $\tQ $ be the reduced parametric dimensions of the input and the query plates, respectively.
We obtain a data structure of $O^*(n^{3/2})$ size with query time $O^*(n^\rho)$, where
$\rho = \max \left\{\frac{2\tQ -3}{3(\tQ -1)}, \frac{3(\tO -1)}{4\tO } \right\}$. If $\tO=\tQ=t\ge 3$, then 
$\rho=\frac{3}{4}(1-1/t)$.

Our data structure for the plate-plate case also works if the input and query objects 
are constant-complexity, not necessarily convex three-dimensional polyhedra.
This is because an intersection between two polyhedra occurs when their boundaries meet,
unless one of them is fully contained in the other, and the latter situation can be 
easily detected. We can therefore just triangulate the boundaries of both input and 
query polyhedra and apply the triangle-triangle intersection-detection machinery.

\paragraph*{The case of spherical caps.}
Finally, to show the versatility of our approach, we present (in Section~\ref{sec:caps}) an application of our technique to an 
instance where the input objects are not flat. Specifically, we show how to answer 
segment-intersection queries amid spherical caps (each being the intersection of a 
sphere with a halfspace) in $O^*(n^{27/40})$ using $O^*(n^{3/2})$ space. 
By combining this data structure with the general range-searching techniques,
for a storage parameter $s\in [n,n^6]$, a query can be 
answered in $O^*(n^{11/7}/s^{5/7}+n^{9/10}/s^{3/20})$ time.

\begin{table}
  \small
  \caption{Summary of results.  
    For simplicity, we state the bounds for fixed storage parameters.
    Storage is $O^*(n^\alpha)$, and query time is $O^*(n^\beta)$ using the range-searching approach
    in Appendix~\ref{app:multi-level} and $O^*(n^\gamma)$ using our new technique.
    We specify the values of $\alpha$, $\beta$, and $\gamma$ for each result.
    Here $\tO, \tQ$ are the reduced parametric dimensions of the input and the query objects, respectively.
    For the case of triangles vs\@. triangles, the exponent $5/8$ in the query time is obtained 
    by setting the parametric dimension to $4$ after some processing (without further processing the parametric
    dimension is $9$).
  }
  \label{table:results}
  \centering
  \begin{tabular}{|c|c|c|c|c|c|}
    \hline
    Input & Query & Storage & \multicolumn{2}{c}{Query Time Exponent}&Reference\\
    \cline{4-5}
	  &&Expon\@. ($\alpha$)&Old~($\beta$)&New~($\gamma$)&\\
    \hline\hline
    Plates & Arc/Curve & $4/3$&$(1-\tfrac{1}{\tO})(1-\tfrac{1}{3(\tQ-1)})$&$2/3$&Theorem~\ref{thm:trimain}\\
    Plates & Arc/Curve & $3/2$&$(1-\tfrac{1}{\tO})(1-\tfrac{1}{2(\tQ-1)})$&$\tfrac{2}{3}(1-\tfrac{1}{2(\tQ-1)})$&Theorem~\ref{thm:space-query-tradeoff}\\
    Triangles & Arc/Curve & $1$&$8/9$&$4/5$&Theorem~\ref{thm:linsto}\\
    Triangles & Arc/Curve & $11/9$&$\tfrac{8}{9}(1-\tfrac{2}{9(\tQ-1)})$&$2/3$&Theorem~\ref{thm:trimainx}\\
     \hline
     Segments& Plate&$3/2$&$\tfrac{3}{4}(1-\tfrac{1}{2(\tQ-1)})$&$1/2$&Theorem~\ref{thm:segs-tradeoff}\\
     Arcs/Curves & Plate&$3/2$&$(1-\tfrac{1}{\tO})(1-\tfrac{1}{2(\tQ-1)})$&$\tfrac{3}{4}(1-\tfrac{1}{\tQ})$&Theorem~\ref{thm:arcs}\\
     \hline
     Triangles&Triangle&$3/2$&$5/8$&$1/2$ {\small (report)} &Theorem~\ref{thm:trixx}\\
     Triangles&Triangle&$3/2$&$5/8$&$5/9$ {\small (count)} &Theorem~\ref{thm:tri-count}\\
     Plates&Plate&
     $3/2$&$(1-\tfrac{3}{2\tQ})$&$\max\left\{\frac{2\tQ -3}{3(\tQ -1)},\frac{3(\tO -1)}{4\tO }\right\}$&Theorem~\ref{thm:ext}\\
     \hline
     Spherical caps&Segment&$5/4$&$11/14$&$3/4$&Theorem~\ref{thm:caps}\\
     Spherical caps&Segment&$3/2$&$5/7$&$27/40$&Theorem~\ref{thm:caps}\\
     \hline
  \end{tabular}
\end{table}

\section{Intersection Searching with Query Arcs amid Plates}
\label{sec:main}

	We begin by describing the parametrization/representation of query arcs and input plates.
A parametrized (infinite) family $\Gamma$ of algebraic arcs in $\reals^3$ is defined
as follows. Recall that a function $y=f(x_1, \ldots, x_m)$ in $m$ variables is called an \emph{algebraic function} if it 
solves a polynomial equation in $m+1$ variables of the form $P_f(y,x_1,\ldots, x_m)=0$. We say that 
$P_f$ \emph{defines} the function $f$, and that the degree of $P_f$ is the \emph{degree} of $f$.
For some fixed constant $t>0$, let 
$\hat{x}(u,\alpha), \hat{y}(u,\alpha), \hat{z}(u,\alpha) \colon \reals^t\times\reals \rightarrow \reals$ be 
$(t+1)$-variate algebraic functions of bounded degree each. For a point $\delta\in\reals^t$,
let $x_\delta(\alpha) = \hat{x}(\delta,\alpha)$, $y_\delta(\alpha) = \hat{y}(\delta,\alpha)$, and 
$z_\delta(\alpha) = \hat{z}(\delta,\alpha)$ be the corresponding univariate algebraic functions
obtained by fixing $\delta$.
Then $\curve_\delta(\alpha) \coloneqq (x_\delta (\alpha), y_\delta(\alpha), z_\delta(\alpha))$, $\alpha\in\reals$,
is a parametrized algebraic curve defined by the parameter $\delta\in\reals^t$.\footnote{%
We note that we allow a more general parametrization than the commonly used rational parametrization (where each of 
$\hat{x}, \hat{y}, \hat{z}$ is a ratio of two polynomials) that parametrizes only zero-genus 
algebraic curves~\cite{Wa62}. There is also some work on parametrizing algebraic curves (of genus at most $6$) using 
radicals~\cite{SS11}. Fast algorithms are known for computing rational and radical parametrization of 
algebraic curves if they exist~\cite{AB88,AB89,SS11}. We are unaware of (efficient) algorithms for computing a 
more general (global) uni-parametrizations of algebraic curves, though a local parametrization can be computed using 
Puiseux series~\cite{Mau80} (which is not very useful in our setting) or an approximate parametrization can 
be computed~\cite{PSR10}.
\newline
		In general, the parametrization of an algebraic curve may define it at all but finitely many points.  
		For example,  the rational parametrization $x(\alpha)=\tfrac{1-\alpha^2}{1+\alpha^2}, 
		y(\alpha)=\tfrac{2\alpha}{1+\alpha^2}$ of the unit circle $x^2+y^2=1$ defines the circle at all 
		points except $(-1,0)$. In such cases, we handle these finite sets of points separately.
	}
Let $\ES_t \coloneqq \reals^t$ denote the space of these curves.
For a pair of real values $\alpha^-, \alpha^+$, 
$\gamma (\delta, \alpha^-, \alpha^+)$ defines the algebraic arc 
$\gamma \coloneqq \{ \curve_\delta(\alpha) \mid \alpha^- \le \alpha \le \alpha^+ \}$ 
in $\reals^3$; $\gamma$ is an empty arc if $\alpha^+<\alpha^-$.
We assume that $\alpha^-, \alpha^+$ are chosen such that $\gamma$ is a portion of an irreducible component of $\curve_\delta$, that $\gamma$ is connected, and that $\gamma$ does not contain any singular point of $\curve_\delta$.
Set 
\begin{equation}
	\label{eq:arc-family}
\Gamma \coloneqq \{ \gamma (\delta, \alpha^-, \alpha^+) \mid \delta \in \ES_t\; \mbox{and}\; 
\alpha^-, \alpha^+ \in \reals \} .
\end{equation}
The \emph{parametric dimension} of $\Gamma$  is $t+2$. We identify the space of arcs in $\Gamma$ with $\reals^{t+2}$, which we refer to as the \emph{query space} and in which an arc $\gamma(\delta, \alpha^-,\alpha^+)$ is mapped to the point $(\delta, \alpha^-, \alpha^+)$. 
Thus a query arc $\gamma(\xi)$ is given by specifying a point 
$\xi\in\reals^{t+2}$. We remark that although the functional form of $\hat x, \hat y, \hat z$, and thus of the query arc, is known in advance, the point $\xi \in \reals^{t+2}$ of a 
query arc $\gamma(\xi)$ is provided in an on-line manner when a query is issued.

Next, we describe the space of input plates. Recall that a plate is a planar semi-algebraic set of constant complexity in~$\reals^3$. For simplicity, without loss of generality, we assume that the Boolean formula describing the plate consists only of 
conjunctions, as disjunctions can be handled  by decomposing each plate into $O(1)$ plates, each described by a 
conjunction of polynomial inequalities.\footnote{
Note that a plate may be reported $O(1)$ times because of the decomposition step. Since a counting query counts the number of intersection points between the query arc and the 
input plates (and not necessarily the number of input plates intersected by the query arc), the decomposition step does not affect the result of a counting query.
}
Let $k, r_1, \ldots, r_k > 0$ 
be constant integers. For each $1 \le i \le k$, 
let $f_i (u_1, \ldots, u_{r_i}, x,y, z)$ be an $(r_i+3)$-variate polynomial 
of constant degree. For a 
point $\xi = (a_0,b_0,c_0)\in\reals^3$, let $h_\xi$ be the (non-vertical) plane $z=a_0x+b_0y+c_0$. 
For given points $\delta_1 \in \reals^{r_1}, \ldots, \delta_k \in \reals^{r_k}, \xi \in \reals^3$, we define the 
plate 
\[
  \Delta(\delta_1, \ldots, \delta_k, \xi) \coloneqq \Bigl\{ p\in\reals^3 \mid 
					\bigwedge_{i=1}^k (f_i (\delta_i, p)\ge 0) \wedge p\in h_\xi\Bigr \}  .
\]
We define the resulting family $\TT$ of plates as 
\begin{equation}
	\label{eq:plate-family}
	\TT \coloneqq \{ \Delta(\delta_1, \ldots, \delta_k, \xi) \mid \delta_1 \in \reals^{r_1}, \ldots, 
			\delta_k \in \reals^{r_k}, \xi \in \reals^3 \} .
\end{equation}
The parametric dimension of $\TT$ is $r \coloneqq 3 + \sum_{i=1}^k r_i$, and we identify the space of plates in $\TT$ with $\reals^r$, which we refer to as the \emph{object space} and in which a plate $\Delta(\delta_1, \ldots, \delta_k, \xi)$
is mapped to the point $(\delta_1, \ldots, \delta_k, \xi)$.
For example, we can define a family of planar \emph{lunes}, each formed by the intersection of two balls and a plane in $\reals^3$, as follows. Let $f(u_1,u_2,u_3, u_4, x,y,z) = -(x-u_1)^2-(y-u_2)^2-(z-u_3)^2+u_4^2$. For $\delta_1, \delta_2 \in \reals^4$ and $\xi\in\reals^3$,  we define a ``parametric lune'' as
\[ 
  \Delta(\delta_1, \delta_2, \xi) \coloneqq \{ p\in\reals^3 \mid 
					(f (\delta_1, p)\ge 0) \wedge (f (\delta_2,p) \ge 0) \wedge (p\in h_\xi) \}  .
\] 
The family of lunes is $\TT \coloneqq \{ \Delta(\delta_1, \delta_2, \xi) \mid \delta_1, \delta_2 \in \reals^4, \xi\in\reals^3\}$, and its parametric dimension is $11$.
For a pair $\Delta \in \TT$ and $\gamma \in \Gamma$, let $\Pi (\gamma, \Delta)$ be the semialgebraic predicate that is $1$ 
if and only if $\gamma \cap \Delta \ne \emptyset$. The \emph{reduced  parametric dimensions} of $\TT$ and $\Gamma$ (with respect to $\Pi$), denoted by $\tO$ and $\tQ$, respectively, are the maximum number of the 
$t+2$ (resp.\ $r$) parameters of $\Gamma$ (resp.\ $\TT$) 
being used in a polynomial inequality in the description of $\Pi$. 
We note that $\tO, \tQ$ depend on the predicate $\Pi$. 
A given parametrized family of objects, in general, may have multiple plausible parameterizations, some of which are more 
appropriate for ``querying against a particular family of objects.'' Different parameterizations may yield 
different reduced parametric dimensions, and 
can be much smaller than the parametric dimensions. We choose an appropriate representation  that aims to
minimize $\tO, \tQ$.
For instance, the parametric dimension of segments in $\reals^2$ is $4$ but $\tO,\tQ=2$ for 
segment-segment intersections (see~\cite{Ag:rs}); the parametric dimensions of triangles and segments in $\reals^3$ are $9$ and $6$ respectively, but $\tO=\tQ=4$ for segment-triangle intersections in $\reals^3$ (see~\cite{AM94}).

Let $\T \subset \TT$ be a set of $n$ plates in $\reals^3$ in general position,
in the sense that any plane contains only $O(1)$ plates  of $\T$,
and that any line is contained in the supporting planes of $O(1)$ plates of $\T$. 
We refer to boundary arcs of a plate as its \emph{edges}.
The first general-position assumption  on $\T$---any plane contains only $O(1)$ plates of $\T$---is critical for 
our data structures (the second assumption is made only for the sake of simplicity) because otherwise, for example, one has
to handle intersection queries between an arc of $\Gamma$ and boundary arcs of many plates, 
all lying in the same plane, say. 
The recent lower bounds on semi-algebraic range searching~\cite{AC21,AC22} 
imply that the time to answer such a query depends on the parametric dimension of the boundary arcs and an 
$O^*(n^{4/3})$-size data structure with $O^*(n^{2/3})$ query time is not feasible if the 
parametric dimension is large.

We present algorithms for preprocessing $\T$ into a data structure that can answer 
arc-intersection queries with arcs $\gamma \in \Gamma$ efficiently.
We begin by describing a basic data structure, and then show how its performance can be improved. 

\subsection{The overall data structure}
\label{subsec:data-structure}

Our primary data structure consists of a partition tree $\Psi$ on $\T$ in $\reals^3$, which is constructed using the
polynomial-partitioning technique of Guth~\cite{Guth}. More precisely, let $\X\subseteq\T$ be a subset of $m$ plates
and let $D>1$ be a parameter.
The analysis of Guth implies that there exists a real polynomial $F \in \reals[x_1,x_2,x_3]$ of degree at most $c_1D$,
where $c_1>0$ is a constant,
such that each open connected component (called a \emph{cell}) of $\reals^3\setminus Z(F)$
is crossed by at most $m/D$ plates of $\X$ and by the 
boundary arcs of at most $m/D^2$ of these plates;
the number of cells is at most $c_2D^3$ for another constant $c_2>0$. 
We refer to $F$ as a \emph{partitioning polynomial} for~$\T$.
Agarwal et al.~\cite{AAEZ} showed that such a partitioning polynomial can be constructed 
in $O(m)$ expected time if $D$ is a constant, turning Guth's existential result into an
efficient algorithm. Using such a polynomial partitioning, 
$\Psi$ can be constructed recursively in a top-down manner as follows.

Each node $v\in \Psi$ is associated with a cell $\tau_v$ of some polynomial partitioning 
and a subset~$\T_v\subseteq\T$. If $v$ is the root of~$\Psi$, then $\tau_v=\reals^3$ and $\T_v=\T$.  
Set $n_v \coloneqq |\T_v|$. We set a threshold parameter $n_0>0$, which may depend on $n$, and we fix 
a sufficiently large constant~$D$ for the partitioning.
For the basic data structure described here, we set $n_0\coloneqq \max\{ n^{1/3}, (4c_2D)^{1/\eps}\}$, where $\eps>0$ is an arbitrarily small constant to be chosen later; 
the value of $n_0$ will change when we later modify the structure.
Suppose we are at some node $v$. If $n_v\le n_0$ then $v$ is a leaf and we store $\T_v$ at $v$.
Otherwise, we construct, in time $O(|\T_v|)$, a partitioning polynomial $F_v$ for $\T_v$ of 
degree at most $c_1D$, as described above, and store~$F_v$ at $v$. If a plate $\plate$ is contained in $Z(F_v)$ then the plane supporting $\plate$ is also contained in $Z(F_v)$. The number of planes contained in $Z(F_v)$ is bounded by the degree of $F_v$.
By our general-position assumption, each such plane contains at most $O(1)$ plates, so at most $O(D)=O(1)$ plates
lie on $Z(F)$.
Let $\T_v^0 \subset \T_v$ be the subset of these plates. We store $\T_v^0$ at $v$.
We construct a secondary data structure $\ZDS_v$ on $\T_v\setminus \T_v^0$ for answering 
arc-intersection queries with the arcs of $\gamma\in \Gamma$ that are contained in $Z(F_v)$. 
Using Lemma~\ref{lem:onzf}, presented later in Section~\ref{sec:zero-set}, $\ZDS_v$ requires 
$O^*(n_v)$ storage and answers a query in $O^*(n_v^{2/3})$ time. It can be constructed in $O^*(n_v)$ time.

Next, we compute (semi-algebraic representations of) all cells of $\reals^3\setminus Z(F_v)$~\cite{BPR}. 
For each such cell $\tau$, we create a child $w_\tau$ of $v$ associated with $\tau$. 
We classify each plate $\Delta\in\T_v$ that crosses $\tau$ as \emph{narrow} (resp., \emph{wide})
at $\tau$ if an edge of $\Delta$ crosses $\tau$ (resp., $\Delta$ crosses $\tau$, but none of its edges does).
Let $\W_\tau$ (resp., $\T_\tau$) denote the subset of the plates in $\T_v\setminus \T_v^0$ that are 
wide (resp., narrow) at $\tau$.
We construct a secondary data structure $\WDS_\tau$ on $\W_\tau$,
as described in Section~\ref{sec:wide} below, for answering 
arc-intersection queries with arcs of $\Gamma$ amid the plates of $\W_\tau$ (within $\tau$).
$\WDS_\tau$ is stored at the child~$w_\tau$ of~$v$.
The construction of $\WDS_\tau$ for handling the wide plates is the main technical step in our algorithm.
By Lemma~\ref{prop:wide} in Section~\ref{sec:wide}, $\WDS_\tau$ uses $O^*(|\W_\tau|)$ space,
can be constructed in $O^*(|\W_\tau|)$ expected time, and answers an arc-intersection query 
in $O^*(|\W_\tau|^{2/3})$ time.
Finally, we set $\T_{w_\tau} \coloneqq \T_\tau$, and
recursively construct a partition tree for $\T_{w_\tau}$ and attach it as the subtree rooted at
$w_\tau$. 
Note that two secondary structures are attached at each node $v$, namely, $\WDS_v$ and $\ZDS_v$, 
for handling wide plates and for handling query arcs that are contained in $Z(F_v)$, respectively.

Denote by $S(m)$ the maximum storage used by the data structure for a subproblem involving 
at most $m$ plates. 
For $m\le n_0$, $S(m)=O(m)$. For $m > n_0$, the aforementioned
Lemmas~\ref{prop:wide} and~\ref{lem:onzf} imply that the secondary structures for a subproblem of size $m$ require $O^*(m)$ space. Therefore $S(m)$ 
obeys the recurrence:
\begin{equation}
  \label{eq:storage}
  S(m) \le
  \begin{cases*}
	  c_2D^3 \cdot S(m/D^2) + c_3 m^{1+\delta} & for $m\ge n_0$, \\[1mm]
    	  c_4 m & for $m\le n_0$,
  \end{cases*}
\end{equation}
where $c_2$ is the constant as defined above, $\delta>0$ is an arbitrarily small constant, $c_3>0$ is a constant that depends on $\delta$ and $D$, and $c_4>0$ is a constant. 
	We claim that the solution to the above recurrence, for our particular choice of $n_0$,  is
\begin{equation}
	\label{eq:size-bound}
	S(m) \le  A ( m^{3/2+\eps}{n_0^{-1/2}}+m ) = A ( m^{3/2+\eps}{n^{-1/6}}+m ),
\end{equation}
where $\eps\ge \delta$ is an arbitrarily small constant, $n$ is the original input size, and $A$ is a sufficiently large 
constant, assuming that $D \coloneqq D(\eps)$ is chosen suitably (see below). Indeed, the  bound trivially holds for 
$m \le n_0$. Using induction hypothesis for $m>n_0$ and plugging~\eqref{eq:size-bound} into~\eqref{eq:storage}, we obtain
\begin{align*}
	S(m) & \le  c_2D^3 \cdot A  \left ( \biggl ( \frac{m}{D^2} \biggr )^{3/2+\eps}\cdot \frac{1}{n^{1/6}}+ 
		\frac{m}{D^2}\right ) + c_3 m^{1+\delta} \\
		& \le  A \biggl ( \frac{m^{3/2+\eps}}{n^{1/6}} \biggr) \left [ \frac{c_2}{D^{2\eps}} + 
			\frac{c_2Dn^{1/6}}{m^{1/2+\eps}} + \frac{c_3n^{1/6}}{Am^{1/2+\eps-\delta}}\right ]\\
			&\le   A \biggl ( \frac{m^{3/2+\eps}}{n^{1/6}} \biggr) 
			     \left [ \frac{c_2}{D^{2\eps}}+\frac{c_2D}{m^\eps}+\frac{c_3}{Am^{\eps-\delta}}\right]
			     		\quad \mbox{(because $m > n_0 =n^{1/3}$)}\\
			&\le   A  m^{3/2+\eps}n^{-1/6},
\end{align*}
provided that we choose $D \ge (2 c_2)^{1/2\eps}$,  $A \ge 4 c_3$, and $\delta \le \eps$; recall that $n_0 \ge (4c_2D)^{1/\eps}$.
This establishes the induction step and thus proves \eqref{eq:size-bound}.
Initially, $m=n$, so the overall size of the data structure is $S(n) = O^*(n^{4/3})$.
A similar analysis shows that the expected preprocessing time is also $O^*(n^{4/3})$.

\subsection{The query procedure}
\label{subsec:plate-query}

Let $\gamma\in\Gamma$ be a query arc. We answer an arc-intersection query, say, intersection detection, for 
$\gamma$ by searching through~$\Psi$ in a top-down manner. 
Suppose we are at a node~$v$ of~$\Psi$. Our goal is to determine whether 
$\gamma_v \coloneqq \gamma\cap\tau_v$ intersects any plate of $\T_v$. 
For simplicity, assume that $\gamma_v$ is connected, otherwise we query with each of the $O(1)$ connected components of $\gamma_v$. 

If $v$ is a leaf, we answer the intersection query na\"ively, in $O(n_0)$ time, by inspecting all plates in $\T_v$. 
So assume that $v$ is an interior node. We first check, in $O(1)$ time, whether any of the plates in $\T_v^0$, the set of plates lying in $Z(F_v)$, 
intersects $\gamma$. If the answer is yes, we have detected an intersection and stop.
If $\gamma_v \subset Z(F_v)$, we query the secondary data structure $\ZDS_v$ 
with $\gamma_v$ (see Section~\ref{sec:zero-set}) and return the answer. 
(In this case there is no need to further recurse down the tree from $v$.)
Otherwise we compute all cells of $\reals^3\setminus Z(F_v)$ that $\gamma_v$ 
intersects.
Since $\gamma$ is a connected portion of an irreducible component of a 
parametrized algebraic curve of constant degree, $F_v$ is a polynomial of degree at most $c_1D$, 
and $\gamma \not\subset Z(F_v)$, we conclude that $\gamma \cap Z(F_v)$ consists of  $O(D)$ points~\cite{BPR}, and therefore
$\gamma$ crosses at most $c_5 D$ cells, for some constant $c_5>0$.
Let $\tau$ be such a cell. We first use the secondary data structure $\WDS_\tau$
to detect whether $\gamma_v$ intersects any plate of~$\W_\tau$, the set of wide plates at $\tau$.
Again, if we detect an intersection, we stop. Otherwise we recursively query at the child $w_\tau$ to detect an intersection between $\gamma$ and $\T_\tau$, the set of
narrow plates at $\tau$.

For intersection-detection queries, the query procedure stops as soon as an intersection between $\gamma$ and 
$\T$ is found.
For reporting/counting  queries (or more generally, semi-group queries), we follow the above recursive scheme, and 
at each node $v$ visited by the query
procedure, we either report all the plates of $\T_v$ intersected by the (appropriate portion of the) query arc, or add up 
the intersection counts returned by the various secondary structures and recursive calls.
By our general-position assumption, if the query arc $\gamma$ is planar then there might be $O(1)$ plates 
whose supporting plane may contain $\gamma$, and none otherwise. 
These plates are either detected at the leaves of $\T$ or at the secondary structures. We keep track of these plates,
compute their intersections with $\gamma$, and report/count these intersections.
Since we clip $\gamma$ at each node $v$ within $\tau_v$, we note that each intersection point of $\gamma$ with an 
input plate $\Delta$ is reported/counted exactly once.

Denote by $Q(m)$ the maximum query time for a subproblem involving at most $m$ plates. 
Then $Q(m) = O(m)$ for $m \le n_0$. For $m>n_0$, Lemmas~\ref{prop:wide} and~\ref{lem:onzf} imply 
that the query time of the auxiliary data structures for subproblems of size $m$ is $O^*(m^{2/3})$. 
Therefore $Q(m)$ obeys the recurrence:
\begin{equation}
  \label{eq:query}
  Q(m) \le
  \begin{cases*}
	  c_5 D\cdot Q(m/D^2) + c_6 m^{2/3+\delta} & for $m> n_0$, \\[1mm]
    	  c_7 m & for $m\le n_0$,
  \end{cases*}
\end{equation}
where $c_5$ is the constant as defined above, $\delta>0$ as above is an arbitrarily 
small constant, $c_6$ is a constant that depends on $\delta$, and $c_7>0$ is an absolute constant. 
We claim that the solution to the recurrence is 
\begin{equation}
	\label{eq:qtime-bound}
	Q(m) \le B m^{1/2+\eps}n^{1/6} 
\end{equation}
for any constant $\eps\ge\delta$, where $B>\max\{2c_6, c_7\}$ is a sufficiently large constant and $n\ge m$ is the size of the original problem. 
Since $n_0=n^{1/3}$, for $m \le n_0=n^{1/3}$, we have
\[
B m^{1/2+\eps}n^{1/6} \ge B m^{1/2+\eps}m^{1/2} \ge c_7 m,
\] 
implying the claim for $n\le n_0$. For $n > n_0$,
using induction hypothesis and plugging~\eqref{eq:qtime-bound} into~\eqref{eq:query}, we obtain
\begin{align*}
	Q (m) &\le c_5 D \cdot B \left (\frac{m}{D^2}\right)^{1/2+\eps} n^{1/6}+c_6m^{2/3+\delta} \\
	&\le Bm^{1/2+\eps}n^{1/6} \left [ \frac{c_5}{D^{2\eps}} + \frac{c_6}{B} \frac{m^{1/6+\delta-\eps}}{n^{1/6}} \right ]\\
	&\le Bm^{1/2+\eps}n^{1/6} \left [ \frac{c_5}{D^{2\eps}} + \frac{c_6}{B} m^{\delta-\eps} \right ] 
			\quad \mbox{(because $m\le n$)}\\
		&\le  Bm^{1/2+\eps}n^{1/6},
\end{align*}
provided we choose $D \ge (2c_5)^{1/2\eps}$,
$B\ge 2c_6$, and $\delta\le\eps$. This establishes the induction step and thus yields $Q(n) = O^*(n^{2/3})$.
Putting everything together we obtain:

\begin{theorem}
  \label{thm:trimain}
  Let $\T$ be a set of $n$ plates of constant complexity in $\reals^3$ in general position, and let $\Gamma$ be a family
  of parametrized algebraic arcs of constant degree. $\T$ can be preprocessed, in expected time $O^*(n^{4/3})$, 
  into a data structure of size $O^*(n^{4/3})$, so that an arc-intersection query with an arc of $\Gamma$
  amid the plates of $\T$ can be answered in $O^*(n^{2/3})$ time, or in $O^*(n^{2/3}) + O(k)$ time for reporting queries, where $k$ is the output size. The constants of proportionality hiding
  in these bounds depend on the degree of the arcs of $\Gamma$ and on the complexity of the plates of $\T$.
\end{theorem}

\begin{remark*}
If we do not assume $\T$ to be in general position, the set $\T_v^0$ could be arbitrarily large.
The above machinery would then require a data structure to answer a point-enclosure or arc-intersection 
query on $\T_v^0$ in $O^*(n^{2/3})$ time using linear space. 
The recent lower bounds of~\cite{AC21,AC22} suggest that such a data structure is infeasible for general algebraic 
arcs. 
\end{remark*}

\subsection{Improving the storage slightly}
\label{subsec:slight-improv-size}

As mentioned in the introduction, if the reduced parametric dimension of $\T$ 
is $\tO$ then 
using a multi-level data structure based on the partition-tree technique by Matou\v{s}ek and Pat\'akov\'a~\cite{MP}, $\T$ 
can be preprocessed, in $O^*(n)$ time, into a data structure of size $O^*(n)$, so that an arc-intersection query 
can be answered in $O^*(n^{1-1/\tO})$ time; see Appendix~\ref{app:multi-level} for the details. 
Using this data structure, we can modify our main structure $\Psi$, as follows: 
Assume that $\tO \ge 3$, and set $\alpha \coloneqq \tfrac{1}{3(\tO -2)}$ and $n_0 \coloneqq n^{1/3+2\alpha}$,
i.e., a node $v$ is a leaf if $n_v\le n_0$.
We construct a  near-linear-size partition tree on $\T_v$, as described in Appendix~\ref{app:multi-level} (see Theorem~\ref{lem:Pi-lin}),
 at each leaf $v$ of $\Psi$ that answers an arc intersection query in $O^*(n_v^{1-1/\tO})$ time.
The recurrence for storage remains the same except that we now have the previously stated new value of $n_0$. 
The solution to the recurrence~\eqref{eq:storage}, with the new value of $n_0$, is easily seen to be
\[ 
S(m) \le B \left ( \frac{m^{3/2+\eps}}{n^{1/6+\alpha}} + m \right ) ,
\]
for any $\eps>0$, where $B$ depends on $\eps$.
Hence, the overall size of the data structure becomes $O^*(n^{4/3-\alpha})$.

The recurrence for the query time is now
\begin{equation}
  \label{eq:imp-query}
  Q(m) \le
  \begin{cases*}
	  c_5 D \cdot Q(m/D^2) + c_6 m^{2/3+\delta} & for $m\ge n_0$, \\[1mm]
	  c_7 m^{1-1/\tO +\delta} & for $m\le n_0$ ,
  \end{cases*}
\end{equation}
for an arbitrarily small $\delta>0$ where the constant $c_7$ depends on $\delta$. The solution to this recurrence is still
$$Q(m) \le B m^{1/2+\eps}n^{1/6}$$
for any constant $\eps\ge\delta$
and $B \ge \max\{c_7,2c_6\}$. Indeed for $m \le n_0$, with $\delta\le \eps$,
\[
Q(m) \le c_7 m^{1-1/\tO +\delta} \le B m^{1/2+\eps} n_0^{1/2-1/\tO }  
	= B m^{1/2+\eps}n^{\bigl(\tfrac{1}{3}+2\alpha \bigr) \bigl(\tfrac{1}{2}-\tfrac{1}{\tO } \bigr)} .
\]
By our choice of $\alpha$, the exponent of $n$ in the above inequality becomes
\[
\left (\frac{1}{3}+\frac{2}{3(\tO -2)}\right ) \left( \frac{1}{2}-\frac{1}{\tO } \right ) = \frac{1}{6},
\]
implying that the claim holds for $m \le n_0$. For $m>n_0$, we follow the same analysis as above
(as is easily verified, this part of the analysis is independent of the choice of $n_0$).
Hence, the overall query time remains  $Q(n) = O^*(n^{2/3})$.

We show in Section~\ref{sec:linst} that the reduced parametric dimension of triangles
is (at most) $5$ when $\Gamma$ is a set of algebraic arcs of constant complexity
(and it reduces to $4$ if $\Gamma$ is a set of lines~\cite{AM94,ShSh,MoSh}), even though
one needs $9$ parameters to specify a triangle in $\reals^3$. 
This immediately leads to an $O^*(n)$-size data structure with $O^*(n^{4/5})$ query time. 
Plugging this bound into~\eqref{eq:imp-query} and observing that $\alpha=1/9$ in this case, we obtain a data structure 
of size $O^*(n^{11/9})$ with $O^*(n^{2/3})$ query time for arc-intersection queries amid triangles.

\begin{theorem}
  \label{thm:trimainx}
Let $\Gamma$ be a family of algebraic arcs of constant parametric dimension, and let $\T$ be a set 
of $n$ plates in~$\reals^3$ of reduced parametric dimension $\tO \ge 3$ (with respect to $\Gamma$). 
$\T$ can be preprocessed, in expected time $O^*(n^{4/3-\alpha})$, into a data structure of size 
$O^*(n^{4/3-\alpha})$, where $\alpha = \frac{1}{3(\tO -2)}$, so that an
arc-intersection query amid the triangles of $\T$ can be answered in $O^*(n^{2/3})$ time.
If $\T$ is a set of $n$ triangles in $\reals^3$, then $\tO \le 5$ and thus the size and the 
expected preprocessing time are $O^*(n^{11/9})$.
\end{theorem}
\section{Handling Wide Plates}
\label{sec:wide}

Let $\T$ be a set of $n$ plates in $\reals^3$, $\Gamma$ a parametrized family of algebraic arcs, and $F$ a 
partitioning polynomial, as described in Section~\ref{sec:main}. This section describes the algorithm for 
preprocessing the set of wide plates, $\W_\tau$, for each cell $\tau$ of $\reals^3\setminus Z(F)$, for intersection 
queries with arcs of $\Gamma$.
Fix a cell $\tau$. 
Our goal is to write the intersection predicate between $\W_\tau$ and $\Gamma$ in a way so that it does not 
depend on the boundary arcs of the wide plates. It is tempting to replace a wide plate 
$\plate\in\W_\tau$ with the plane $h_\plate$ supporting $\plate$ and argue that $\gamma \cap \tau$, 
for a query arc $\gamma \in \Gamma$, intersects $\plate$ if and only if it intersects $h_\plate$. 
But this does not necessarily hold because $\tau$ is non-convex, so we proceed in a more careful manner, as follows.
Since $\plate$ is wide at $\tau$, each connected component of $\plate\cap\tau$ 
is also a connected component of $h_\plate\cap\tau$ (though  some connected components of $h_\plate\cap\tau$ 
may be disjoint from $\plate$). 
By a careful construction of a
\emph{cylindrical algebraic decomposition} (CAD) $\CAD$ in a five-dimensional parametric space 
(presented in detail in Section~\ref{subsec:usecad}),
we decompose $\plate\cap\tau$ into $O(1)$ pseudo-trapezoids, each of which has constant complexity
and is contained in a single connected component of $\plate\cap\tau$. 
We collect these pseudo-trapezoids of all wide plates at $\tau$ and cluster them, using $\CAD$,
into $O(1)$ families, so that each cluster $\fragset_C$ is defined by a cell $C$ of $\CAD$ and 
so that an arc-intersection query amid $\fragset_C$ can be reduced to
a three-dimensional semi-algebraic range query. See Sections~\ref{subsec:plate-partition} and~\ref{subsec:rs}. 
\subsection{An overview of cylindrical algebraic decomposition} 
\label{subsec:CAD}

We begin by giving a brief overview of \emph{cylindrical algebraic decomposition} (CAD), 
also known as Collins' decomposition, after its originator Collins~\cite{Col}. 
A detailed 
description can be found in \cite[Chapter 5]{BPR}; a possibly more accessible 
treatment is given in~\cite[Appendix A]{SS2}.

Let $\FS = \{f_1, \ldots, f_s\}$ be a finite set of $d$-variate polynomials. 
The \emph{arrangement} of $\FS$, denoted by $\A(\FS)$, is the decomposition of $\reals^d$ into 
maximal connected relatively open pairwise disjoint cells of all dimensions, so that all points within a cell have the 
same number of real roots of each polynomial $f_i \in \FS$. For another polynomial $g$, let 
$\A(\FS; g)$ be the arrangement $\A(\FS\cup\{g\})$ restricted to $Z(g)$, i.e., the collection of the cells of 
$\A(\FS\cup\{g\})$ contained in $Z(g)$, the zero set of $g$. If $\F=\{F\}$, we simply 
use the notation $\A(F)$ and $\A(F;g)$.

A cylindrical algebraic decomposition induced by $\FS$, denoted by $\CAD(\FS)$,
is a (recursive) decomposition of $\reals^d$ into a finite collection of relatively 
open simply-shaped semi-algebraic cells of dimensions $0,\ldots, d$, each homeomorphic 
to an open ball of the respective dimension. $\CAD(\FS)$ is a  refinement of the arrangement $\arr(\F)$.

Set $F=\prod_{i=1}^s f_i$. For $d=1$, let $\alpha_1 < \alpha_2 < \cdots < \alpha_t$ 
be the distinct real roots of~$F$. Then $\CAD(\FS)$ is the collection of cells 
$\{(-\infty,\alpha_1), \{\alpha_1\}, (\alpha_1,\alpha_2), \ldots, \{\alpha_t\}, (\alpha_t, +\infty)\}$. 
For $d>1$, regard $\reals^d$ as the Cartesian product $\reals^{d-1}\times\reals$. 
For simplicity of the description, here we
assume that $x_d$ is a \emph{good} direction, meaning that for any fixed $a\in\reals^{d-1}$, 
$F(a,x_d)$, viewed as a polynomial in $x_d$, has finitely many roots. 
The good-direction assumption is not needed if the recursive construction of the CAD is defined more carefully, as 
in~\cite[Chapter~5]{BPR}, but it can be made without loss of generality.

$\CAD(\FS)$ is defined recursively from a ``base'' $(d-1)$-dimensional CAD $\CAD_{d-1}$, 
as follows. One constructs a suitable set $\EP \coloneqq \EP(\FS)$ of the polynomials in $x_1,\ldots,x_{d-1}$ 
(denoted by~$\textsc{Elim}_{X_k}(\FS)$ in \cite{BPR} and by~$Q_b$ in \cite{SS2}). Roughly 
speaking, the zero sets of polynomials in $\EP$, viewed as subsets of $\reals^{d-1}$,
contain the projection onto $\reals^{d-1}$ of all intersections $Z(f_i) \cap Z(f_j)$, 
$1 \le i < j \le s$, as well as the projection of the loci in each $Z(f_i)$ where $Z(f_i)$ 
has a tangent hyperplane parallel to the $x_d$-axis, or a singularity of some kind. 
The actual construction of $\EP$, based on \emph{subresultants} of $\FS$, is somewhat 
complicated, and we refer to \cite{BPR,SS2} for more details.

One recursively constructs $\CAD_{d-1}=\CAD(\EP)$ in $\reals^{d-1}$, which is a refinement 
of $\A(\EP)$ into topologically trivial open cells of dimensions $0,1,\ldots,d-1$. 
For each cell $\cell\in \CAD_{d-1}$, the sign of each polynomial in $\EP$ is constant 
(zero, positive, or negative) and the (finite) number of distinct real $x_d$-roots of $F(\xx,x_d)$ 
is the same for all $\xx\in\cell$.  Moreover, each of these roots varies continuously with~$\xx\in\cell$.  $\CAD(\FS)$ is then defined in terms of $\CAD_{d-1}$, as follows. 
Fix a cell $\cell \in \CAD_{d-1}$. Let $\cell \times \reals$ denote the \emph{cylinder} 
over $\cell$. There is an integer $t\ge 0$ (depending on $\cell$) such that for all $\xx\in\cell$, there are exactly 
$t$ distinct real roots $\psi_1 (\xx) < \cdots < \psi_t(\xx)$ of $F(\xx,x_d)$ (regarded 
as a polynomial in $x_d$), and these roots are algebraic functions that vary continuously 
with~$\xx\in\tau$. Let $\psi_0, \psi_{t+1}$ denote the constant functions $-\infty$ and $+\infty$, 
respectively. Then we create the following cells that decompose the cylinder over $\cell$:

\begin{itemize}
\item 
$\sigma = \{(\xx,\psi_i(\xx)) \mid \xx\in\cell\}$, for $i=1, \ldots, t$;
$\sigma$ is a section of the graph of $\psi_i$ over $\cell$, and
\item 
$\sigma = \{(\xx, y) \mid \xx\in\cell,\; y\in (\psi_i(\xx), \psi_{i+1}(\xx))\}$,
for $0 \le i \le t$; $\sigma$ is a portion (``layer'') of the cylinder $\cell\times\reals$ 
between the two consecutive graphs $\psi_i$, $\psi_{i+1}$.
\end{itemize}

\medskip
The main property of $\CAD$ is that, for each cell $\cell\in \CAD$, the sign of each polynomial 
in $\FS$ is constant for all $\xx\in\cell$. 
Omitting all further details (for which see \cite{BPR,Col,SS2}), we have the following lemma:

\begin{lemma} \label{lem:CAD}
Let $\FS=\{f_1,\ldots,f_s\}$ be a set of $s$ $d$-variate polynomials of degree at 
	most~$D$ each. Then, assuming that all coordinates are good directions (with respect to the corresponding sets of polynomials),
$\CAD(\FS)$ consists of $O(Ds)^{2^d}$ cells, and each cell can be represented 
semi-algebraically by $O(D)^{2^{d}}$ polynomials of degree at most $O(D)^{2^{d-1}}$. 
$\CAD(\FS)$ can be constructed in time $(Ds)^{2^{O(d)}}$ in a suitable standard model
of algebraic computation.
\end{lemma}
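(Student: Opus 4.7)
The plan is to prove Lemma~\ref{lem:CAD} by induction on the dimension~$d$, mirroring the recursive construction of the CAD itself. For the base case $d=1$, the set $\FS$ consists of at most $s$ univariate polynomials of degree at most $D$, so their product has degree at most $Ds$ and at most $Ds$ distinct real roots. These roots together with the open intervals between consecutive roots (and the two unbounded ones) give at most $2Ds+1 = O(Ds)$ cells, each described by a constant-complexity semi-algebraic predicate involving the polynomials of $\FS$, whose degrees are at most $D$. The roots themselves can be isolated in polynomial time in the standard algebraic model, establishing the base case.

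For the inductive step, I would invoke the projection operation: from $\FS$ one constructs the set $\EP = \EP(\FS)$ of polynomials in $x_1,\ldots,x_{d-1}$, using the subresultant-based elimination described in \cite{BPR,SS2}. The key quantitative facts about this operator, which can be cited rather than re-derived, are that $|\EP| = O(s^2)$ and each polynomial in $\EP$ has degree $O(D^2)$. Applying the inductive hypothesis to $\EP$ in $\reals^{d-1}$ yields $\CAD(\EP)$ with $O(D^2 \cdot s^2)^{2^{d-1}} = O(Ds)^{2^d}$ cells, each describable by $O(D^2)^{2^{d-1}} = O(D)^{2^d}$ polynomials of degree at most $O(D^2)^{2^{d-2}} = O(D)^{2^{d-1}}$. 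Over each $(d-1)$-dimensional cell $\cell$, the number of $x_d$-roots of $F(\xx,x_d)$ is a fixed integer bounded by $Ds$, so the cylinder $\cell\times\reals$ is subdivided into at most $2Ds+1$ sections and sectors. Multiplying, the total number of $d$-dimensional cells is $O(Ds)^{2^d}\cdot O(Ds) = O(Ds)^{2^d}$, matching the claim.

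For the semi-algebraic representation of each cell in dimension $d$, I would combine the description of the projected cell $\cell \in \CAD(\EP)$ (from the inductive hypothesis) with a predicate in $x_d$ that isolates the relevant root (or the open interval between two consecutive roots) of $F(\xx,x_d)$; the Thom encoding or an analogous device encodes each root by the signs of a constant number of derivatives of $F$ in $x_d$, which introduces only a constant-factor increase in the number of polynomials and a constant-factor increase in the degree per level of recursion, thus preserving the claimed bounds on the representation complexity. The running time bound $(Ds)^{2^{O(d)}}$ follows by setting up the recurrence $T(d,D,s) \le T(d-1,O(D^2),O(s^2)) + (Ds)^{O(1)\cdot 2^d}$ for the combined cost of computing $\EP$, recursing into dimension $d-1$, and sweeping through each cylinder; unfolding this recurrence yields the stated bound.

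The main obstacle is the bookkeeping for the elimination operator: one must be confident in the combinatorial bound on $|\EP|$ and on the degree growth, and must verify that the assumption that all coordinates are good directions propagates correctly through the recursion (or, more precisely, that after a generic linear change of coordinates all projections remain good). Carrying out this bookkeeping is standard but delicate, so in practice I would cite the corresponding statements from \cite[Chapter~5]{BPR} and~\cite[Appendix~A]{SS2} rather than reprove them from scratch; the novelty in the present lemma is only in collecting the specific quantitative bounds required downstream, not in a new construction.
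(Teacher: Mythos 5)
Your proposal is correct in outline and follows exactly the standard recursive (Collins-style) argument that the paper itself relies on: the paper gives no proof of Lemma~\ref{lem:CAD}, explicitly omitting the details and citing \cite{BPR,Col,SS2}, after sketching the same recursive construction via the elimination set $\EP(\FS)$ that your induction formalizes. The only point to be careful about, which you rightly defer to the cited references, is that the per-cell degree bound must come from Thom encodings of roots of the individual $f_i$ and their subresultants (whose degrees depend only on $D$, not on $s$) rather than of the product $F=\prod_i f_i$, whose degree $Ds$ would otherwise contaminate the degree bound.
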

\subsection{Constructing a CAD of the partitioning polynomial} 
\label{subsec:usecad}

Let $\ES_3$ denote the space of all (non-vertical) planes in $\reals^3$. More precisely,
$\ES_3$ is the (dual) three-dimensional space where each plane $h \colon z=ax+by+c$ is mapped to the
point $(a,b,c)$. For a point $\xi=(a_0,b_0,c_0)\in \ES_3$, we use $h_\xi$ to denote the plane $z = a_0x+b_0y+c_0$. With a slight abuse of notation, we will also use $h_\xi$ to denote the linear function $a_0x+b_0y+c_0$ as well as the linear 
polynomial $z-a_0x-b_0y-c_0$. It will be clear from the context what $h_\xi$ is referring to.
We consider the five-dimensional parametric space $\ES \coloneqq \ES_3\times\reals^2$ with coordinates $(a,b,c,x,y)$.
Let $F \in \reals[x,y,z]$ be a partitioning polynomial constructed in Section~\ref{sec:main}.
We construct a CAD of $\ES$ induced by a single five-variate polynomial $\hat{F} \in \reals[a,b,c,x,y]$, defined by $\hat{F} (a,b,c,x,y) \coloneqq F(x,y,ax+by+c)$.
Roughly speaking, $\hat F$ encodes the polynomial $F$ restricted to any plane in $\reals^3$ in the sense that for any $\xi\in\ES_3$, $\hat F(\xi,x,y)$ is the function $F$ restricted to the plane $h_\xi$.

The construction of the CAD recursively eliminates the variables in the
order $y,x,c,b,a$. That is, unfolding the recursive definition given in
Section~\ref{subsec:CAD}, each cell of the CAD is given by a sequence
of equalities or inequalities (one from each row) of the form:
\begin{align}
  \label{eq:cad}
a &= a_0 & \text{or}&& a_0^- &< a <  a_0^+ \nonumber \\
b &= f_1(a) & \text{or} && f_1^-(a) &< b < f_1^+(a) \nonumber \\
c &= f_2(a,b) & \text{or} && f_2^-(a,b) &< c < f_2^+(a,b) \\
x &= f_3(a,b,c) & \text{or} && f_3^-(a,b,c) &< x < f_3^+(a,b,c) \nonumber \\
y &= f_4(a,b,c;x) & \text{or}&& f_4^-(a,b,c;x) &< y < f_4^+(a,b,c;x) \nonumber ,
\end{align}
where $a_0$, $a_0^-$, $a_0^+$ are real parameters, and $f_1,f_1^-,f_1^+,\ldots,f_4,f_4^-,f_4^+$
are constant-degree continuous algebraic functions (any of which can be $\pm\infty$), 
so that, whenever we have an inequality involving two reals or two functions,
we then have $a_0^- < a_0^+$, and/or $f_1^-(a) < f_1^+(a)$, $f_2^-(a,b) < f_2^+(a,b)$,
$f_3^-(a,b,c) < f_3^+(a,b,c)$, and $f_4^-(a,b,c;x) < f_4^+(a,b,c;x)$, over the cell defined by the preceding 
set of equalities and inequalities in~\eqref{eq:cad}.

We illustrate the structure of this CAD by considering a special case in which
only horizontal planes of the form $z=c$ are considered. Let $\ES_1$ be the one-dimensional space 
of horizontal planes. Set $\ES = \ES_1 \times \reals^2$. We construct a 3-dimensional CAD $\CAD$ of $\ES$ 
induced by the trivariate polynomial $\hat{F}\in \reals[c,x,y]$ with $\hat{F}(c,x,y) = F(x,y,c)$.
$\CAD$ induces a partition $\CAD_1$ of $\ES_1$ into intervals and delimiting points. 
For each point $c_0 \in \ES_1$, the cross-section of $\CAD$ over $c_0$, denoted by $\fiber (c_0)$ 
and called the \emph{fiber} of $\CAD$ over $c_0$, is 
the CAD of the $xy$-plane induced by $F(x,y,c_0)$. $\fiber(c_0)$ is a refinement of (the projection of) 
$\A(F;h_{c_0})$ into pseudo-trapezoids, where $h_{c_0}\colon z=c_0$. Each pseudo-trapezoid of~$\fiber(c_0)$
is given by a simpler version of the 
set of the last two equations or inequalities in~\eqref{eq:cad}.
As we vary $c_0$, the combinatorial structure of~$\fiber(c_0)$ remains the same as long as $c_0$ lies in the same interval $\gamma$ of the partition $\CAD_1$ of $\ES_1$.
In other words, the topology of~$\fiber(c_0)$ does not change as $c_0$ varies within $\gamma$. 
The combinatorial structure of the fiber changes at a delimiting endpoint of $\CAD_1$,
which implies a change in the topology of the fiber. 
Readers familiar with Morse's theory~\cite{Morse} should note the close relationship between the breakpoints of $\CAD_1$ and the critical points of a Morse function defined over $Z(F)$ that gives the $z$-value of each point of $Z(F)$.

\begin{figure}[htb]
  \centering
  \scalebox{0.8}{\input{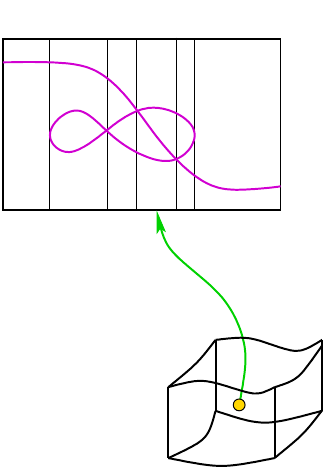_t}}
  \caption{An illustration of the CAD construction. $C_0$ is a three-dimensional cell of $\CAD_3$. 
    For a point $(a_0,b_0,c_0)\in C_0$, its two-dimensional fiber $\fiber(a_0,b_0,c_0)$ is shown.
    Formally, the purple curve is the $xy$-projection of $Z(F)\cap h(a_0,b_0,c_0)$.} 
  \label{fig:cadfiber}
\end{figure}

Returning to the construction of the CAD for the general case of all non-vertical planes, 
let $\CAD_5 = \CAD_5(F)$ denote the five-dimensional CAD just
defined. Let $\CAD_3$ denote the projection of $\CAD_5$ onto $\ES_3$, which
we refer to as the \emph{base} of $\CAD_5$ and  which itself is a CAD of a suitable set of polynomials
in $a,b,c$. Each base cell of $\CAD_3$ is given by a set of 
equalities and inequalities from the first three rows of~\eqref{eq:cad}, one per row. 
For a cell $C\in\CAD_5$, let $C^\downarrow\in\CAD_3$ denote the \emph{base cell} of $C$, the projection of $C$ onto $\ES_3$.

For a point $\xi=(a_\xi, b_\xi, c_\xi)\in \ES_3$, let $\fiber(\xi)$ 
denote the cross-section of $\CAD_5$ over $\xi$, which is a
decomposition of the $xy$-plane into pseudo-trapezoids induced by $\CAD_5$ over 
$\xi$. In fact, $\fiber(\xi)$  is a CAD of the $xy$-plane induced by the bivariate polynomial 
$F_\xi (x,y) \coloneqq F(x,y,h_\xi(x,y))$.  
We refer to $\fiber(\xi)$ as the two-dimensional \emph{fiber} of $\CAD_5$ over $\xi$.
Each pseudo-trapezoid of $\fiber(\xi)$ is specified by equalities and/or inequalities from the last 
two rows of~\eqref{eq:cad}, with $a=a_\xi$, $b=b_\xi$, $c=c_\xi$. 
For a cell $C\in\CAD_5$ and for a point $\xi\in C^\downarrow$, let $C(\xi)$ denote the cross-section of $C$ over $\xi$, i.e., $C(\xi)$ is the pseudo-trapezoid in $\fiber(\xi)$ corresponding to the cell $C$.

The \emph{lifting} of $\fiber(\xi)$ to the plane $h_\xi$, denoted by $\fiber^\uparrow (\xi)$, is defined as 
lifting of each pseudo-trapezoid $\frag\in\fiber(\xi)$ to 
$\frag^\uparrow = \{(x,y,h_\xi(x,y)) \mid (x,y)\in\frag\}$. $\fiber^\uparrow(\xi)$ is a CAD of $h_\xi$ induced by 
$F$, and thus a refinement of the planar arrangement $\A(F;h_\xi)$ into pseudo-trapezoids 
(i.e., each pseudo-trapezoid of $\fiber^\uparrow(\xi)$ lies in a cell of $\A(F;h_\xi)$).
See Figure~\ref{fig:cadfiber} for an illustration.

As in the example mentioned above, the combinatorial structure of $\fiber(\xi)$, as well as of its lifting 
$\fiber^\uparrow(\xi)$, is the same for all points $\xi$ in a base cell $\psi \in \CAD_3$.
It changes only when we move from one base cell to another cell of $\CAD_3$. 
Hence, each cell $C$ of $\CAD_5$ can be associated with a fixed cell of $\A(F)$,
denoted as $\tau_C$, such that for all points $\xi$ in the base cell $C^\downarrow\in \CAD_3$, $C^\uparrow (\xi)$, 
the lifting of $C(\xi)$ to $h_\xi$, is a pseudo-trapezoid of $\fiber^\uparrow (\xi)$ that lies in $\tau_C$.
For a cell $\tau \in \A(F)$, let $\CAD_\tau \coloneqq \{ C\in\CAD \mid \tau_C=\tau\}$ be the subset of CAD cells associated with $\tau$.

\begin{figure}[htb]
  \centering
  \scalebox{0.7}{\input{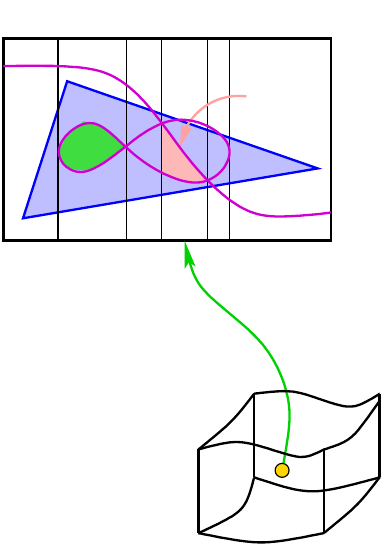_t}}
  \caption{The encoding scheme provided by the CAD (the plate depicted in this figure is a triangle).
    The cell $C$ labels, by an explicit semi-algebraic expression,
    the highlighted inner pseudo-trapezoidal subcell $\varphi_C$ within the plate $\plate$.
    Another inner subcell, with a different label, in a different partition cell $\tau$, is also highlighted.
  }
  \label{fig:cadlabel}
\end{figure}

We conclude this discussion with the following crucial observation, which is the main rationale for the CAD construction:
The semi-algebraic representation of the cell $C\in\CAD$ provides 
a \emph{fixed} constant-size \emph{semi-algebraic encoding} of  the pseudo-trapezoids $C^\uparrow (\xi)$, 
for all $\xi\in C^\downarrow$.
Namely, each such pseudo-trapezoid $C^\uparrow(\xi)$
is represented by equalities and inequalities of the form
\begin{equation}
	\label{eq:encoding}
	\begin{array}{rclcc}
		x &=& f_3(\xi)&\quad \mathrm{or}\quad  &f_3^-(\xi) < x < f_3^+(\xi);\\
		y &=& f_4(\xi,x)&\quad \mathrm{or}\quad & f_4^-(\xi,x) < y < f_4^+(\xi,x);\\
		z &=&f_5(\xi,x,y).
	\end{array}
\end{equation}
Here $f_3, f_3^-, f_3^+, f_4, f_4^-, f_4^+$ are fixed constant-degree continuous algebraic 
functions over the corresponding domains, as in~\eqref{eq:cad}, and $f_5(\xi,x,y) = h_\xi(x,y) = ax+by+c$,
where $\xi = (a,b,c)$. We note that these functions are the same for all pseudo-trapezoids $C(\xi)$, 
$\xi\in C^\downarrow$, and thus the encoding 
is \emph{independent}\footnote{%
  More precisely, its dependence on $\xi$ is only in terms of its coordinates being 
  substituted in the fixed semi-algebraic predicate given above.}
of $\xi$; see Figure~\ref{fig:cadlabel}.

\subsection{Decomposing wide plates into pseudo-trapezoids}
\label{subsec:plate-partition}

We describe the decomposition of $\plate\cap\tau$ into pseudo-trapezoids, induced by the CAD,
and the clustering of the resulting pseudo-trapezoids.
For a plate $\plate\in\T$, let $\plate^*$ denote the point in the $abc$-subspace 
$\ES_3$ dual to the plane $h_\plate$ supporting $\plate$.

Let $\cell$ be a cell of~$\reals^3\setminus Z(F)$,  and
let $\plate \in \W_\tau$ be a plate that is wide at $\cell$, and let $\psi\in\CAD_3$ be the base cell containing 
$\plate^*$. Recall that $\fiber^\uparrow(\plate^*)$ is the lifting of the fiber $\fiber(\plate^*)$ to $h_{\plate}$.
Let $\frag\coloneqq C^\uparrow(\plate^*)$ be a pseudo-trapezoid in $\fiber^\uparrow(\plate^*)$ 
that is contained in $\tau$ (i.e., $C^\downarrow=\psi$ and $C\in \CAD_\tau$).
Since $\plate$ is wide at $\tau$, either $\frag\subseteq\plate$ or $\frag\cap\plate=\emptyset$. Let 
$\fragset_{\plate,\tau}\subseteq\fiber^\uparrow(\plate^*)$ be the subset of such pseudo-trapezoids that are contained 
in $\tau\cap\plate$.
That is,
\[
\fragset_{\plate,\tau} \coloneqq \{ C^\uparrow(\plate^*) \mid \plate^*\in C^\downarrow,\, 
		C\in\CAD_\tau,\, C^\uparrow(\plate^*) \subset \tau\cap\plate\} .
\]
$\fragset_{\plate,\tau}$ is a decomposition of $\plate\cap\tau$ into pseudo-trapezoids. Set
$\fragset_\tau = \bigcup_{\plate\in \W_\tau} \fragset_{\plate,\tau}$.
An arc $\gamma \in\Gamma$ intersects a wide plate $\plate\in\W_\tau$ within $\tau$ if and only if it intersects a 
pseudo-trapezoid of $\fragset_{\plate,\tau}$. Hence an intersection query with $\gamma$ on $\W_\tau$ 
(within $\tau$) reduces to an intersection query in $\fragset_\tau$. To facilitate the latter task, 
we compute a clustering of $\fragset_\tau$ into $O(1)$ clusters, where the constant depends on the degree $D$ of the partitioning polynomial $F$,
 and build a separate data structure 
for each cluster. Roughly speaking, all pseudo-trapezoids of $\fragset_\tau$ corresponding to a single 
cell $C$ of $\CAD_5$ form one cluster $\fragset_C$. More precisely, for each cell $C\in \CAD_\tau$, 
we define $\fragset_C \subseteq\fragset_\tau$ to be
\[
\fragset_C =\{ C^\uparrow (\plate^*) \mid \plate \in \W_{\tau} \; \wedge \;
C^\uparrow (\plate^*) \in \fragset_{\plate,\tau} \} .
\]
By definition, $\fragset_\tau = \bigcup_{C\in\CAD_\tau} \fragset_C$.
As mentioned above, a crucial property of $\fragset_C$ is that all of its pseudo-trapezoids have 
the same constant-complexity \emph{semi-algebraic encoding} of the form described in~\eqref{eq:encoding}.
The coefficients in the polynomials of the encoding depend only on $F$ 
(and not on the parameters of the plates containing these pseudo-trapezoids).
Some of the variables in the polynomials in~\eqref{eq:encoding}
encode the coefficients of the plane  containing a pseudo-trapezoid, 
but none of the variables encode the plate-boundary parameters.
The latter property will be 
crucial in constructing the data structure for $\fragset_C$.
\subsection{Reduction to semi-algebraic predicate queries}
\label{subsec:rs}

For each cell $C\in\CAD_5$, we build a data structure  $\Sigma_C$ to answer an arc-intersection query on 
$\fragset_C$ with an arc $\gamma \in \Gamma$ that is contained in (i.e., clipped to within) the cell 
$\tau_C$ of $\reals^3\setminus Z(F)$. Set $n_C = |\fragset_C|$.
To this end, we fix a cell $C$ of $\CAD_5$ and
define a predicate
$\Pi_C\colon \Gamma \times \ES_3 \rightarrow \{0,1\}$ 
that is $1$ for a pair $(\gamma,\xi) \in\Gamma \times \ES_3$ if and only if $\xi\in C^\downarrow$ and 
an intersection point of $\gamma$ and $h_\xi$ lies in the pseudo-trapezoid $C^\uparrow (\xi)$, i.e.,
\begin{equation}
	\label{eq:intersect-pred}
	\Pi_C(\gamma; \xi) = \left \{
		\begin{array}{ll} 
			1 & \mbox{if } \xi\in C^\downarrow \wedge 
			\exists  (x_p,y_p,z_p)\in \gamma \cap h_\xi \, \mbox{s.t.}\, (x_p,y_p)\in C(\xi), \\[1mm]
			0 & \mbox{otherwise.} 
		\end{array}
		\right .
\end{equation}
By construction, if $(x_p,y_p)\in C(\xi)$ then $(\xi,x_p,y_p) \in C$
and $(x_p,y_p,z_p)\in C^\uparrow(\xi) \subseteq \tau_C$, where $z_p = a_\zeta x_p + b_\zeta y_p + c_\zeta$.
Since $C$ is a semi-algebraic set of constant complexity, 
$\Pi_C(\gamma; \xi)$ is a semi-algebraic predicate of constant complexity (the complexity depends on $D$ and the 
parametric dimension of $\Gamma$). 
We refer to $\Pi_C$ as a \emph{$C$-intersection predicate}.

The following lemma, which follows from the construction, lies at the heart of the data structure:
\begin{lemma}
	For a pseudo-trapezoid $\frag\in\fragset_C$, let $\frag^*\in\ES_3$ be the point dual to the plane supporting $\frag$. An arc $\gamma\in \Gamma$ crosses a pseudo-trapezoid $\frag\in\fragset_C$ if and only if $\Pi_C(\gamma,\frag^*)=1$.
\end{lemma}

\smallskip
\begin{remark}
\label{rem:enhanced-pred}
  The semi-algebraic predicate $\Pi_C$ can be replaced by $b$ predicates $\Pi_C^{(i)}$, 
  for $i=1,\dots,b$, where $b$ is the maximum number of intersections of a query arc 
  with a plane ($b$ is at most the maximum degree of the arcs of $\Gamma$), 
  so that $\Pi_C^{(i)}(\gamma; \xi)$ asserts that (is equal to $1$ when)
  $\gamma$ intersects $h_\xi$ at least $i$ times and the $i$-th intersection point 
  along $\gamma$ (here we assume that $\gamma$ is directed) belongs to the pseudo-trapezoid $C(\xi)$.
  These predicates, which are formed using quantifiers that can then be eliminated,
  are also of constant complexity, albeit of larger complexity than $\Pi_C$.
  This enhancement is used for answering intersection-counting queries as well as for answering 
  intersection queries with planar arcs (see Sections~\ref{subsec:circ_imp} and~\ref{subsec:planar_imp}).
\end{remark}

We preprocess $\fragset_C$, in $O^*(n_C)$ expected time, into a data structure $\Sigma_C$ of size $O^*(n_C)$,
as described in Appendix~\ref{subsec:Pi-lin}, for answering $\Pi_C$-queries. For any $\frag\in\fragset$, since the predicate $\Pi_C$ depends only on the three 
parameters that define the plane supporting $\frag$, $\tO$, the reduced parametric dimension of $\fragset_C$, is $3$. Therefore, by Theorem~\ref{lem:Pi-lin}, the query time is $O^*(n_C^{2/3})$.

We construct $\Sigma_C$ for every cell $C\in\CAD_5$.
For a cell $\tau$ of $\reals^3\setminus Z(F)$, 
we store at $\tau$ the structures $\Sigma_C$, for all $C\in\CAD_\tau$, as the secondary structure $\WDS_\tau$. 
To test whether an arc $\gamma\in\Gamma$, which lies 
inside $\tau$, intersects a plate of $\W_\tau$, we query each of the structures $\Sigma_C$ stored at 
$\tau$ with $\gamma \cap \tau$ and return yes if any of them returns yes.  
Putting everything together, we obtain the following result.
\begin{lemma}
  \label{prop:wide}
  A set $\W$ of $n$ wide plates at some cell $\tau$ of $\reals^3\setminus Z(F)$ can be preprocessed into 
  a data structure of size $O^*(n)$, in $O^*(n)$ expected time, so that an arc-intersection 
  query on $\W$, for intersections within $\tau$, can be answered in $O^*(n^{2/3})$ time.
\end{lemma}
\section{Space/Query-Time Trade-Offs for Wide Plates}
\label{sec:wide-tradeoff}

In this section we show that the query time of arc-intersection searching amid wide plates can be improved by increasing
the size of the data structure. 
Let $\T, \Gamma, F$ be as defined above.
Let $\tau$ be a cell of $\reals^3\setminus Z(F)$, and let
$\W_\tau$ be the set of wide plates for $\tau$. 
Following the same approach as in Section~\ref{subsec:rs} but using Theorem~\ref{thm:Pi-tradeoff}, we obtain a 
space/query-time trade-off for $C$-intersection queries. Let $t \coloneqq \tQ$ be the reduced parametric dimension of $\Gamma$ with respect to $\Pi_C$. Then for a storage parameter $s\in [n,n^t]$, by Theorem~\ref{thm:Pi-tradeoff},
the query time is 
$O^* \biggl((n/s^{1/t})^{\tfrac{2/3}{1-1/t}}\biggr)$. Putting everything together as above, we obtain the following:
\begin{lemma}
  \label{lem:wide-tradeoff}
  Let $\W$ be a set of $n$ wide plates at some cell $\tau$ of $\reals^3\setminus Z(F)$, let $\Gamma$ 
  be a family of algebraic arcs whose reduced parametric dimension with respect to $\Pi_C$ is a constant $t\ge 3$, 
  and let $s \in [n,n^t]$ 
  be a storage parameter. Then $\W$ can be preprocessed, in $O^*(s)$ expected time,
  into a data structure of size $O^*(s)$, so that an intersection query on $\W$ (within $\tau$) 
  with an arc in $\Gamma$ can be answered in
  $O^*\biggl ((n/s^{1/t})^{\tfrac{2/3}{1-1/t}} \biggr )$
  time.
\end{lemma}

In the remainder of this section, we improve the query time for the case where $\Gamma$ is a family of planar arcs,
by eliminating the effect of the endpoints of the query arcs. That is, $t$ becomes the reduced parametric dimension of the curves supporting the query arcs. 
For simplicity, we first describe this improvement, in Section~\ref{subsec:circ_imp}, 
for circular arcs,
and then extend it to general planar arcs, in Section~\ref{subsec:planar_imp}.

\subsection{The case of circular query arcs}
\label{subsec:circ_imp}

In this subsection we present the improvement in the query time for the case
of circular arcs.
We describe the intersection condition between a circular arc and a 
pseudo-trapezoid of $\Phi_C$, for a cell $C\in\CAD$, as a 
semi-algebraic predicate in which each polynomial inequality uses 
at most six of the eight parameters specifying a query arc $\gamma$, thereby improving the reduced parametric dimension of circular arcs from $8$ to $6$. 

Let $\Gamma$ be the family of all circular arcs in $\reals^3$. For technical reasons that will become clear shortly, 
we assume that each circular arc $\gamma\in\Gamma$ is directed. Let $c_\gamma$ (resp., $\pi_\gamma$) denote the circle 
(resp., plane) containing $\gamma$. For a circle $c_\gamma$, let $\lambda_\gamma$ be the 
minimal point on $c_\gamma$ in the lexicographic order, i.e., if $\pi_\gamma$ is not parallel to the $yz$-plane then 
$\lambda_\gamma$ is the point with the minimum $x$-coordinate on $c_\gamma$; otherwise it is the point 
on $c_\gamma$ with the minimum $y$-coordinate. We partition $c_\gamma$ into two ``canonical'' semi-circles, 
by splitting $c_\gamma$ at $\lambda_\gamma$ and at its antipodal point.
By splitting the query arc $\gamma$ into at most three arcs and querying with each of them separately, 
we can assume that $\gamma$ is fully contained in one of the canonical semi-circles of $c_\gamma$; 
we denote this semi-circle by $\hat\gamma$. Let $p_\gamma, q_\gamma$ (resp. $p_{\hat\gamma}, q_{\hat\gamma}$)
be the  \emph{initial} and \emph{terminal} endpoints of $\gamma$ (resp., $\hat\gamma$). 
Without loss of generality, we assume that $p_{\hat\gamma}=\lambda_\gamma$ and that $\gamma,\hat\gamma$ are oriented 
from $p_\gamma$ toward $q_\gamma$.
We note that once 
$c_\gamma$ is fixed, so are $p_{\hat\gamma}, q_{\hat\gamma}$. As such we do not need two additional 
parameters to specify them, and thus need only six parameters to specify $\hat\gamma$. 
Note that $p_{\hat\gamma}, p_\gamma, q_\gamma, q_{\hat\gamma}$ appear in this order along $\hat\gamma$. 

Let $\frag \coloneqq C^\uparrow(\xi) \in \fragset_C$ be a pseudo-trapezoid corresponding to $C$ in the decomposition 
of $h_\xi$ induced by $\CAD$, where $\xi\in C^\downarrow$ is the point that represents the plane $h_\xi$ supporting 
$\frag$. Without loss of generality, assume that $\frag$ does not lie in the plane $\pi_\gamma$.\footnote{%
  By our general-position assumption, at most $O(1)$ input plates lie in $\pi_\gamma$.
  We can extract these $O(1)$ plates using a simple hash table and test each of them separately whether it intersects $\gamma$.}
We consider two different cases to express the intersection condition of $\gamma$ with $\frag$.

\smallskip 

\noindent
\textbf{\textit{Case I: $p_\gamma, q_\gamma$ lie on the same side of $h_\xi$.}}
Let $h_\xi^+$ be the halfspace of $h_\xi$ containing $p_\gamma, q_\gamma$, and let $h_\xi^-$ be the other halfspace.
Let $\tau_\gamma(p_\gamma)$, $\tau_\gamma(q_\gamma)$ be the tangents to $\gamma$ 
at $p_\gamma$ and $q_\gamma$, respectively, oriented toward $\gamma$ and lying in the plane~$\pi_\gamma$.
Let $\ell$ be the intersection line of $h_\xi$ and $\pi_\gamma$, and let $u_{\gamma,\xi}$ be the normal vector of 
$\ell$ within the plane $\pi_\gamma$, pointing away from $p_\gamma, q_\gamma$ (i.e., pointing toward $h_\xi^-$). 
We say that $\tau_\gamma(p_\gamma)$ (resp., $\tau_\gamma(q_\gamma)$) \emph{points toward} $h_\xi$ if the angle
between $\tau_\gamma(p_\gamma)$ (resp., $\tau_\gamma(q_\gamma)$) and $u_{\gamma,\xi}$ is 
acute; otherwise we say that it \emph{points away} from $h_\xi$. See Figure~\ref{fig:cross}.
The following lemma is the main ingredient of the intersection condition in this case:

\begin{figure}[htb]
  \centering
  \includegraphics[scale=0.6]{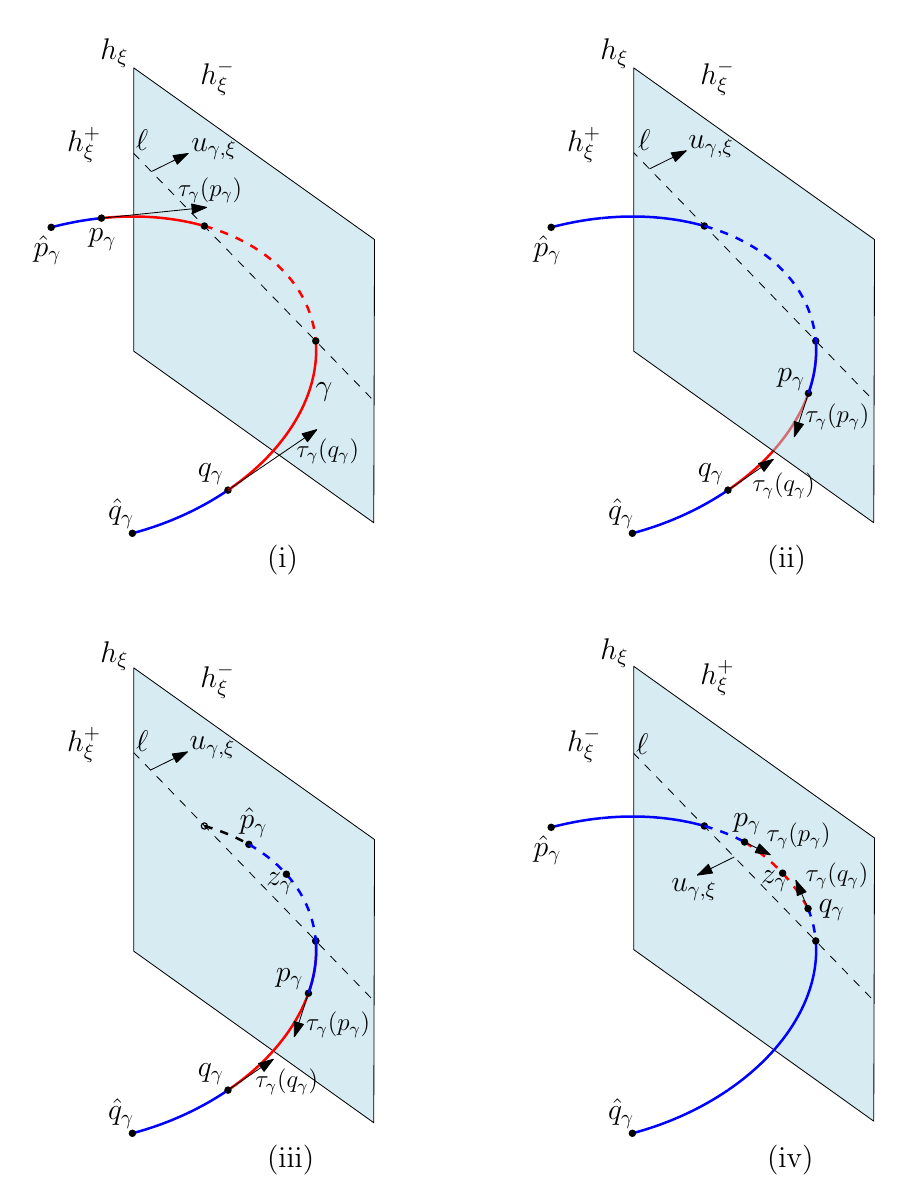}
  \caption{Illustration of the proof of Lemma~\ref{lem:circ-cross}; (i)~$\gamma$ intersects $h_\xi$,
    (ii)~one arc of $\hat\gamma\setminus\gamma$ contains both intersection points of $\hat\gamma\cap h_\xi$,
    (iii)~each of the arcs of $\hat\gamma\setminus\gamma$ contains at most one intersection 
    point of $\hat\gamma\cap h_\xi$ and $z_\gamma\in h_\xi^-$, and (iv)~$z_\gamma \in h_\xi^+$.}
    \label{fig:cross}
\end{figure}

\begin{lemma}
  \label{lem:circ-cross}
	The arc $\gamma$ intersects $h_\xi$ if and only if (a) both $\tau_\gamma(p_\gamma)$ and 
	$\tau_\gamma(q_\gamma)$ point toward $h_\xi$, and 
	(b) $\hat\gamma$ intersects $h_\xi$.
\end{lemma}

\begin{proof}
	Refer to Figure~\ref{fig:cross}(i). Assume first that $\gamma$ intersect $h_\xi$.  Property (b) holds trivially, so it suffices to prove that (a) 
	also holds. Since $p_\gamma$ and $q_\gamma$ lie on the same side of the line 
	$\ell$, $\gamma \cap h_\xi^-$ contains a point $z_\gamma$ at which the tangent to $\gamma$ is 
	parallel to $\ell$ (and thus normal to the vector $u_{\gamma,\xi}$). 
	For a point $z\in\gamma$, let $\overrightarrow{\tau}(z)$ be the tangent of $c_\gamma$ at $z$ oriented in 
	in the same direction as $\gamma$ (i.e., toward $q_\gamma$). For $z=p_\gamma$, 
	$\overrightarrow{\tau}(p_\gamma) = \tau_\gamma(p_\gamma)$.
	Since $z_\gamma \in h_\xi^-$, if we trace $z$ back from $z_\gamma$
	toward the initial point $p_\gamma$, the angle between $\overrightarrow{\tau}(z)$
	and $u_{\gamma,\xi}$ decreases continuously, starting from $+\pi/2$, so $z$ has to turn by more 
	than $\pi$ till
	$\overrightarrow{\tau}(z)$ forms an obtuse angle with $u_{\gamma,\xi}$.
	But $\gamma$ turns by at most $\pi$, as it is contained in one of the canonical semi-circles of $c_\gamma$.  
	Hence, $p_\gamma$ points toward $h_\xi$. A similar argument shows that $q_\gamma$ also points toward $h_\xi$, 
	thereby establishing (a).

        Conversely, assume that (a) and (b) hold but $\gamma$ does not cross $h_\xi$, 
	i.e., $\gamma \subset h_\xi^+$. 
	Then $\hat\gamma \setminus \gamma$ intersects $h_\xi$ in one or two points.
	Assume that $\hat\gamma\setminus \gamma$ consists of two arcs $\gamma^-$ and $\gamma^+$,
        which are the respective portions of $\hat\gamma$ between $p_{\hat\gamma}$ and $p_\gamma$ 
        and between $q_\gamma$ and $q_{\hat\gamma}$. 
	(If one of the arcs is empty, then the argument is simpler.) 
	Assume first that one of these two arcs, say, $\gamma^-$, intersects $h_\xi$ at two points; refer to Figure~\ref{fig:cross}(ii).
	Then the above argument implies that 
	$\tau_{\gamma^-}(p_\gamma)$ points toward $h_\xi$, so %
	$\tau_\gamma(p_\gamma) = -\tau_{\gamma^-}(p_\gamma)$ points away from $h_\xi$, thereby
	contradicting (a). A similar contradiction occurs if $\gamma^+$ intersects $h_\xi$ twice.

        Assume then, as above, that each of $\gamma^-, \gamma^+$ intersects $h_\xi$ at most once, and one of them
	intersects $h_\xi$ exactly once; refer to Figure~\ref{fig:cross}(iii).
	Let $z_\gamma$ be the point on $\hat\gamma$ at which the tangent to $c_\gamma$ is parallel to $\ell$. 
	Since $c_\gamma$ intersects $h_\xi$, the outer normal of $c_\gamma$ at $z_\gamma$ is $u_{\gamma,\xi}$ 
	if $z_\gamma\in h_\xi^-$ and $-u_{\gamma,\xi}$ otherwise. 
	Assume first that $z_\gamma \in h_\xi^-$, in which case the outer normal at $z_\gamma$ is $u_{\gamma,\xi}$ 
	and $z_\gamma \in \hat\gamma \setminus \gamma$. 
	Suppose $z_\gamma \in \gamma^-$. 
	As above, for a point $z\in\gamma^-$, let $\overrightarrow{\tau}(z)$ be the 
	tangent of $c_\gamma$ at $z$ oriented in the same direction as $\hat\gamma$ (toward $p_\gamma$). 
	Note that 
	$\overrightarrow{\tau}(p_\gamma)=\tau_\gamma(p_\gamma)$ and the angle between $u_{\gamma,\xi}$ and 
	$\overrightarrow{\tau}(z_\gamma)$ is $\pi/2$.
	As we trace $z$ forward from $z_\gamma$ toward $p_\gamma$, the angle between $u_{\gamma,\xi}$ and 
	$\overrightarrow{\tau}(z)$ increases. Since $\gamma^-$ turns by at most $\pi$, $\overrightarrow{\tau}(p_\gamma)$ 
	makes an obtuse angle with $u_{\gamma,\xi}$, i.e., $\tau_\gamma(p_\gamma)$ points away from $h_\xi$, 
	contradicting condition (a). A similar contradiction can be attained if $z_\gamma \in \gamma^+$.

	Next, assume that $z_\gamma \in h_\xi^+$, in which case the outer normal of $c_\gamma$ at $z_\gamma$ is
	$-u_{\gamma,\xi}$; refer to Figure~\ref{fig:cross}(iv).
	We note that at least one of the following two properties holds: 
	$p_\gamma$ lies between $p_{\hat\gamma}$ and $z_\gamma$ 
	(if $z_\gamma \in \gamma\cup\gamma^+$), or $q_\gamma$ lies between 
	$z_\gamma$ and $q_{\hat\gamma}$ (if $z_\gamma\in \gamma^-\cup\gamma$); both conditions hold if 
	$z_\gamma \in \gamma$.  Without loss of generality, assume 
	that $p_\gamma$ appears between $p_{\hat\gamma}$ and $z_\gamma$. 
	Let $\overrightarrow{\tau}(z)$ be the same as above; again, 
	$\overrightarrow{\tau}(p_\gamma)=\tau_\gamma(p_\gamma)$.
	As we trace $z$ backward along $\hat\gamma$
	from $z_\gamma$ toward $p_\gamma$, the angle between $\overrightarrow{\tau}(z)$ and $u_{\gamma,\xi}$, 
	which at $z_\gamma$ is $\pi/2$, increases. 
	Since $\gamma$ turns by at most $\pi$, we can conclude that
	the angle between $u_{\gamma,\xi}$ and $\overrightarrow{\tau}(p_\gamma)$ is obtuse, i.e.,
	$\tau_\gamma(p_\gamma)$ points away from $h_\xi$. 
	A similar argument shows that if $z_\gamma$ lies before $p_\gamma$ (in which case $q_\gamma$ lies 
	between $z_\gamma$ and $q_{\hat\gamma}$), 
	then $\tau_\gamma(q_\gamma)$ points away from $h_\xi$. 

	Since condition (a) is violated in all cases,  we conclude that $\gamma$ intersects $h_\xi$.
\end{proof}

In view of Lemma~\ref{lem:circ-cross} and the discussion in Section~\ref{subsec:rs}, the condition that 
$\gamma$ intersects the pseudo-trapezoid $\frag\in\fragset_C$, under the setup in Case I, can be described as: 
\begin{itemize}
	\item[(i)] The endpoints $p_\gamma$ and $q_\gamma$ lie on the same side of $h_\xi$;
	\item[(ii)] both $\tau_\gamma(p_\gamma)$ and $\tau_\gamma(q_\gamma)$ point toward $h_\xi$; and
	\item[(iii)] $\Pi_C(\hat\gamma,\xi)=1$, where $\Pi_C$ is the predicate defined in~\eqref{eq:intersect-pred}.
\end{itemize}
Each of these conditions can be expressed as a constant-size semi-algebraic predicate in which 
each polynomial inequality uses at most six parameters of $\gamma$: Condition (i) needs three parameters, 
condition (ii) needs four parameters---two to describe the tangent vector 
$\tau_\gamma(p_\gamma)$ (or $\tau_\gamma(q_\gamma)$) and two to denote the normal of $\pi_\gamma$ 
(which is needed to specify $u_{\gamma,\xi}$), and condition (iii) needs six parameters to describe $\hat\gamma$.  

\smallskip

\noindent
\textbf{\textit{Case~II: $p_\gamma, q_\gamma$ lie on opposite sides of $h_\xi$.}}
In this case $\gamma$ intersects $h_\xi$ at exactly one point, say, $w_\gamma$. But the semi-circle~$\hat\gamma$ may 
intersect $h_\xi$ also at another point. There are three subcases depending on the halfplanes of $h_\xi$ that contain
the endpoints $p_{\hat\gamma}$ and $q_{\hat\gamma}$ of $\hat\gamma$ (which also determines how many times 
$\hat\gamma$ intersects $h_\xi$).
\smallskip

\textbf{\textit{Case~II.a: $p_{\hat\gamma}$ and $q_{\hat\gamma}$ lie on opposite sides of $h_\xi$.}}
Since $\gamma\subseteq\hat\gamma$, $p_\gamma$ and $p_{\hat\gamma}$ lie on one side of $h_\xi$, and 
$q_\gamma$ and $q_{\hat\gamma}$ lie on the other side. Furthermore $w_\gamma$ is the only intersection point of $\hat\gamma$ and $h_\xi$. Therefore $\gamma$ intersects the pseudo-trapezoid~$\frag$ if and only if $\hat\gamma$ intersects $\frag$, and the intersection condition for this case can be specified as:
\begin{enumerate}
	\item[(i)] $p_\gamma, q_\gamma$ lie on opposite sides of $h_\xi$; 
	\item[(ii)] $p_\gamma$ and $p_{\hat\gamma}$ lie in the same halfplane of $h_\xi$, and the same holds 
		for $q_\gamma$ and $q_{\hat\gamma}$; and 
	\item[(iii)] $\Pi_C(\hat\gamma,\xi)=1$.
\end{enumerate}
\smallskip

\textbf{\textit{Case~II.b: $p_{\hat\gamma}$ and $q_{\hat\gamma}$ lie on same side of $h_\xi$ as $p_\gamma$ does.}} 
In this case $\hat\gamma$ intersects $h_\xi$ at two points, one of which is~$w_\gamma$. 
Let $\bar{w}_\gamma$ be the other intersection point. Since $p_\gamma$ and $p_{\hat\gamma}$ lie 
in the same halfplane of $h_\xi$, $\bar{w}_\gamma$ lies on $\hat\gamma$ between $q_{\gamma}$ and~$q_{\hat\gamma}$.
That is, assuming that $\hat\gamma$ is oriented from $p_{\hat\gamma}$ to $q_{\hat\gamma}$, 
$w_\gamma$ is the first intersection point of $\hat\gamma$ with $h_\xi$ and $\bar{w}_\gamma$ is the second one.
In order to detect whether $\gamma$ intersects the pseudo-trapezoid $\frag$, unlike the previous case, 
we cannot simply use the predicate $\Pi_C(\hat\gamma,\xi)$, because $\bar{w}_\gamma$ might lie in 
$\frag$ while $w_\gamma$ does not (see Figure~\ref{fig:second}). 
Hence, we use the extension $\Pi_C^{(1)}(\hat\gamma,\xi)$ of $\Pi_C(\hat\gamma,\xi)$, defined in 
Remark~\ref{rem:enhanced-pred} (see Section~\ref{subsec:rs}), which asserts that the first intersection 
point of $\hat\gamma$, which is $w_\gamma$, lies in $\frag$. The intersection condition can now be expressed as:
\begin{enumerate}
	\item[(i)] $p_\gamma, q_\gamma$ lie on opposite sides of $h_\xi$; 
	\item[(ii)] both $p_{\hat\gamma}$ and $q_{\hat\gamma}$ lie in the same halfplane of $h_\xi$ as $p_\gamma$; and
	\item[(iii)] $\Pi_C^{(1)} (\hat\gamma,\xi)=1$.
\end{enumerate}

\begin{figure}[htb]
  \centering
  \scalebox{0.8}{\input{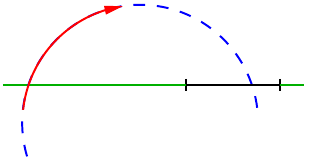_t}}
  \caption{An instance where only the second crossing point of $\hat{\gamma}$ with 
    $h_\Delta$ lies in $C$ but this point does not belong to $\gamma$.}
  \label{fig:second}
\end{figure}

\smallskip

\textbf{\textit{Case~II.c: $p_{\hat\gamma}$ and $q_{\hat\gamma}$ lie on same side of $h_\xi$ as $q_\gamma$.}} 
This case is symmetric to the previous case, except that $\bar{w}_\gamma$ now lies between $p_{\hat\gamma}$ 
and $p_\gamma$ on $\hat\gamma$. To handle this case we simply reverse the direction of $\hat{\gamma}$,
and use the same analysis as in the previous case.
\smallskip

In summary, 
the intersection condition between $\gamma$ and $\frag$ in each of the three subcases of Case~II 
is specified as a constant-complexity semi-algebraic predicate in which each polynomial inequality uses 
at most six parameters of $\gamma$. Combining with Case~I, we conclude that the intersection 
condition between a circular arc and a pseudo-trapezoid of $\Phi_C$ can be written as a Boolean predicate with $O(1)$ polynomial inequalities, 
so that each polynomial inequality 
uses at most six of the eight (real) parameters specifying $\gamma$.
Hence, the reduced parametric dimension of circular arcs with respect to the $C$-intersection predicate is $6$.
By invoking Theorem~\ref{thm:Pi-tradeoff} with $\tO=3$ and $\tQ=6$, 
for a given storage parameter $s\in [n,n^6]$, 
$\fragset_C$ can be preprocessed, in $O^*(s)$ expected time, into a data structure of size $O^*(s)$,
so that an intersection query for $\fragset_C$ with a circular arc can be answered in 
$O\left((n/s^{1/6})^{4/5}\right) = O^*\left(n^{4/5}/s^{2/15}\right)$ time.
Putting everything together, we obtain the following result.
\begin{lemma}
  \label{prop:circular-arc-trade-off}
	Let $\W$ be a set of $n$ wide plates at some cell $\tau$ of $\reals^3\setminus Z(F)$, 
	and let $s\in [n, n^6]$ be a prespecified storage parameter.
	$\W$ can be preprocessed, in $O^*(s)$ expected time,
	into a data structure of size $O^*(s)$, so that an intersection query on $\W$ within $\tau$ with a circular 
	arc in $\reals^3$ can be answered in $O^*(n^{4/5}/s^{2/15})$ time.
\end{lemma}
\subsection{The case of planar query arcs}
\label{subsec:planar_imp}
We now let $\Gamma$ be a family of arbitrary constant-degree planar algebraic arcs.
Let $\gamma\in\Gamma$ be a query arc, and let $\sigma_\gamma$ be the curve 
containing the query arc $\gamma$.
We  break the curve $\sigma_\gamma$, as well as the arc
$\gamma$, at its singular points (see~\cite{Hil20} for the definition) and lexicographically extremal points, 
so that each resulting subarc is convex and its turning angle is at most $\pi$.
The overall number of these breakpoints is a constant that depends on %
the degree of $\gamma$.
We then apply the intersection-searching algorithm to each portion separately. Since each portion
is convex, it can intersect a plane in at most two points. A straightforward modification of the algorithm presented
above for circular arcs, and its analysis, extends to such 
general arcs; the routine details are omitted.

Let $t$ be the reduced parametric dimension of the curves supporting the query arcs. Following the same analysis as 
above and using Theorem~\ref{thm:Pi-tradeoff}, for a storage parameter $s \in [n,n^t]$, a 
$C$-intersection query on $\W$ (within $\tau$) can be answered in 
$O^*\biggl( (n/s^{1/t})^{\tfrac{2/3}{1-1/t}} \biggr)$ 
time, using an $O^*(s)$-size data structure. That is, we have

\begin{lemma}
  \label{prop:wide-trade-off}
	Let $\W$ be a set of $n$ wide plates at some cell $\tau$ of $\reals^3\setminus Z(F)$, let $\Gamma$ be a family 
	of constant-degree planar algebraic arcs such that the reduced parametric dimension
	of their supporting curves with respect to the $C$-intersection predicate is
	$t\ge 3$, and  let $s\in [n, n^t]$ be a prespecified storage 
	parameter. $\W$ can be preprocessed, in $O^*(s)$ 
	expected time, into a data structure of size $O^*(s)$, so that an intersection query on $\W$ within $\tau$ 
	with an arc in $\Gamma$ can be answered in
        $O^*\biggl((n/s^{1/t})^{\tfrac{2/3}{1-1/t}}\biggr)$ 
        time.
\end{lemma}
\section{Space/Query-Time Trade-Offs for the Overall Data Structure}
\label{sec:trade-off}

We are now ready to describe how we combine the data structure described in Section~\ref{sec:main} with the general technique for semi-algebraic predicate searching, outlined in Appendix~\ref{app:multi-level},
to obtain space/query-time trade-offs for arc-intersection queries. 
As above, let $\T$ be the set of input plates, and let $\Gamma$ 
be the family of query arcs. Let $\tO,\tQ\ge 3$ be the reduced parametric dimensions of $\T$ and $\Gamma$, respectively, with respect to their intersection predicate.
Given a storage parameter $s\in [n,n^{\tQ }]$, we construct, in $O^*(s)$ expected time, 
a data structure of size $O^*(s)$, that can answer an arc-intersection query in time 
\[
O^* \left ( \frac{n^{2- 3/\tO}}{s^{1-2/\tO}} + 
	\left ( \frac{n}{s^{1/\tQ}}\right )^{\tfrac{2/3}{1-1/\tQ}} \right ) .
\]
We note that the second term dominates for $s\ge n^{3/2}$ as $\tQ\ge 3$. We combine the data structure described in Section~\ref{sec:main} with a linear-size data structure constructed in the object space, described in Appendix~\ref{subsec:Pi-lin}, for $s\le n^{3/2}$, and with a large-size data structure constructed in the query space, described in Appendix~\ref{subsec:Pi-tradeoff}, for $s>n^{3/2}$, as follows.

We build the same partition tree $\Psi$ as described in Section~\ref{subsec:data-structure} 
but with a few twists. Let $D>0$ be a sufficiently large constant parameter as before. First, the
recursive subproblem at a node $v$ of $\Psi$ involves two parameters: a subset $\T_v \subseteq \T$ of plates of 
size $n_v$ (as before), and a storage parameter $s_v \ge n_v$ that specifies, in the asymptotic
$O^*(\cdot)$ sense, the size of the subtree $\Psi_v$ constructed at $v$. 
Initially, at the root, $\T_v=\T$ and $s_v=s$. Second, we set the threshold value $n_0 \coloneqq n_0 (s)$ for
termination of the recursion that depends on the storage parameter $s$. In particular,
\begin{equation}
  \label{eq:threshold}
  n_0(s) = \left \{ \begin{array}{cl}
    \frac{n^3}{s^2} & s \le n^{3/2}, \\[1mm]
    \left ( \frac{s}{n^{3/2}} \right )^{\frac{1}{\tQ - 3/2}} & s > n^{3/2},
  \end{array} \right .
\end{equation}

Suppose we are at a node $v$. If $n_v > n_0$, we construct a 
partitioning polynomial $F_v$ of degree at most $c_1D$, for a suitable constant $c_1>0$,
and the secondary data structures 
$\ZDS_v$ and $\WDS_v$ to handle the query arcs that are contained in $Z(F_v)$ and to answer 
intersection queries for wide plates, respectively, as before. However the size of each of 
$\ZDS_v, \WDS_v$  is now $O^*(s_v)$, which allows a query to be answered faster. 
We already have described in Section~\ref{sec:wide-tradeoff} how to 
construct $\WDS_v$ for a given storage parameter $s_v$, and we describe a construction
of $\ZDS_v$, with the same asymptotic bounds, in Section~\ref{sec:zero-set} (see Lemma~\ref{lem:onzf-tradeoff}).
Finally, for each cell $\tau$ of $\reals^3\setminus Z(F_v)$, we recursively construct the 
data structure on the subset $\T_\tau \subset \T_v$ of the narrow plates at $\tau$ with storage parameter 
set to $s_v/D^3$. This storage parameter at the children of $v$ is chosen so that the total space used at each level of $\Psi$ is $O^*(s)$.

If $n_v \le n_0$, then we regard $v$ as a leaf of $\Psi$. For $s \le n^{3/2}$, we construct a data structure of size 
$O^*(n_v)$, using Theorem~\ref{lem:Pi-lin}, so that an intersection query on $\T_v$ can be answered in $O(n_v^{1-1/\tO})$ time. For $s > n^{3/2}$, we construct a data structure of size $O^*(n_v^\tQ)$, using Theorem~\ref{thm:Pi-tradeoff},
so that an intersection query on $\T_v$ can be answered in $O^*(1)$ time.

An intersection query with an arc $\gamma\in\Gamma$ is answered as described in Section~\ref{subsec:plate-query},
except that we use the procedures described in Sections~\ref{sec:wide-tradeoff} and~\ref{sec:zero-set} to query 
the respective secondary data structure $\WDS_v$ and $\ZDS_v$ at each node $v$. When we reach a leaf of $\Psi$, we use the query procedure described in Section~\ref{subsec:Pi-lin} or Section~\ref{subsec:Pi-tradeoff} depending on whether $s \le n^{3/2}$ or $s>n^{3/2}$, respectively.

Let $S(n_v,s_v)$ denote the maximum size of the subtree $\Psi_v$, and let $Q(n_v,s_v)$ 
denote the maximum query time on $\Psi_v$. We obtain the following recurrence for $S(n_v,s_v)$:
\begin{equation}
  \label{eq:moderate-storage}
  S(n_v,s_v) \le
  \begin{cases*}
    \displaystyle c_2D^3 \cdot S \left ( \frac{n_v}{D^2},\frac{s_v}{D^3} \right) + c_3 s_vn_v^{\delta} & for $n_v >  n_0$, \\[1mm]
    c_4 n_v^{1+\delta} & for $n_v\le n_0$ and $s\le n^{3/2}$,\\[1mm]
    c_4 n_v^{\tQ+\delta} & for $n_v\le n_0$ and $s> n^{3/2}$,\\[1mm]
  \end{cases*}
\end{equation}
where $c_2, c_3, c_4$, and $\delta$ are constants as defined in Section~\ref{subsec:data-structure}.
Using induction on $n_v$, as in Section~\ref{subsec:data-structure}, we can show that the solution to the recurrence is
\begin{equation}
	\label{eq:moderate-storage-sol}
	S(n_v,s_v) \le A \left ( s_v + s \left ( \frac{n_v}{n}\right )^{3/2} \right ) n_v^\eps\,,
\end{equation}
for any $\eps>\delta$, provided the constants $A$ and $D$ are chosen sufficiently large. 
Here is an intuitive explanation of why (\ref{eq:moderate-storage-sol}) is the solution to the above recurrence; we omit a formal proof by induction from here.
By the choice of the storage parameters for each subproblem, the secondary 
data structures use $O^*(s_v)$ storage at each level of $\Psi$. Furthermore, there are 
$O^*((n_v/n_0)^{3/2})$ leaves of $\Psi$. For $s\le n^{3/2}$, each leaf uses $O^*(n_0)$ space and
$n_0=n^3/s^2$, so the total space used at the leaves of $\Psi$ is $O^*((n_v/n)^{3/2}s)$.  For $s>n^{3/2}$, each leaf uses $O^*(n_0^\tQ)$ space and $n_0=(s/n^{3/2})^{\tfrac{1}{\tQ - 3/2}}$, so the total space used at the leaves of $\Psi$ is again $O^*((n_v/n)^{3/2}s)$.  Hence, the total size of the subtree $\Psi_v$, including the space used by 
the secondary structures, is
$O^*(s_v + s (n_v/n)^{3/2})$. Since $n_v=n$ and $s_v=s$ at the root of $\Psi$, 
the overall size of $\Psi$ is $S(n,s) = O^*(s)$.

Concerning the query cost, by adapting~\eqref{eq:query} and plugging the query-time bounds 
for the secondary data structures from Lemmas~\ref{prop:wide-trade-off} and~\ref{lem:onzf-tradeoff}, 
we obtain the following recurrence:
\begin{equation}
  \label{eq:moderate-query}
  Q(n_v,s_v) \le
  \begin{cases*}
	  \displaystyle c_5 D \cdot Q\left (\frac{n_v}{D^2},\frac{s_v}{D^3}\right ) + 
		c_6 \left (\frac{n_v}{s_v^{1/\tQ}}\right)^{\frac{2/3}{1-1/\tQ}}n_v^\delta & for $n_v> n_0$, \\[1mm]
	  c_7 n_v^{1-\tfrac{1}{\tO }+\delta} & for $n_v\le n_0$ and $s\le n^{3/2}$,\\[1mm]
	  c_7 n_v^{\delta} & for $n_v\le n_0$ and $s> n^{3/2}$,
  \end{cases*}
\end{equation}
where $c_5, c_6$ and are constants as defined in Section~\ref{subsec:plate-query}, $c_7>0$ is a constant, and $\delta$ is an arbitrarily small constant. 
For our choice of storage parameter, \eqref{eq:moderate-query} implies that the time spent in querying the secondary 
structures at the children of a node $v$, namely the sum of the nonrecursive overhead terms at the children of $v$
recursively visited by the query, is 
\[ c_5D \cdot c_6 \left ( \frac{(n_v/D^2)}{(s_v/D^3)^{1/\tQ}} \right )^{\frac{2/3}{1-1/\tQ}} = 
c_5 c_6 \left (\frac{n_v}{s_v^{1/\tQ}} \right )^{\frac{2/3}{1-1/\tQ}} \cdot D^\alpha ,
\]
where 
\[ \alpha = 1 + \left (\frac3\tQ-2\right) \frac{2/3}{1-1/\tQ}  = \frac{1/\tQ-1/3}{1-1/\tQ} \le 0
\] for $\tQ\ge 3$.
Therefore, for $\tQ \ge 3$, the total time spent in querying the secondary data structures at all descendents of 
a node $v$ is $O^*\biggl ( (n_v/s_v^{1/\tQ})^{\tfrac{2/3}{1-1/\tQ}}\biggr)$. 
For $s\le n^{3/2}$, the procedure visits $O^*(s/n)$ leaves and
spends $O^*(n_0^{1-1/\tO})$ time at each such leaf, thus spending a total of $O^*(n^{2-3/\tO}/s^{1-2/\tO})$ time at the leaves.
For $s\ge n^{3/2}$,  the query time is dominated by the time spent on the secondary structures.
Summing these bounds and setting $n_v=n$ and $s_v=s$, 
the overall query time is 
\begin{equation}
	\label{eq:moderate-qtime-bound}
	Q(n,s) = O^* \biggl ( \frac{n^{2- 3/\tO }}{s^{1-2/\tO }} + 
	\biggl ( \frac{n}{s^{1/\tQ}}\biggr )^{\tfrac{2/3}{1-1/\tQ}} \biggr ).
\end{equation}
It can be checked that for $s\ge n^{3/2}$ (assuming $\tQ\ge 3$), the second term dominates.
In particular, for $s=n^{3/2}$, the query time is $O^*\biggl(n^{\tfrac{2\tQ -3}{3(\tQ -1)}}\biggr)$.
If $\Gamma$ is a family of planar arcs, then $\tQ$ is the reduced parametric dimension of the curves supporting the arcs in $\Gamma$.
Putting everything together, we obtain the following result:
\begin{theorem}
  \label{thm:space-query-tradeoff}
  Let $\Gamma$ be a family of parametrized algebraic arcs of constant complexity, let $\T$ be a set 
	of $n$ constant-complexity plates in $\reals^3$, let $\tO$ be the reduced parametric dimension of $\T$, let $\tQ$ be the reduced 
	parametric dimension of $\Gamma$ or of the curves supporting the arcs in $\Gamma$ if they are planar,  and let 
	$s\in [n,n^{\tQ }]$ be a storage parameter. $\T$ can be preprocessed, in expected time $O^*(s)$, into a 
	data structure of size $O^*(s)$, so that an arc-intersection query amid the plates of $\T$ 
	with an arc in $\Gamma$ can be answered in time
        \[
          O^* \left ( \frac{n^{2- 3/\tO }}{s^{1-2/\tO }} + 
            \left ( \frac{n}{s^{1/\tQ}}\right )^{\tfrac{2/3}{1-1/\tQ}} \right ).
        \]
\end{theorem}

If $\T$ is a set of $n$ triangles in $\reals^3$, then $\tO =5$ for general algebraic arcs (namely, this is
the best bound we have so far) and $\tO =4$ for lines; $\tQ=4$ in the latter case. We thus obtain the following 
corollary.

\begin{corollary}
  \label{cor:space-query-triangle}
		Let $\T$ be a set of $n$ triangles in $\reals^3$, let $\Gamma$ be a family of parametrized algebraic arcs.
	Let $\tQ$ be the reduced parametric dimension of $\Gamma$ or of the curves supporting the arcs of $\Gamma$ 
	if they are planar, and let $s\in[n,n^{\tQ}]$ be a storage parameter. $\T$ can be preprocessed, 
	in $O^*(s)$ expected time, into a data structure of size $O^*(s)$, so that an intersection query with an arc 
	in $\Gamma$ can be answered in 
	$O^* \biggl ( n^{7/5}/s^{3/5} + (n/s^{1/\tQ})^{\tfrac{2/3}{1-1/\tQ}} \biggr )$ time.
	If $\Gamma$ is a set of line segments, then an intersection 
	query can be answered in $O^* (n^{5/4}/s^{1/2} + n^{8/9}/s^{2/9})$ time,
	for $s\in [n,n^4]$.
\end{corollary}
\section{Handling the Zero Set}
\label{sec:zero-set}

Let $\Gamma$ and $\T$ be as above, and let $F$ be a partitioning polynomial of degree at most $D$, 
for some constant $D>0$, 
as described in Section~\ref{sec:main}. We assume that no plate in $\T$ lies in $Z(F)$, 
i.e., the intersection of a plate with $Z(F)$ is a collection of algebraic arcs, all contained
in a single algebraic curve (of constant degree).
This assumption is justified because, by assumption,
$Z(F)$ (or rather its planar components) contains only $O(1)$ input plates, which are
handled separately; see Section~\ref{subsec:data-structure}. 
In this section, we present a data structure for answering 
intersection queries amid $\T$ with an arc $\gamma\in\Gamma$ that is contained in $Z(F)$. 

Our data structure is based on a hierarchical polynomial-partitioning technique proposed 
in~\cite{AAEZ} (see Lemma~\ref{lem:aaez} in the appendix).
For a subset $\X\subseteq \T$ 
and for a parameter $E\gg D$, this technique constructs a partitioning polynomial 
$G \in \reals[x,y,z]$ of degree at most $c'_1 E$, for a constant $c'_1>0$ that depends 
(polynomially) on $D$, such that any (two-dimensional) cell of $Z(F)\setminus Z(G)$ is crossed by 
at most $m/E$ plates of $\X$. Furthermore, each such cell contains at most $m/E^2$ endpoints of the intersection 
arcs of $\X$ and $Z(F)$.
Let $\A(G;F)$ denote the decomposition of $Z(F)$ induced by $Z(G)$.
The number of cells in $\A(G;F)$ is at most $c'_2E^2$, for a constant $c'_2>0$ that depends on $D$. 
With this polynomial partitioning at our disposal, we build a data structure similar to the one
presented in Section~\ref{subsec:data-structure}. Namely, we construct a partition tree $\ZDS$ 
on $\T$. Each node of $\ZDS$ stores a secondary data structure, analogous to the one in 
Section~\ref{sec:wide}, to handle wide plates at the corresponding cell. We describe the 
overall data structure briefly, highlighting the differences from the earlier structure.

\subsection{Overall data structure}
\label{subsec:z-overall}

Each node $z\in \ZDS$ is associated with a constant-complexity semi-algebraic cell $\zcell_z \subseteq Z(F)$ 
and a subset~$\T_z\subseteq\T$ of $n_z$ plates. If $z$ is the root of $\ZDS$, then $\zcell_z=Z(F)$ and 
$\T_z=\T$. Set $n_z =|\T_z|$. We fix two sufficiently large constants $n_1 \ge 0$ (a threshold parameter) 
and $E = D^{O(1)}$.

If $n_z\le n_1$ then $z$ is a leaf and we simply store $\T_z$ at $z$. Otherwise, we construct a 
partitioning polynomial $G_z$ for $\T_z$ (relative to $F$) of degree at most $c'_1E$, as described 
above, and store~$G_z$ at $z$. By our general-position assumption on the input plates stated in
Section~\ref{sec:main}---only $O(1)$ plates lie in any plane and any line is contained in supporting planes of 
$O(1)$ input plates---there are only $O(1)$ input plates $\Delta$ for which $\dim (\Delta\cap Z(F)\cap Z(G))=1$.
Let $\T_z^0 \subseteq \T_z$ be the subset of these plates. We store $\T_z^0$ at $z$.
Let $\sigma$ be a one- or two-dimensional cell%
\footnote{%
We also have to consider zero-dimensional cells of $\A(G;F)$, but they are trivial to handle and we ignore them  for the sake of simplicity of presentation.
}
of $\A(G;F)$. If $\sigma$ is a one-dimensional cell, which is a connected arc of the intersection curve 
$Z(G)\cap Z(F)$, we compute the set $\T_\sigma \subseteq \T_z\setminus \T_z^0$ of plates that intersect 
$\sigma$. Note that $\dim (\Delta\cap Z(F)\cap Z(G))=0$ for all $\Delta\in\T_\sigma$.
Using a one-dimensional range-searching data structure, we preprocess $\T_\sigma$ into a 
linear-size data structure $\ZZDS_\sigma$ that supports intersection queries for $\gamma\cap\sigma$, 
$\gamma\in\Gamma$, in $O(\log n)$ time; see also Appendix~\ref{subsec:Pi-lin}. 
We store $\ZZDS_\sigma$ at $z$. 
Assume next that $\sigma$ is a two-dimensional cell. We now call an input plate $\plate$, which crosses $\sigma$, 
\emph{narrow} if $\sigma$ contains an endpoint of the intersection arc $\plate\cap Z(F)$, 
and \emph{wide} otherwise. 
We create a child $u_\sigma$ of $z$ associated with $\sigma$. 
Let $\W_\sigma$ (resp., $\T_\sigma$) denote the set of the wide (resp., narrow) plates at $\sigma$. By construction,
$|\W_\sigma| \le n_z/E$ and $|\T_\sigma| \le n_z/E^2$.
We construct a secondary data structure $\ZWDS_\sigma$ on $\W_\sigma$, as described in 
Section~\ref{subsec:z-wide} below, for answering arc-intersection queries amid the plates 
of $\W_\sigma$ (for intersections that occur within $\sigma$) with the arcs of $\Gamma$ that lie in $Z(F)$.
$\ZWDS_\sigma$ is stored at the child $u_\sigma$ of $z$. Finally, we set $\T_{u_\sigma} = \T_\sigma$, 
recursively construct a partition tree for $\T_{u_\sigma}$, and attach it as the subtree rooted at $u_\sigma$. 

Let $S(n_z)$ be the maximum size of the subtree $\ZDS_z$ constructed on a set of at most $n_z$ plates.
For $n_z\le n_1$, $S(n_z)=O(n_z)$.
For $n_z > n_1$, Lemma~\ref{lem:zwide-time} below implies that the 
secondary structure for handling wide plates requires $O^*(n_z)$ space. Therefore $S(n_z)$ obeys the recurrence:
\begin{equation}
  \label{eq:zstorage}
  S(n_z) \le
  \begin{cases*}
	  \displaystyle c'_2E^2 \cdot S\left (\frac{n_z}{E^2}\right) + c'_3 n_z^{1+\delta} & for $n_z\ge n_1$, \\[1mm]
	  c'_4 n_z & for $n_z\le n_1$,
  \end{cases*}
\end{equation}
where $c'_2, c'_3, c'_4,  \delta$ are constants analogous to those in~\eqref{eq:storage}, and $c'_3$ depends on $\delta$.
The solution to the above recurrence is $S(n_z) = O(n_z^{1+\eps})$ for any constant $\eps\ge\delta$,
provided that $E$ is chosen sufficiently large.  Hence, the overall size of $\ZDS$ is $O^*(n)$.

The query procedure is similar to Section~\ref{subsec:plate-query}. 
Using Lemma~\ref{lem:zwide-time} again for the query cost amid the wide plates, which is $O^*(n^{2/3})$, we obtain 
the following recurrence for the query cost $Q(n_z)$:
\begin{equation}
  \label{eq:zquery}
  Q(n_z) \le
  \begin{cases*}
	  \displaystyle c'_5 E \cdot Q\left (\frac{n_z}{E^2} \right) + c'_6 n_z^{2/3+\delta} & for $n_z\ge n_1$, \\[1mm]
	  c'_7 n_z & for $n_z\le n_1$,
  \end{cases*}
\end{equation}
where $c'_5, c'_6, c'_7, \delta$ are constants analogous to those in~\eqref{eq:query}, and $c'_6$ depends on $\delta$.
The solution to the above recurrence is~$Q(n_z) = O(n_z^{2/3+\eps})$ for any constant $\eps\ge\delta$. 
Hence, we obtain the following result.

\begin{lemma}
  \label{lem:onzf}
  Let $\Gamma$ be a family of constant-degree parametrized algebraic arcs, let $\T$ be a set of $n$ plates 
  of constant complexity in $\reals^3$, and let $F$ be a partitioning polynomial of some constant degree $D$.
	$\T$ can be processed, in $O^*(n)$ expected time, into a data structure of $O^*(n)$ size, 
	so that an arc-intersection query amid the plates of $\T$ with an arc in $\Gamma$ that lies in~$Z(F)$ can be answered in $O^*(n^{2/3})$ time. The hidden constants and factors in the bounds depend on $D$.
\end{lemma}

To obtain a space/query-time trade-off, we proceed as in Section~\ref{sec:trade-off},  and combine this data 
structure with the general technique described in Appendix~\ref{subsec:Pi-tradeoff}.
Following the same analysis as in Section~\ref{subsec:Pi-tradeoff}, we obtain the following result.

\begin{lemma}
  \label{lem:onzf-tradeoff}
  Let $\Gamma$ be a family of constant-degree parametrized algebraic arcs, let $\T$ be a set 
	of $n$ plates in $\reals^3$, let $\tQ$ be the reduced parametric dimension of arcs in $\Gamma$ or of the curves supporting the arcs in $\Gamma$ if they are planar, let $F$ be a partitioning polynomial of constant degree, and let 
	$s\in [n,n^{\tQ}]$ be a storage parameter. $\T$ can be preprocessed, in expected 
	time $O^*(s)$, into a data structure of size $O^*(s)$, so that an arc-intersection  query amid the plates of 
	$\T$ with an arc in $\Gamma$ that lies in $Z(F)$ can be answered in 
        $O^* \biggl((n/s^{1/\tQ })^{\tfrac{2/3}{1-1/\tQ}} \biggr)$
        time.
\end{lemma}
\subsection{Handling wide plates}
\label{subsec:z-wide}

It remains to consider the case of wide plates.
Let $F$ be a partitioning polynomial as defined in Section~\ref{sec:main}, and let $G\in \reals[x,y,z]$ be a 
second partitioning polynomial 
(relative to $F$) as described in Section~\ref{subsec:z-overall}. In this subsection we present an algorithm for preprocessing the 
set $\W_\sigma$ of wide plates at a cell~$\sigma$ of $Z(F)\setminus Z(G)$, for intersection queries 
(within $\sigma$) with the arcs of $\Gamma$ that lie in $Z(F)$. The high-level approach closely follows
the approach of Section~\ref{sec:wide}. It decomposes $\sigma\cap\Delta$, for $\Delta\in \W_\sigma$, 
into subarcs and groups the resulting subarcs into $O(1)$ clusters using a CAD construction, so that each cluster has an
$O(1)$-size semi-algebraic encoding that depends only on the plane supporting $\Delta$ but not on
its boundary. 

We now describe the algorithm in more detail, highlighting the differences from Section~\ref{sec:wide}.
Let $\ES_3$ and $\ES$ be the same as in Section~\ref{sec:wide}. Define $\hat F, \hat G \in \reals[a,b,c,x,y]$ as
$\hat{F} (a,b,c,x,y) = F(x,y,ax+by+c)$ and $\hat G (a,b,c,x,y) = G(x,y,ax+by+c)$. We construct a five-dimensional 
CAD $\ZCAD$ of $\ES$ induced by $\{\hat F, \hat G\}$. Note that $\ZCAD$ is a refinement of $\CAD$, the CAD
constructed in Section~\ref{sec:wide} for $\hat{F}$. Here too, each cell of $\ZCAD$ is given by a sequence
of equalities or inequalities (one from each row) of the form in~\eqref{eq:cad}.
Let $\ZCAD_3$ denote the projection of $\ZCAD$ onto $\ES_3$, which
we again refer to as the \emph{base} of $\CAD$.
We will be interested only in those cells of $\ZCAD$ that lie in $Z(\hat F)$, and we denote by
$\ZCAD_F$ the subset of these cells.

For a point $\xi=(a_\xi, b_\xi, c_\xi)\in \ES_3$, let $\zfiber(\xi)$ 
denote the two-dimensional fiber of $\ZCAD$ over $\xi$ and $\zfiber^\uparrow (\xi)$ its lifting 
to the plane $h_\xi$.
Again, we are interested only in the one-dimensional cells of $\zfiber^\uparrow(\xi)$ that lie in $Z(F)$.
The combinatorial structure of $\zfiber(\xi)$, as well as of its lifting 
$\zfiber^\uparrow(\xi)$, remains the same for all points $\xi$ in the same base cell $\psi \in \ZCAD_3$.
As earlier, we associate each cell $\zCADcell$ of $\ZCAD_F$ with a fixed cell of $\A(G;F)$,
denoted as $\sigma_\zCADcell$, such that for all points $\xi$ in the base cell $\zCADcell^\downarrow\in \ZCAD_3$, 
$\zCADcell^\uparrow (\xi)$ is an edge of $\zfiber^\uparrow (\xi)$ that lies in $\sigma_\zCADcell$. 
For a cell $\zCADcell\in\ZCAD_F$ and a point $\xi\in\zCADcell^\downarrow$, we use $\zCADcell^\uparrow (\xi)$ 
to denote the arc of $\zfiber^\uparrow(\xi)$ corresponding to the cell $\zCADcell$.

Let $\sigma$ be a $2$-cell of $\A(G;F)$.
Let $\Delta \in \W_\sigma$ be a plate that is wide at $\sigma$, and let $\psi\in\ZCAD_3$ be the base cell containing 
$\Delta^*$. We define $\zfragset_{\Delta,\sigma}$ to be the decomposition of $\Delta\cap\sigma$ into subarcs 
induced by $\zfiber^\uparrow(\Delta^*)$, i.e.,
\[
\zfragset_{\Delta,\sigma} \coloneqq \{ \zCADcell^\uparrow (\Delta^*) \mid \zCADcell \in \ZCAD_F, \, 
	\Delta^* \in \zCADcell^\downarrow,\,
	\sigma_\zCADcell=\sigma,\,
	\zCADcell^\uparrow(\Delta^*) \subseteq \Delta\cap Z(F) 
	\}.
\]
Set $\zfragset_\sigma = \bigcup_{\Delta\in\W_\sigma} \zfragset_{\Delta,\sigma}$.
For a cell $\zCADcell\in\ZCAD_F$ with $\sigma_\zCADcell=\sigma$, we define 
$\zfragset_\zCADcell \subseteq \zfragset_\zcell$ to be the set of arcs defined by the cell 
$\zCADcell \in \ZCAD_F$, i.e., 
\[
\zfragset_\zCADcell \coloneqq \{ \zCADcell^\uparrow (\Delta^*) \mid \Delta \in \W_{\sigma}\; \wedge\; 
\zCADcell^\uparrow (\Delta^*) \in \zfragset_{\Delta,\sigma} \} .
\]
By definition, $\zfragset_\sigma = \bigcup_{\zCADcell\in\ZCAD_F:\zcell_\zCADcell=\zcell} \zfragset_\zCADcell$.
All arcs in $\zfragset_\zCADcell$ have a fixed constant-size 
semi-algebraic encoding of the form described in~\eqref{eq:encoding} 
that depends only on $F$ and $G$.
Furthermore, the polynomials defining the encoding do not even depend on the supporting planes--the  coefficients of supporting planes only appear as variables in these functions. 
We can therefore define a \emph{$\zCADcell$-intersection predicate}
$\zpred_\zCADcell\colon \Gamma \times \ES_3 \rightarrow \{0,1\}$, analogous to~\eqref{eq:intersect-pred}, such that 
$\zpred_\zCADcell(\gamma; \xi)=1$ if an intersection point of $\gamma$ and $h_\xi$ lies on the arc $\zCADcell(\xi)$. 
That is,
\begin{equation}
	\label{eq:zintersect-pred}
	\zpred_\zCADcell(\gamma; \xi) \coloneqq \left \{
		\begin{array}{ll} 
			1 & \text{if } \xi\in \zCADcell^\downarrow \wedge 
			\exists  (x_p,y_p,z_p)\in \gamma \cap h_\xi \; \text{s.t.}\; (x_p,y_p)\in \zCADcell(\xi), \\[1mm]
			0 & \text{otherwise.}
		\end{array}
		\right .
\end{equation}
For an arc $\zeta\in\zfragset_\zCADcell$  contained in a plate $\plate\in\W_\sigma$, by construction,
$\zeta$ intersects an arc $\gamma\in\Gamma$ lying in $Z(F)$  if and only if $\zpred_\zCADcell (\gamma,\plate^*)=1$. 
We preprocess $\zfragset_\zCADcell$ into a data structure of $O^*(n)$ size, as described in Appendix~\ref{subsec:Pi-lin},
so that an arc-intersection query with an 
arc $\gamma\in\Gamma$ lying in $Z(F)$ can be answered in $O^*(n^{2/3})$ time,
because the reduced parametric dimension of $\zfragset_\zCADcell$ is $3$. Building such a data structure for all cells of $\ZCAD_F$ and proceeding as in Section~\ref{subsec:rs}, we obtain the following:
\begin{lemma}
	\label{lem:zwide-time}
	Let $\W$ be a set of $n$ wide plates for some $2$-cell $\zcell$ of $Z(F)\setminus Z(G)$, 
	and let $\Gamma$ be a family of constant-degree parametrized algebraic arcs.
	$\W$ can be preprocessed, in $O^*(n)$ expected time, into a data structure of $O^*(n)$ size, 
	so that an intersection query on $\W$ within $\sigma$ with an arc in 
	$\Gamma$ that lies in $Z(F)$ can be answered in $O^*(n^{2/3})$ time. 
\end{lemma}
\section{Arc-Intersection Queries amid Triangles with Near-Linear Storage}
\label{sec:linst}

In this section we describe a near-linear-size data structure for answering arc-intersection queries amid a set~$\T$ of $n$~triangles in~$\reals^3$. 
Our main contribution is to show that, despite the default parametric 
dimension of a triangle in~$\reals^3$ being~$9$ (which results, e.g., by specifying the coordinates of each of 
its three vertices, but see below for a different parametrization), the reduced parametric dimension 
with respect to constant-degree parametrized 
algebraic arcs is only $5$, which immediately leads to an intersection-searching data structure
with $O^*(n^{4/5})$ query time, using $O^*(n)$ space.

We represent a non-vertical triangle\footnote{%
  Vertical triangles have only $8$ degrees of freedom, and can be handled by a similar, and somewhat simpler,
  technique; we omit here the easy details.}
$\Delta$ in $\reals^3$, with edges $e_1, e_2, e_3$, by the $9$-tuple 
$\Delta^* = (\xi, \mu, \nu) \in \reals^3\times\reals^3\times\reals^3$, 
where $\xi$ is the point dual to the plane $h_\Delta$ supporting $\Delta$, and
$\mu=(\mu_1, \mu_2, \mu_3)\in\reals^3$, $\nu = (\nu_1, \nu_2, \nu_3)\in\reals^3$ are defined so that,
for $i=1,2,3$, $(\mu_i, \nu_i)$ specifies the line supporting the edge $e_i$ (within the plane $h_\Delta$). 
As in Section~\ref{sec:wide}, we use $\ES_3$ to denote the set of all non-vertical planes, each
represented by its dual point. Recall that the family $\Gamma$ of query arcs is defined in 
Section~\ref{sec:main} (see eq.~(\ref{eq:arc-family})) as follows:
\[
\Gamma \coloneqq \{ \gamma (\delta, \alpha^-, \alpha^+) \mid 
\delta \in \ES_t\; \mbox{and}\; \alpha^-, \alpha^+ \in \reals \} ,
\]
where $t$ is the parametric dimension of curves supporting the arcs in $\Gamma$, and
$\ES_t$ is the $t$-dimensional parametric space of these curves.

Let $\Delta \in \T$ be a triangle, let $\gamma \coloneqq \gamma (\delta,\alpha^-,\alpha^+) \in \Gamma$ be an arc, and let $\sigma_\delta$ be the curve supporting the arc $\gamma$, where 
$\sigma_\delta(\alpha)=(x_\delta(\alpha), y_\delta(\alpha), z_\delta(\alpha))$ for $\alpha\in\reals$. 
Then $\gamma$ intersects $\Delta$ if and only if (i)
$\gamma$ intersects $h_\Delta$, and (ii)
one of the intersection points of $\gamma \cap h_\Delta$ lies, within $h_\Delta$,
on the positive side of each of the lines $\ell_1$, $\ell_2$, $\ell_3$
supporting the respective edges $e_1$, $e_2$, $e_3$ of~$\Delta$, where
the positive side of a line in~$h_\Delta$ is the halfplane of~$h_\Delta$ bounded by
the line and containing $\Delta$. For technical reasons, we rewrite these conditions as follows:
\smallskip

There exists $\alpha\in\reals$ such that
\begin{itemize}
\item[($I_1$)] 
	$\curve_\delta(\alpha) \in h_\Delta$.
\item[($I_2$)]
	$\alpha^- \le \alpha\le \alpha^+$ (this sub-condition is vacuous when $\gamma$ is a full algebraic curve).
\item[($I_3$)] 
	$\curve_\delta(\alpha)$ lies on the positive side of $\ell_1$.
\item[($I_4$)] 
	$\curve_\delta(\alpha)$ lies on the positive side of $\ell_2$.
\item[($I_5$)] 
	$\curve_\delta(\alpha)$ lies on the positive side of $\ell_3$.
\end{itemize}

A major technical issue that arises in expressing these conditions as semi-algebraic predicates is that 
$\curve_\delta$ may intersect $h_\Delta$ in 
several points, and we need to test ($I_2$)--($I_5$) together for each intersection point separately.
That is, for each of these tests, we need to specify which point (i.e., which 
value of $\alpha$) is to be used, and all four subpredicates ($I_2$)--($I_5$) must use the same value.
More specifically, ($I_1$) gives $\alpha$ as a root of the algebraic equation
$z_{\delta}(\alpha) = a x_{\delta}(\alpha) + b y_{\delta}(\alpha) +c$, where $(a,b,c)$ 
are the coefficients of $h_\Delta$, and we need to specify which root to use in 
testing for conditions ($I_2$)--($I_5$).

To address this problem, we partition the product space $\ES_3 \times \ES_t$ (in which a point $(\xi,\delta)$ 
corresponds to a pair of a plane $h_\xi$ and a curve $\sigma_\delta$)  into $O(1)$ semi-algebraic 
regions such that for all pairs $(\xi, \delta) \in \ES_3\times \ES_t$ lying in the same region, 
the intersection points of the plane $h_\xi$ and the curve $\sigma_\delta$ have the same topological structure,
and each of the intersection points has a fixed-size semi-algebraic encoding that is independent 
of $\xi$ and $\delta$. This encoding enables us to 
express each of the above conditions as a semi-algebraic predicate. We now describe the details.

Let $G \colon \ES_3\times\ES_t\times \reals \rightarrow \reals$ be the algebraic function 
\[
G(a,b,c; \delta; \alpha) \coloneqq z_\delta(\alpha) - a x_\delta(\alpha) - b y_\delta(\alpha) - c,
\]
where $x_\delta, y_\delta, z_\delta$ are the functions that define the curve $\sigma_\delta$
represented by the point $\delta\in\reals^t$. 

We construct a CAD $\CAD_{t+4}$ of $\ES_3 \times \ES_t \times \reals$ induced by $G$, 
where the order of coordinate elimination is first $\alpha$, then $\delta$, and then $c,b,a$.\footnote{%
  Technically, the functions $x_\delta$, $y_\delta$, $z_\delta$ are algebraic functions, not 
  necessarily polynomials, and the CAD construction is defined over polynomials. To address 
  this issue formally, we need to construct a CAD over 
  $\tilde{G}(a,b,c;x,y,z) \coloneqq z - ax - by - c$ and the three polynomials $P_x(x,\delta,\alpha)$, 
  $P_y(x,\delta,\alpha)$, $P_z(z,\delta,\alpha)$ that define the algebraic functions $x_\delta, y_\delta, z_\delta$, 
  respectively.
  This calls for a CAD in $t+7$ dimensions, but this does not affect the algorithm in any significant 
  way. For the sake of simplicity, we describe the construction under the assumption
  of $\delta$ having a polynomial parametrization.} 
Each cell of $\CAD_{t+4}$ has a constant-size semi-algebraic representation analogous to~\eqref{eq:cad}.

Many of the technical details of the structure of $\CAD_{t+4}$ are similar to those 
used for the CAD introduced in Section~\ref{sec:wide}, 
but we briefly sketch them because the setup is different here.
Let $\CAD_3$ denote the projection of $\CAD_{t+4}$ onto the object space $\ES_3$, and let $\CAD_{t+3}$
denote its projection onto the $(t+3)$-dimensional space $\ES_3 \times \ES_t$. 
The latter gives the desired decomposition of the product space $\ES_3 \times \ES_t$. 
For a cell $C\in\CAD_{t+4}$, let $C^\Downarrow$ (resp., $C^\downarrow$) denote 
its projection onto $\ES_3$ (resp., onto $\ES_3\times \ES_t$).
For each point $\xi\in\ES_3$, denote by $\fiber^{t+1}(\xi)$ the 
$(t+1)$-dimensional fiber of $\CAD_{t+4}$ over $\xi$, and
for each point $(\xi,\delta) \in \ES_3\times\ES_t$ denote by $\fiber^1(\xi,\delta)$ 
the one-dimensional fiber of $\CAD_{t+4}$ over $(\xi,\delta)$.
Each zero-dimensional cell (a point) in $\fiber^1(\xi,\delta)$ that lies in $Z(G)$ 
corresponds to (the value of $\alpha$ of) an intersection point of the curve $\curve_\delta$ 
with the plane $h_\xi$. For all pairs $(\xi,\delta)$ in a cell $\psi \in \CAD_{t+3}$, 
$\fiber^1 (\xi,\delta)$ has a fixed combinatorial structure. In particular, all of these
pairs have the same number of zero-dimensional cells in their fibers, and thus the same number 
of intersection points. Among the cells of $\CAD_{t+4}$ that lie in $Z(G)$ and in the cylinder over $\psi$, i.e.,
they project 
to $\psi$, the $i$th cell, say, $C$, corresponds to the $i$th intersection point of 
$\curve_\delta\cap h_\xi$ for $(\xi,\delta)\in\psi$. Therefore the $i$th intersection point 
of $\sigma_\delta$ abd $h_\xi$ (or rather the value of $\alpha$ corresponding to that point) has a 
fixed semi-algebraic encoding 
$\varphi_C(\xi;\delta)$ of constant complexity, over $(\zeta,\delta)\in \psi$, which will be used for expressing 
the intersection condition.

Let $\CAD^0$ be the collection of the cells of $\CAD_{t+4}$ that lie in $Z(G)$, 
and, for a base cell $\psi \in \CAD_3$, let $\CAD_\psi^0 \subset \CAD^0$ be the 
subset of cells $C$ with $C^\Downarrow = \psi$.
Fix a cell $C\in\CAD^0$ and an arc $\gamma = (\delta, \alpha^-,\alpha^+) \in \Gamma$.
We describe semi-algebraic predicates $\Pi_{C,\gamma}^{(i)}$, for $1 \le i \le 5$. For a triangle 
$\Delta=(\xi,\mu,\nu)$ with $\xi \in C^\Downarrow$, $\Pi_{C,\gamma}^{(i)}=1$ if and only if the condition 
($I_i$) holds for the intersection point 
of $h_\Delta$ and the curve $\curve_\delta$ corresponding to $C$. 
Set $\varphi_{C,\gamma} (\xi)  = \varphi_C(\xi, \delta)$.

\paragraph{The predicate $\Pi_{C,\gamma}^{(1)}$.}
Since $\xi\in C^\Downarrow$, there exists an $\alpha\in\reals$ such that $\curve_\delta(\alpha)$ is an intersection point
with $h_\xi$ corresponding to $C$ if and only if 
$(\xi,\delta)$ lies in $C^\downarrow \in \CAD_{t+3}$, i.e., $\delta = (\delta_1,\ldots,\delta_t)$ 
satisfies equalities or inequalities of the form
\begin{align} \label{eq:abc-fiber}
\delta_1 & = f_1(\xi) & \text{or} && f_1^-(\xi) &< \delta_1 < f_1^+(\xi) \nonumber \\
\delta_2 & = f_2(\xi;\delta_1) & \text{or} && f_2^-(\xi;\delta_1) &< \delta_2 < f_2^+(\xi;\delta_1) \\
& & \vdotswithin{\text{or}} & \nonumber \\
\delta_t & = f_t(\xi;\delta_1,\ldots,\delta_{t-1}) & \text{or} &&
f_t^-(\xi;\delta_1,\ldots,\delta_{t-1}) &< \delta_t < f_t^+(\xi;\delta_1,\ldots,\delta_{t-1}) \nonumber ,
\end{align}
for suitable constant-degree continuous algebraic functions $f_1,f_1^-,f_1^+,\ldots,f_t,f_t^-,f_t^+$ over $C^\Downarrow$. 
Since $\gamma$ and thus $\delta$ is fixed, each $f_i$ can be regarded as being defined over $\ES_3$.
Therefore this set of equalities and inequalities defines the desired
semi-algebraic predicate $\Pi_{C,\gamma}^{1}$ in $\ES_3$. 

\paragraph{The predicate $\Pi_{C,\gamma}^{(2)}$}
To test the condition ($I_2$) for the intersection point corresponding to $C$, we simply need to check whether
$\varphi_{C,\gamma}(\xi)$ lies between $\alpha^-$ and $\alpha^+$. So  we define the predicate $\Pi_{C,\gamma}^{(2)}$ as 
\[
\Pi_{C,\gamma}^{(2)} (\xi)  \coloneqq \alpha^- \le \varphi_{C,\gamma}(\xi) \le \alpha^+.
\]
Since $\varphi_{C,\gamma}$ is an algebraic function of bounded degree, $\Pi_{C,\gamma}^{(2)}$ is a constant-complexity 
semi-algebraic predicate over $\ES_3$.

\paragraph{The predicates $\Pi_{C,\gamma}^{(3)}$, $\Pi_{C,\gamma}^{(4)}$, $\Pi_{C,\gamma}^{(5)}$.}
In the preceding two predicates, we only used $\xi$, the three parameters that define the plane supporting $\Delta$. 
We now use, sparingly, the parameters that define its boundary edges, one edge per predicate. 
For a point $\xi\in \ES_3$, let $\chi_{C,\gamma} (\xi) = \curve_\delta (\varphi_{C,\gamma}(\xi))$ 
be the intersection point of $\curve_\delta$ and $h_\xi$ corresponding to $C$ (this point is 
well defined only if ($I_1$) holds). For $i=1,2,3$, let $(\mu_i,\nu_i)$ denote the two parameters 
that specify (the line supporting) the edge $e_i$ (within the plane $h_\Delta$). Hence, we need 
to test whether $\chi_{C,\gamma}(\xi))$ lies on the positive side of $\ell_i$, for each $i=1,2,3$. 
We thus define the predicate $\Pi_{C,\gamma}^{(i+2)}(\xi,\mu_i,\nu_i)$ that is $1$ if
$\chi_{C,\gamma}(\xi))$ lies on the positive side of $\ell_i$ and $0$ otherwise.
Since $\chi_{C,\gamma}$ is an algebraic function of bounded degree, 
$\Pi_{C,\gamma}^{(i+2)}$ is a semi-algebraic predicate of constant complexity 
in a five-dimensional parametric space, each point of which represents a plane 
$h_\xi$ and a line $\ell_i$ within that plane.

Set $\Pi_{C,\gamma} (\Delta^*) = \bigwedge_{i=1}^5 \Pi_C^{(i)} (\Delta^*)$. 
For a base cell $\psi \in \CAD_3$, let  
$\T_\psi \coloneqq \{ \plate \in  \T \mid \plate^* \in \psi\}$.
Putting all pieces together, we obtain the following lemma:
\begin{lemma}
	\label{lem:tri-rpd}
	For any base cell $\psi\in\CAD_3$ and for any cell $C\in\CAD^0_\psi$, 
	the intersection condition between an arc of $\Gamma$ and a triangle of $\T_\psi$ can be expressed as 
	a Boolean predicate in which each polynomial inequality uses at most five of the nine parameters 
	defining the triangle.
\end{lemma}

We construct an arc-intersection-searching data structure for $\T$ as follows. For each base cell 
$\psi \in \CAD_3$ and for every $C \in \CAD_\psi^0$, we preprocess $\T_\psi$ into a data structure of size $O^*(|\T_\psi|)$ for answering $\Pi_{C,\gamma}$-queries, as described in Section~\ref{subsec:Pi-lin}.

To answer an intersection query with an arc $\gamma\in \Gamma$, for each cell $C\in \CAD^0$, we query the data structure
$\Psi_C$ with $\gamma$. By Lemma~\ref{lem:tri-rpd} and Theorem~\ref{lem:Pi-lin}, the query time is $O^*(n^{4/5})$.
We note that for each intersection point of $\gamma$ and a triangle $\Delta$, there is a unique CAD cell at which 
this intersection point will be reported, so our data structure can also answer intersection-counting queries.
In summary, we obtain the following result.
\begin{theorem}
  \label{thm:linsto}
  A set $\T$ of $n$ triangles in $\reals^3$ can be processed, in expected $O^*(n)$ time, into a 
  data structure of size $O^*(n)$, so that an arc-intersection query amid the triangles of $\T$,
  with arcs from some family of constant-degree algebraic arcs, can be answered in $O^*(n^{4/5})$ time.
\end{theorem}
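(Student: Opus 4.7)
The plan is to combine the five-level decomposition constructed in this section with standard multi-level semi-algebraic range searching. The construction expresses the predicate ``$\gamma$ intersects $\Delta$'' as the conjunction of conditions (i)--(v), and each condition is handled by one level of a partition tree. Levels 1 and 2 are three-dimensional semi-algebraic queries in the $abc$-space (specifying the plane $h_\Delta$), while levels 3, 4, 5 are five-dimensional semi-algebraic queries in the $(a,b,c,\mu,\nu)$-space, where $(\mu,\nu)$ specifies, within $h_\Delta$, one of the three lines supporting the triangle's edges.

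The crucial ingredient that makes the decomposition correct is the CAD $\Xi_{t+4}$: because the curve $\delta$ supporting $\gamma$ may meet a plane in several points, we must ensure that all five sub-predicates refer to the \emph{same} intersection point. Restricting to a single cell $C\in\Xi_{t+4}$ whose $\xi$-component is a zero-section of $G$ fixes one such intersection and yields an explicit algebraic formula $\xi=\varphi_C(a,b,c;\delta)$ that feeds uniformly into all subsequent levels. Since $\Xi_{t+4}$ has only $O(1)$ cells (as $t$ and the degree are constant) and each triangle is stored at $O(1)$ of them, we have $\sum_C|\T_C|=O(n)$, so the partition trees built over $\{\T_C\}_C$ have total size $O^*(n)$ and can be built in expected $O^*(n)$ time.

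For the query-time analysis I would invoke the standard bound that near-linear-size semi-algebraic range searching on $m$ points in $\reals^d$ with a constant-complexity range costs $O^*(m^{1-1/d})$ per query (see \cite{Ag:rs,AMS,MP}), and that the multi-level composition preserves near-linear storage as well as the per-level query bound up to subpolynomial factors. The two three-dimensional levels contribute $O^*(n^{2/3})$ each, and the three five-dimensional levels contribute $O^*(n^{4/5})$ each, so the dominant cost is $O^*(n^{4/5})$. Iterating over the $O(1)$ relevant cells $C\in\Xi_{t+4}$ only multiplies the total by a constant, giving the claimed bounds.

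The main obstacle, already resolved by the CAD, is the consistency of the parameter $\xi$ across the five levels: without a uniform labeling of intersection points, different levels could test against different roots of $G(a,b,c;\delta;\xi)=0$, producing false positives or negatives. Once $\xi=\varphi_C(a,b,c;\delta)$ is pinned down within each cell, each sub-predicate becomes a constant-complexity semi-algebraic condition in three or five real parameters, and the theorem follows by routine application of the range-searching machinery.
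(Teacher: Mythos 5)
Your proposal matches the paper's own proof essentially step for step: the same decomposition of the intersection predicate into conditions (i)--(v), the same CAD $\Xi_{t+4}$ of $G(a,b,c;\delta;\xi)$ used to pin down a consistent root $\xi=\varphi_C(a,b,c;\delta)$ across all levels, the same two three-dimensional levels in $abc$-space followed by three five-dimensional levels in $(a,b,c,\mu,\nu)$-space, and the same $O^*(m^{1-1/d})$ accounting giving the dominant $O^*(n^{4/5})$ term. The argument is correct and no further changes are needed.
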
 
\section{Plate-Intersection Queries amid Arcs}
\label{sec:lines}

We now move to the second type of intersection queries, in which the roles of
the input and query objects are interchanged: the input now consists of a 
collection of constant-degree algebraic arcs, and we query with plates of constant complexity.
Let $\Gamma$ be a set of $n$ constant-degree parametrized algebraic arcs as defined in~\eqref{eq:arc-family},
and let $\TT$ be a family of plates in $\reals^3$ as defined in~\eqref{eq:plate-family} in Section~\ref{sec:main}.
The goal is the same: answer intersection queries on the input arcs of $\Gamma$ with plates in $\TT$.
We begin by considering a simple case in which the input consists of a set of lines in $\reals^3$, 
then adapt this data structure for the case when the input consists of a set of line segments in $\reals^3$, 
and finally extend this data structure to handle the general case when the input consists of a set of 
constant-degree (not necessarily planar) algebraic arcs in $\reals^3$. As in the previous sections, for concreteness, we focus on 
intersection-detection queries, but the data structure can also answer reporting and counting queries as well,
with similar performance bounds. 

\subsection{The case of lines}
\label{subsec:line-input}

Let $L$ be a set of $n$ lines in $\reals^3$.
We construct a partition tree $\Psi$ on $L$ based on the polynomial partitioning technique 
of Guth~\cite{Guth}, as in Section~\ref{sec:main}. 
Each node $v\in \Psi$ is associated with a cell $\tau_v$ of some partitioning polynomial 
and a subset~$L_v\subseteq L$ of lines that cross~$\tau_v$. If $v$ is the root of $\Psi$, then 
$\tau_v=\reals^3$ and $L_v= L$. Set $n_v \coloneqq |L_v|$. We fix a sufficiently large constant~$D$ and
a threshold parameter $n_0 \le n$; for now $n_0$ is assumed to be a constant but later we will 
set it to depend on $n$. 

Suppose we are at a node $v$. If $n_v\le n_0$ then $v$ is a leaf and we store $L_v$ at $v$.
Otherwise we construct, in linear time, a partitioning polynomial $F_v$ of degree $O(D)$,
so that each cell of $\reals^3\setminus Z(F_v)$ is crossed by at most $n/D^2$ lines of $L$; 
see~\cite{AAEZ,AEZ}. The number of cells in $\reals^3\setminus Z(F_v)$ is $O(D^3)$.
Let $L_v^0 \subseteq L_v$ be the set of lines that are contained in~$Z(F)$.
We construct a secondary data structure $\ZDS_v$, described later in this section, to answer intersection queries:
(i)~on~the~lines of $L_v^0$ with any plate in $\TT$, and (ii)~on~$L_v$ with 
plates that lie in $Z(F)$. By Lemma~\ref{lem:line-zero} below, 
these structures require $O^*(n_v)$ space and can answer a query in $O^*(n_v^{1/2})$ time.

Let $\tau$ be a cell of $\reals^3\setminus Z(F_v)$. We create a child $w_\tau$ of $v$ associated 
with $\tau$. Let $L_\tau \subseteq L_v \setminus L_v^0$ be the set of lines of $L_v$ that intersect 
$\tau$. As in Section~\ref{sec:main}, we call a plate $\plate$ \emph{narrow} at 
$\tau$ if an edge of $\plate$ crosses $\tau$ and \emph{wide} if $\plate$ crosses $\tau$ but none of its edges does. 
We construct a secondary data structure $\WDS_\tau$ on the lines of $L_\tau$ for answering intersection 
queries (within $\tau$) with a plate that is wide at $\tau$. This structure too uses $O^*(n_v)$ space 
and answers a query in $O^*(n_v^{1/2})$ time (see Lemma~\ref{lem:line-wide} below). Finally, we 
recursively construct the subtree $\Psi_{w_\tau}$ on $L_\tau$ 
(to answer intersection queries with a plate that is narrow at $\tau$) and attach it to $w_\tau$.

Let $\plate$ be a query plate. An intersection query with $\plate$ is answered by visiting $\Psi$ in a 
top-down manner. Suppose we are at a node $v$. If $v$ is a leaf, then we test all lines of $L_v$ for 
intersection with $\plate$. If $v$ is an interior node, we first query $\ZDS_v$ to test whether $\plate$ 
intersects any line of $L_v^0$. If $\plate \subset Z(F)$, we again use $\ZDS_v$ to test whether $\plate$ 
intersects any line of $L_v\setminus L_v^0$. If $\plate\not\subset Z(F)$, 
we go over each cell $\tau$ of $\reals^3 \setminus Z(F)$ 
that $\plate$ crosses. If $\plate$ is wide at $\tau$ we query $\WDS_\tau$ with $\plate$. Otherwise $\plate$ 
is narrow at $\tau$, and we recursively search with $\plate$ in the subtree of $\Psi_v$ corresponding to $\tau$.
We stop as soon as one of the subprocedures detects an intersection.

This completes the description of the overall data structure $\Psi$ and the query procedure. It remains to
describe the two secondary data structures, and to analyze the performance of $\Psi$.

\paragraph{Querying with wide plates.}

Let $L$ be a set of $n$ lines in $\reals^3$, $F$ a partitioning polynomial, and $\TT$ the family of query plates.
For each cell $\tau$ of $\reals^3 \setminus Z(F)$, we construct a data structure $\WDS_\tau$ for preprocessing 
the subset $L_\tau$ of lines crossing $\tau$ that supports intersection queries (within $\tau$) with plates
in $\TT$ that are wide at $\tau$.

We construct a CAD $\CAD$ of $\reals^3$ induced by $F$
(see again \cite{BPR,Col,SS2} and Section~\ref{subsec:CAD} for details). 
Each cell $\pi$ of $\CAD$ is fully contained in some cell of $\A(F)$, so $\CAD$ is a refinement of $\A(F)$.
As in Section~\ref{sec:wide}, each three-dimensional cell $\pi$ is a
prism-like cell (we refer to it simply as a \emph{prism}) of the form
\begin{align*} 
\alpha_1 & < x < \alpha_2 \\ 
f_1(x) & < y < f_2(x) \\
g_1(x,y) & < z < g_2(x,y) ,
\end{align*}
where $\alpha_1$, $\alpha_2$ are reals, and $f_1$, $f_2$, $g_1$, $g_2$ 
are continuous algebraic functions of constant degree (which depends on~$D$), each defined over the range determined by the preceding inequalities; 
some of $\alpha_1$, $\alpha_2$ and these functions might be $\pm\infty$. 
(As before, some of these inequalities are equalities for lower-dimensional cells of $\CAD$.)
In general, $\pi$ has six two-dimensional faces,
contained respectively in the algebraic surfaces $x=\alpha_1$, $x=\alpha_2$,
$y = f_1(x)$, $y = f_2(x)$, $z = g_1(x,y)$ and $z = g_2(x,y)$. Each face
is simply connected, monotone (with respect to a suitable coordinate plane),
and of constant complexity (again, which depends on~$D$); $\pi$ has fewer faces 
when some of $\alpha_1$, $\alpha_2$, $f_1$, $f_2$, $g_1$, $g_2$ are $\pm\infty$.

We build a data structure $\WDS_\pi$ for each cell $\pi$ of $\CAD$, and denote by $\WDS_\tau$ 
the collection of the structures $\WDS_\pi$ over all cells $\pi\in\CAD$ that are contained in $\tau$. 
We focus on three-dimensional cells---the data structure is trivial for 
zero- and one-dimensional cells, and it is similar and simpler for two-dimensional cells. 
Let $\pi$ be a three-dimensional CAD prism that is contained in $\tau$. Let $L_\pi$ be the set of line segments 
obtained by clipping the lines of $L$ within $\pi$. Note that a line may contribute up to $O(1)$ 
segments to $L_\pi$. 

Let $\plate$ be a plate that is wide at $\tau$ (if it crosses $\pi$). Then clearly $\plate$ is also wide at $\pi$.
Let $\pi(\plate)$ be the subdivision of $\pi$ induced by $\plate$ (i.e., each face of $\pi(\plate)$ is the maximal 
connected portion of a face of $\pi$ that either lies in $\plate$ or does not intersect $\plate$).
It is important that we decompose $\pi$ by $\plate$ and not by the plane 
supporting $\plate$---see below. The complexity of $\pi(\plate)$ is a constant that depends 
on $D$. The following lemma, in which we assume general position of the segments, is the crucial observation:
\begin{lemma} \label{lem:cutcad} 
A segment $e$ of $L_\pi$ intersects $\plate$ if and only if its endpoints lie
on the boundary of different three-dimensional cells of $\pi(\plate)$.
\end{lemma}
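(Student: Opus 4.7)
}

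The plan is to prove both implications separately, with the nontrivial direction reduced to a topological statement about how $\Delta$ sits inside the CAD prism $\pi$. First I would record a key structural fact: because $e$ is a subsegment of a line and $\Delta$ is contained in the single plane $h_\Delta$, the supporting line of $e$ meets $h_\Delta$ in at most one point (assuming, by a standard genericity argument, that $e \not\subset h_\Delta$), so $e \cap \Delta$ has cardinality at most $1$.

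For the ``no intersection implies same cell'' direction, the argument is immediate: if $e\cap\Delta = \emptyset$ then $e\subseteq \pi \cup \partial\pi$ and the relative interior of $e$ lies in $\pi \setminus \Delta$; since this relative interior is connected, it is contained in a single 3-cell $C$ of $\pi(\Delta)$, and therefore both endpoints of $e$ lie on $\partial C$.

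For the converse, suppose $e\cap\Delta = \{p\}$. Because $\Delta$ is wide at $\tau\supseteq\pi$, no edge of $\Delta$ crosses $\pi$, so $\partial\Delta\cap\pi=\emptyset$; together with genericity this places $p$ in the relative interior of $e$. Write $e\setminus\{p\} = e_1 \cup e_2$, where each $e_i$ is an open subsegment lying entirely in $\pi\setminus\Delta$, hence in a single 3-cell $C_i$ of $\pi(\Delta)$. I would then argue that, near $p$, the segments $e_1$ and $e_2$ lie on opposite sides of the plane $h_\Delta$, so $C_1\neq C_2$ \emph{locally}. The heart of the proof is to promote this to a global inequality.

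The key step, and the one I expect to be the main obstacle, is the following topological claim: $C_1\neq C_2$ as global 3-cells of $\pi(\Delta)$. I would prove this by contradiction. If $C_1=C_2$, then there exists a path $\gamma\subset \pi\setminus\Delta$ connecting a point of $e_1$ to a point of $e_2$. Concatenating $\gamma$ with the portion of $e$ between these two points produces a closed loop $L$ in $\pi$ that meets $\Delta\cap\pi$ transversally exactly once (at $p$). I would then invoke the fact that the CAD cell $\pi$ is homeomorphic to an open $3$-ball (each $2$-face of $\pi$ is simply connected), that $\Delta\cap\pi$ is a properly embedded two-sided surface in $\pi$ with boundary entirely in $\partial\pi$ (using wideness), and that $L$ avoids $\partial(\Delta\cap\pi)$. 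Under these hypotheses, the mod-$2$ intersection number of a loop in $\pi$ with $\Delta\cap\pi$ is a homotopy invariant, and every loop in the simply connected $\pi$ is contractible, so this intersection number must be $0$. But we exhibited a single transverse crossing, giving intersection number $1$, a contradiction. Hence $C_1\neq C_2$, and the endpoints of $e$ (which lie on $\partial C_1$ and $\partial C_2$ respectively) are on the boundary of distinct 3-cells, completing the proof.
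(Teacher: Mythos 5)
Your proof is correct, and it uses the same two ingredients as the paper's: (1) a straight segment meets the plane $h_\Delta$, hence $\Delta$, in at most one point, and (2) a global separation property of $\Delta\cap\pi$ inside the prism $\pi$. The difference is in how ingredient (2) is handled. The paper phrases it as ``each time $e$ crosses $\Delta$ it must move to a \emph{different} three-dimensional cell of $\pi(\Delta)$,'' cites Aronov~\etal~\cite{ArMS} for this ``seemingly obvious'' but nontrivial fact (valid because the cells of $\pi(\Delta)$ are topological balls), and then derives a contradiction from the segment having to cross $\Delta$ twice to return to the same cell. You instead prove the separation property directly: since $\pi$ is an open topological $3$-ball and $\Delta\cap\pi$ is a properly embedded two-sided surface (wideness guarantees $\bd\Delta\cap\pi=\emptyset$, so $\Delta\cap\pi$ is closed in $\pi$ and boundaryless there), the mod-$2$ intersection number with $\Delta\cap\pi$ is a homomorphism $\pi_1(\pi)\to\mathbb{Z}/2$, which vanishes because $\pi$ is simply connected; a loop crossing $\Delta$ transversally exactly once would violate this. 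This is essentially a self-contained proof, in this special case, of the very fact the paper outsources to~\cite{ArMS}; what it buys is independence from that reference (and it makes transparent why the claim fails for non-simply-connected cells such as a torus, as the paper remarks). What the paper's formulation buys is that the cited property holds for arbitrary connected arcs, not just for the single-transversal-crossing situation, which is why the paper can reuse it verbatim and only later observe (Figure~\ref{fig:curveincad}) that the \emph{segment-specific} step---at most one crossing---is the part that breaks for curved inputs. Your write-up correctly isolates that same step as the place where straightness is used.
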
 
\begin{proof}
  Since $\plate$ is wide at $\pi$, its intersection $\plate\cap\pi$ consists
  of one or several connected regions, all fully contained in the relative interior of~$\plate$.
  Consequently, each of these pieces fully slices~$\pi$. Informally,
  the purpose of the lemma is to argue that a point on one side of such a slice 
  cannot reach a point on the other side, within $\pi$, without crossing the slice. 

  The segment $e$ lies inside $\pi$, so if the endpoints
  of $e$ lie in different cells of $\pi(\plate)$, then $e$ has to intersect 
  $\plate$ to go from one cell to another; it has to be through $\plate$ 
  because the relative interior of $e$ does not meet the boundary of~$\pi$. On the other 
  hand, assume that the two endpoints lie in the same three-dimensional 
  cell $\psi$ but $e$ intersects $\plate$. At each such intersection, 
  $e$ has to move from one three-dimensional cell of $\pi(\plate)$ to 
  another cell. See Aronov et al.~\cite{ArMS} for a (nontrivial) proof 
  of this seemingly obvious property. Informally, it holds because each
  cell of~$\pi(\plate)$ is topologically a ball. We note that (i) this 
  property also holds, as shown in \cite{ArMS}, for arbitrary connected 
  arcs $e$, and (ii) this property may fail for more general, non-CAD 
  cells $\pi$ (such as, e.g., a torus).
  See Figure~\ref{fig:epidelta} for an illustration for the case of a CAD cell.
  It follows that $e$ leaves $\psi$ when it crosses $\plate$, and has
  to return to $\psi$, necessarily at a second such crossing. Thus
  $e$ intersects $\plate$ (at least) twice, which is impossible. (Note 
  that this last part of the argument fails when $e$ is not straight, 
  as illustrated in Figure~\ref{fig:curveincad}.)
\end{proof}

\begin{figure}[htb]
  \centering
  \scalebox{0.85}{\input{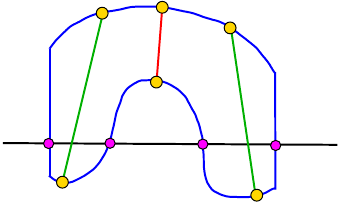_t}}
  \caption{A two-dimensional rendering of the scenario analyzed in Lemma~\ref{lem:cutcad}. 
    The red segment $e$ has endpoints in the same three-dimensional cell of $\pi(\plate)$ 
    and does not cross $\plate$, whereas each of the green segments 
    $e'$, $e''$ has endpoints in different cells of $\pi(\plate)$ and 
    crosses $\plate$.}
  \label{fig:epidelta}
\end{figure}

\medskip

\noindent
\textbf{\textit{Data structure.}}
Exploiting Lemma~\ref{lem:cutcad}, we construct the data structure $\WDS_\pi$ as follows.
$\WDS_\pi$ consists of a family of $O(1)$ partition trees. For a pair of 
(not necessarily distinct) faces $\ph, \ph'$ of $\pi$, let $L_{\ph,\ph'} \subseteq L_\pi$ 
be the subset of segments of $L_\pi$ having one endpoint that lies on $\ph$ and the other on $\ph'$.
Let $E_\ph^{\ph'}$ (resp., $E_{\ph'}^\ph$) be the set of endpoints of the segments of $L_{\ph,\ph'}$ 
that lie on $\ph$ (resp., $\ph'$).\footnote{%
  If $\ph$ and $\ph'$ are the same face, then we regard them as two copies of that face and assign one of 
  the endpoints of each segment of $L_{\ph,\ph'}$ to $\ph$ and the other to $\ph'$.}
Set $n_{\ph,\ph'} = |L_{\ph,\ph'}|$. We define a predicate $\Pi_{\ph,\ph'}$ as follows.

Let $\psi \subseteq \ph, \psi' \subseteq \ph'$ be two faces of $\pi(\plate)$ for some plate $\plate \in \TT$ 
that is wide for $\pi$, and let  $e=pq \in L_{\ph,\ph'}$ be a segment with $p\in E_{\ph}^{\ph'}$ and $q\in E_{\ph'}^\ph$.
Then $\Pi_{\ph,\ph'} (e, (\psi,\psi'))=1$ if and only if $p\in\psi$ and $q\in \psi'$. 
Clearly, $\psi, \psi'$ are semi-algebraic sets of constant complexity.
Note that the reduced parametric 
dimension of $L_{\ph,\ph'}$ with respect to $\Pi_{\ph,\ph'}$ is $2$ since each polynomial inequality uses only one of the endpoints of $e$ and one requires two real parameters to describe each endpoint (as it lies on a fixed $2$-dimensional variety).
We preprocess $L_{\ph,\ph'}$, in $O^*(n_{\ph,\ph'})$ expected time, into a data structure $\WDS_{\ph,\ph'}$ of size 
$O^*(n_{\ph,\ph'})$ for answering $\Pi_{\ph,\ph'}$-queries, as described in Section~\ref{subsec:Pi-lin},
so that a query can be answered in $O^*(n_{\ph,\ph'}^{1/2})$ time (see Theorem~\ref{lem:Pi-lin}).
We construct such a data structure for each pair of faces of $\pi$. We then repeat this procedure 
for all cells of $\CAD$ that lie in $\tau$ and for all cells $\tau$ of $\reals^3 \setminus Z(F)$. 
The total size and expected preprocessing time of the data structure are $O^*(n)$.

\paragraph*{Query procedure.}
Let $\plate\in\TT$ be a plate that is wide at a cell $\tau \in \reals^3\setminus Z(F)$. For each CAD cell $\pi\subset\tau$, 
we do the following. First, we construct the decomposition $\pi(\plate)$ of $\pi$ induced by $\plate$. 
Let $\psi, \psi'$ be a pair of faces of $\pi(\plate)$
that lie on the boundaries of two different 
three-dimensional cells of $\pi(\plate)$. 
Let $\ph, \ph'$ be the faces of $\bd\pi$ containing $\psi$ and $\psi'$, respectively.
We query the partition tree $\WDS_{\ph,\ph'}$ with the pair $(\psi,\psi')$  of semi-algebraic ranges to detect 
whether any segment of $L_{\ph,\ph'}$ has one endpoint in $\psi$ and the other in $\psi'$.
We repeat this step for all such pairs of faces of $\pi(\plate)$. By Lemma~\ref{lem:cutcad}, one of these queries 
returns yes if and only if $\plate$ intersects a segment of $L_\pi$.
Furthermore if we return all segments that satisfy the predicate, each segment in the output of any of these 
sub-queries crosses $\plate$ (within $\pi$), and every segment that crosses $\plate$ 
within $\pi$ appears in such an output exactly once over all sub-queries.
As already noted, the cost of a query is $O^*(n^{1/2})$. 
More generally, for a storage parameter $s\in [n,n^\tQ]$, a query can be answered in 
$O^*\biggl ((n/s^{1/\tQ})^{\tfrac{1/2}{1-1/\tQ}} \biggr )$ time using $O^*(s)$ storage and 
preprocessing; see Theorem~\ref{thm:Pi-tradeoff}.
That is, we obtain:
\begin{lemma}
\label{lem:line-wide}
Given a set $L$ of $n$ lines in $\reals^3$ and a partitioning polynomial $F$, 
a data structure of size $O^*(n)$ can be constructed, in $O^*(n)$ expected time, 
so that for a cell $\tau$ of $\reals^3\setminus Z(F)$ and a plate $\plate$ that is wide at $\tau$, 
an intersection query on $L$ (within $\tau$) with $\plate$ can be answered in $O^*(n^{1/2})$ time.
More generally, for a storage parameter $s\in[n,n^\tQ]$, where $\tQ$ is the reduced parametric 
dimension of the query plates, a query can be answered in 
$O^* \biggl ( (n/s^{1/\tQ})^{\tfrac{1/2}{1-1/\tQ}} \biggr)$ time 
using $O^*(s)$ storage and expected preprocessing time.
\end{lemma}
\paragraph{Handling input lines and queries on the zero set.}
Let $L$ be a set of $n$ lines in $\reals^3$ and $F$ a partitioning polynomial, as above.
We compute the $O(1)$ irreducible components of $Z(F)$ using any of the known 
algorithms~\cite{BPR,CLO,GV95,Ka92}, and work with each irreducible component separately.  
Henceforth, without loss of generality, we assume that $Z(F)$ is irreducible.

First consider the case when the lines in $L$ do not lie on $Z(F)$ but the query plate $\plate$ does. 
Since $\plate$ is planar, it follows that $Z(F)$ is a plane. 
Let $E$ be the set of intersection points of the lines of $L$ with $Z(F)$. 
The intersection query on $L$ with a plate 
$\plate \subset Z(F)$ is equivalent to a (planar) range query on $E$ with $\plate$.
Hence, for a storage parameter $s\in[n,n^\tQ]$, where $\tQ$ is the reduced parametric dimension of query plates,
we can construct, in $O^*(s)$ expected time, a data structure of $O^*(s)$ size so that a 
query can be answered in $O^* \biggl((n/s^{1/\tQ})^{\tfrac{1/2}{1-1/\tQ}}\biggr )$ time, 
using Theorem~\ref{thm:Pi-tradeoff}.

Next, we describe the data structure for handling the lines of $L$ that are contained in $Z(F)$.
If $Z(F)$ is not a ruled surface\footnote{%
A \emph{ruled surface} in $\reals^3$ can be described as the set of points swept by a moving straight line. 
A surface is \emph{doubly ruled} if there are two distinct lines through every one of its points 
that lie on the surface. See~\cite{Guth-ruled} for details.
}
then it can contain at most $O(D^2)$ lines 
(this is the Cayley-Salmon theorem; see, e.g.,~\cite{GK},
and see~\cite{Guth-ruled} for a review of ruled surfaces), so only 
$O(1)$ lines of $L$ can lie in $Z(F)$.
For each line $\ell \in L$, we check, during preprocessing, whether $\ell$ is contained in $Z(F)$ 
(say, using B\'ezout's theorem). If the number of such lines does not exceed the Cayley-Salmon threshold, 
then we simply store this set of $O(1)$ lines with $F$. Otherwise, we conclude that $Z(F)$ must be a 
ruled surface and proceed as described below. 
Without loss of generality, assume, for simplicity, that all lines in $L$ lie on $Z(F)$.

If $Z(F)$ is a plane, then intersection searching on $L$ with a plate (which may or may not lie in $Z(F)$)
reduces to an instance of two-dimensional semi-algebraic range searching because the lines lying in $Z(F)$
can be specified by two parameters. Hence, as above, we can answer this query in $O^*(n^{1/2})$ time 
using $O^*(n)$ space and preprocessing.
Consider then the case where the component $Z(F)$ is singly or doubly
ruled by lines. Suppose for specificity that $Z(F)$ is singly ruled; doubly ruled components can be handled 
by a simpler variant of this argument. As observed in \cite{GK}, for example, with the exception of at most 
two lines, all lines that are contained in $Z(F)$ belong to the single 
ruling family, and these lines are parametrized by a single real 
parameter $\theta$. We form the set of the values of $\theta$ that correspond
to the input lines, and preprocess them into a trivial one-dimensional
range searching structure, over the parameter $\theta$, which uses $O^*(n)$ storage and $O(\log n)$
query time. We map a query plate $\plate$ into a range that is a
union of a constant number of intervals along the $\theta$-axis,
representing the values of $\theta$ for which the corresponding line in the ruling crosses~$\plate$.
We then query our structure with each of these intervals.
Testing the (at most) two exceptional lines takes $O(1)$ time.
Putting everything together, we obtain the following:
\begin{lemma}
\label{lem:line-zero}
Given a set $L$ of $n$ lines in $\reals^3$ and a partitioning polynomial $F$, a data structure of size 
$O^*(n)$ can be constructed, in $O^*(n)$ expected time, so that for plates of constant complexity,
a plate-intersection query amid the lines of $L$ that lie in $Z(F)$,
or amid all lines of $L$ with a query plate that lies in $Z(F)$, 
	can be answered in $O^*(n^{1/2})$ time. More generally, for a storage 
	parameter $s\in [n,n^\tQ]$, an intersection query can be answered in 
	$O^* \biggl ((n/s^{1/\tQ})^{\tfrac{1/2}{1-1/\tQ}} \biggr )$
	time using $O^*(s)$ storage and expected preprocessing time.
\end{lemma}
\paragraph*{Analysis of the full procedure.}

For a node $v$, let $S(n_v)$ be the maximum size of the subtree $\Psi_v$ rooted at $v$ 
and constructed on the $n_v$ lines of $L_v$. Since the secondary data structures stored at $v$ use $O^*(n_v)$ 
space, and each cell of $\reals^3 \setminus Z(F_v)$ intersects at most $n_v/D^2$ lines, where $F_v$
is the partitioning polynomial used at $v$, we obtain the following recurrence for $S(n_v)$:
\[
S(n_v) \le 
 \begin{cases*}
	 \displaystyle c_2D^3 \cdot S\left (\frac{n_v}{D^2}\right ) + c_3 n_v^{1+\delta} & for $n_v > n_0$, \\[1mm]
    	  c_4  n_v & for $n_v\le n_0$,
  \end{cases*}
\]
where $c_2, c_3, c_4, \delta$ are constants analogous to those in~\eqref{eq:storage}.
Following the same analysis as in Section~\ref{subsec:data-structure}, 
the solution of the above recurrence is $S(n_v) = O(n_v^{3/2+\eps})$ for any constant $\eps\ge\delta$, 
using the fact that the threshold $n_0$ is a constant. 
The overall size of the data structure is thus $O^*(n^{3/2})$. A similar argument shows that the expected 
preprocessing time is $O^*(n^{3/2})$.

The maximum cost $Q(n_v)$ of a query at $\Psi_v$ obeys the following recurrence:
\[
Q(n_v) \le 
\begin{cases*}
	\displaystyle c_5 D \cdot Q \left (\frac{n_v}{D^2}\right) + c_6 n_v^{1/2+\delta} & for $n_v> n_0$, \\[1mm]
    	  c_7 n_v & for $n_v\le n_0$,
\end{cases*}
\]
where $c_5, c_6, c_7, \delta$ are constants analogous to those in~\eqref{eq:query}.
Following an analysis similar to that in Section~\ref{subsec:plate-query}, the overall query time is $O^*(n^{1/2})$.
In summary, we have shown:

\begin{lemma}
  \label{thm:lines}
  A set $L$ of $n$ lines in $\reals^3$ can be preprocessed into a data structure of size~$O^*(n^{3/2})$, 
  in expected time $O^*(n^{3/2})$, so that, for any query plate $\plate$ of constant complexity, 
  we can perform an intersection query with $\Delta$ in $L$ in $O^*(n^{1/2})$ time.
\end{lemma} 
\paragraph{Space/query-time trade-off.}
We obtain a space/query-time trade-off as in Section~\ref{sec:trade-off}, by combining the above partition tree in
$\reals^3$ with the general semi-algebraic-relation searching technique described in Appendix~\ref{app:multi-level}.
Let $s \in [n,n^{\tQ }]$ be a storage
parameter, where $\tQ $ is the reduced parametric dimension of the plates in $\TT$ for the intersection predicate with lines.
We choose the same threshold value $n_0 \coloneqq n_0(s)$ as in~\eqref{eq:threshold}. We recursively 
construct the partition trees in $\reals^3$ as just described 
until we reach a node $v$ with $n_v \le n_0$, except that we construct the secondary structures at a node $v$ with the storage parameter $s_v$ associated with $v$; $s_v=s$ for the root, and for a child $w$ of $v$, 
$s_w = s_v/D^3$. For a leaf $v$, with $n_v \le n_0$, we proceed as follows:
If $s\le n^{3/2}$, we construct a data structure of size $O^*(n_v)$ 
on $L_v$ for plate-intersection searching as described in Appendix~\ref{subsec:Pi-lin}.
Since a line in $\reals^3$ has parametric dimension~$4$,
a plate-intersection query for $L_v$ is answered in $O^*(n_v^{3/4})$ query time (see Theorem~\ref{lem:Pi-lin}).
Next, if $s>n^{3/2}$,  we construct a data structure on $L_v$ of size $O^*(n_v^\tQ)$ so that a plate-intersection query can be answered in $O^*(1)$ time. 

Following the analysis in Section~\ref{sec:trade-off}, the total size of the data structure is $O^*(s)$,
as is the expected preprocessing time. 
As for the query time, we obtain the following recurrence:
\begin{equation}
  \label{eq:line-plate-query}
  Q(n_v,s_v) \le
  \begin{cases*}
	  \displaystyle c_5 D \cdot Q\left (\frac{n_v}{D^2},\frac{s_v}{D^3}\right ) + 
	  c_6 \left (\frac{n_v}{s_v^{1/\tQ}}\right)^{\frac{1/2}{1-1/\tQ}}n_v^\delta & for $n_v> n_0$, \\[1mm]
	  c_7 n_v^{1/2+\delta} & for $n_v\le n_0$ and $s\le n^{3/2}$,\\[1mm]
	  c_7 n_v^{\delta} & for $n_v\le n_0$ and $s> n^{3/2}$,
  \end{cases*}
\end{equation}
for similar constants $c_5, c_6, c_7$, and $\delta$.
Unlike \eqref{eq:moderate-query} in Section~\ref{sec:trade-off} the recursive term dominates in 
\eqref{eq:line-plate-query}, i.e., the total overhead term at the children of $v$ is larger than that at $v$,
so the overall query time is dominated by the time spent at the leaves of the partition tree. 
The query procedure visits  $O^*((n/n_0)^{1/2})$ leaves and spends $O^*(n_0^{3/4})$ time at each leaf (including the 
time spent at the secondary structures).  Hence, the overall query time is 
$O^*(n^{5/4}/s^{1/2})$ for $s \le n^{3/2}$, and
$O^* \biggl (\frac{n}{s^{1/\tQ}}\biggr)^{\frac{1}{2-3/\tQ}} \biggr )$ for $s>n^{3/2}$.
Hence, we obtain the following result.
\begin{theorem}
  \label{thm:lines-tradeoff}
Let $\TT$ be a family of plates in $\reals^3$ of reduced parametric dimension $\tQ$ 
(with respect to lines). For a storage parameter $s\in[n,n^{\tQ }]$, 
a set $L$ of $n$ lines in $\reals^3$ can be preprocessed into a data structure
of size~$O^*(s)$, in expected time $O^*(s)$, so that, for any query 
plate $\plate\in\TT$, an intersection query with $\plate$ on $L$ can be answered in time
\[
O^* \biggl ( \frac{n^{5/4}}{s^{1/2}} + \biggl (\frac{n}{s^{1/\tQ}}\biggr)^{\frac{1}{2-3/\tQ}} \biggr ).
\]
\end{theorem}
One can easily verify that the first (resp., second) term dominates when $s \le n^{3/2}$ (resp., $s\ge n^{3/2}$).
\subsection{The case of segments}
\label{subsec:seg-input}

Next, we show how we adapt the above data structure to answer plate-intersection queries on a set 
$\E$ of $n$ segments in $\reals^3$. A segment $e=pq$ intersects a plate $\plate$ if and only if 
(i) the endpoints $p$ and $q$ lie on opposite sides of the plane $h_\plate$ supporting $\plate$, 
and (ii) the line $\ell_e$ supporting $e$ intersects $\plate$. 
Let $h_\plate^+, h_\plate^-$ be the two halfspaces bounded by $h_\plate$.
For a line $\ell$ and a plate $\plate$, let $\Pi (\ell, \plate)$ be the predicate that is $1$ 
if and only if $\ell$ intersects $\plate$. The intersection condition between $e$ and $\plate$ 
can thus be expressed as:
\begin{itemize}
\item[(i)] $p \in h_\plate^+$, $q \in h_\plate^-$, and $\Pi(\ell_e,\plate) =1$, or
\item[(ii)] $p \in h_\plate^-$, $q \in h_\plate^+$, and $\Pi(\ell_e,\plate) =1$.
\end{itemize}

Since the intersection condition for segments is obtained by augmenting the intersection condition with 
two polynomial inequalities, each of which uses three parameters of the input segments and of the query plates, 
following the analysis in Section~\ref{subsec:Pi-tradeoff}, we obtain the following result:
\begin{theorem}
  \label{thm:segs-tradeoff}
	Let $\TT$ be a family of plates in $\reals^3$ of reduced parametric dimension $\tQ \ge 3$ (with respect to lines). For a storage parameter 
$s\in[n,n^{\tQ}]$, a set $\E$ of $n$ line segments in $\reals^3$ can be preprocessed into a data structure
of size~$O^*(s)$, in expected time $O^*(s)$, so that an intersection query with a plate $\plate\in\TT$ 
on $\E$ can be answered in time
\[
O^* \biggl ( \frac{n^{5/4}}{s^{1/2}} + \biggl (\frac{n}{s^{1/\tQ }}\biggr)^{\frac{1}{2-3/\tQ}} \biggr ).
\]
\end{theorem} 
\subsection{The case of arcs}
\label{subsec:arc-input}

Finally, let $\G$ be a set of $n$ constant-degree algebraic arcs in $\reals^3$ of reduced 
parametric dimension $\tO$ (with respect to $\TT$), such that any plane contains at most $O(1)$ arcs of $\G$. 
Our overall data structure for answering plate-intersection queries on $\G$ is the same 
as the one in Section~\ref{subsec:line-input}, but we need to adapt various substructures 
so that they can handle arcs (instead of lines). Since the reduced parametric dimension of $\G$ is $\tO$, 
we can construct a data structure of $O^*(n)$ size that can answer a plate-intersection query on $\G$ 
in $O^*(n^{1-1/\tO})$ time. Note that if $\G$ is a set of planar arcs, then we can use the ideas of 
Section~\ref{subsec:planar_imp} to reduce the value of $\tO$ to the reduced parametric dimension of 
the curves supporting the arcs of $\G$. When the threshold value $n_0$ is not a constant, 
we build this structure at each leaf of the partition tree $\Psi$. The more challenging part,
however, is to adapt the secondary data structures $\WDS$ and $\ZDS$, which we describe next.

\paragraph{Answering queries with wide plates.}
Let $\G$ be a set of $n$ arcs as above and $F$ a partitioning polynomial of degree at most $c_1D$,
for an absolute constant $c_1$ and a sufficiently large constant parameter $D$. 
For each cell $\tau$ of $\reals^3\setminus Z(F)$, we build a data structure $\WDS_\tau$ that 
can answer intersection queries on $\G$ with plates that are wide at $\tau$.
We face the following major new issue, which did not arise in the approach described 
in Section~\ref{subsec:line-input}. 
When the input objects were line segments with their endpoints lying on $\bd\tau$, 
Lemma~\ref{lem:cutcad} provided us with a necessary and sufficient condition for a 
query plate $\plate$ to intersect such a segment $e$, namely, that the endpoints of 
$e$ lie in different cells of the arrangement $\pi(\plate)$. However, when the input 
objects are curved arcs, this criterion remains sufficient but in general 
not necessary; see Figure~\ref{fig:curveincad} for an illustration.

\begin{figure}[htb]
  \centering
  \scalebox{0.75}{\input{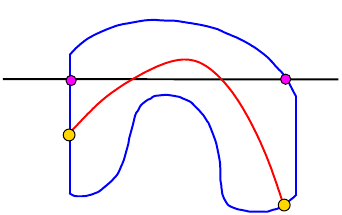_t}}
  \caption{Lemma~\ref{lem:cutcad} may fail when the input consists of curved arcs.
    The arc $\gamma$ has both endpoints in the same cell of $\pi(\plate)$ but it
    still intersects $\plate$.}
  \label{fig:curveincad}
\end{figure}

We therefore use the following different approach, borrowing ideas and tools from Section~\ref{sec:wide}.
Let $\ES_3$, $\ES$, and $\hat F \in \reals[a,b,c,x,y]$ be the same as defined in Section~\ref{sec:wide}.
As before, we construct a CAD $\CAD_5$ of $\ES$ induced by $\hat F$. Let $\CAD_3$ be the projection 
of $\CAD_5$ onto $\ES_3$, and for a point $\xi\in\ES_3$, let $\fiber (\xi)$ be the two-dimensional 
fiber of $\CAD_5$ over $\xi$, and $\fiber^\uparrow(\xi)$ the lifting of $\fiber(\xi)$ to the plane $h_\xi$. 
Recall that $\fiber^\uparrow(\xi)$ is a 
refinement of $\A(F;h_\xi)$ into pseudo-trapezoids. Therefore, for a plate $\plate$ that is wide at $\tau$, 
$\fiber^\uparrow(\plate^*)$ includes a partition of $\plate\cap\tau$ into pseudo-trapezoids.  Each 
pseudo-trapezoid of $\fiber^\uparrow(\xi)$ corresponds to a cell $C$ of $\CAD_5$, denoted by $C^\uparrow(\xi)$,
and has a constant-size discrete label, which is its semi-algebraic representation as defined in~\eqref{eq:encoding}.  
Recall that each cell $C\in \CAD_5$ is associated with a cell $\tau \in \A(F)$, denoted by $\tau_C$.

For each cell $C \in \CAD_5$, let $\Pi_C(\gamma; \xi)$ be the semi-algebraic $C$-intersection 
predicate defined in~\eqref{eq:intersect-pred}, which, for an arc $\gamma$ and a point $\xi\in\ES_3$,
is $1$ if and only if $\xi\in C^\downarrow$ and an intersection point of $\gamma\cap h_\xi$ lies in 
the pseudo-trapezoid $C^\uparrow(\xi)$.  
Using Theorem~\ref{thm:Pi-tradeoff} of the appendix, 
for a storage parameter $s\in [n,n^3]$, we can preprocess $\G$, in $O^*(s)$ expected time, into a data structure 
of size $O^*(s)$ so that a $C$-intersection query with a pseudo-trapezoid $\plate_C$ of a plate that is wide at 
$\tau_C$ can be answered in time 
$O^*\left( (n/s^{1/3})^{\tfrac{1-1/\tO}{2/3}}\right)$, where $\tO$ is the reduced parametric dimension of 
$\G$ for $\Pi_C$.
We again remark that if the arcs in $\G$ are planar, 
then $\tO$ can be taken to be the reduced parametric dimension
of the curves supporting the arcs in $\G$.

For a cell $\tau$ of $\reals^3\setminus Z(F)$, we construct the $C$-intersection searching 
data structure for all CAD cells $C$ with $\tau_C=\tau$, and store it at $\tau$, and we repeat this 
procedure for all cells $\tau$. Since $|\CAD_5|=O(1)$, the total size of the data structure remains $O^*(s)$.

For a query plate $\plate$ that is wide at $\tau$, we answer a plate-intersection query with $\plate$
within $\tau$ by answering $C$-intersection queries on $\G$ for all $C$ with $\tau_C=\tau$,
$\plate^* \in C^\downarrow$, and $C^\uparrow(\plate^*) \subseteq \plate$. The total time in answering 
a query is $O^*\biggl( (n/s^{1/3})^{\tfrac{1-1/\tO}{2/3}} \biggr)$. That is, we have:

\begin{lemma}
\label{lem:arc-wide}
	Let $\TT$ be a family of plates of constant complexity, let $\G$ be a set of $n$ constant-degree
algebraic arcs in $\reals^3$, and let $\tO$ be the  reduced parametric dimension of the arcs in $\G$ or of the
	curves supporting 
	them if they are planar (with respect to $\TT$), 
and let $\tau$ be a cell of $\reals^3\setminus Z(F)$. For a storage parameter $s\in[n,n^3]$, 
$\G$ can be preprocessed into a data structure of size~$O^*(s)$, in expected time $O^*(s)$, 
so that, for any query plate $\plate\in\TT$ that is wide at $\tau$, a plate-intersection query 
	on $\G$ with $\plate$ within $\tau$ can be answered in $O^*\biggl( (n/s^{1/3})^{\tfrac{1-1/\tO}{2/3}} \biggr)$ time.
\end{lemma}
\paragraph{Handling the zero set.}
Consider next the task of handling input arcs or query plates that lie in $Z(F)$. 
As in Section~\ref{subsec:line-input}, without loss of generality, we can assume $Z(F)$ to be irreducible. 
First, assume that $Z(F)$ is a plane. By our general-position assumption, if $Z(F)$ is a plane then it contains only $O(1)$ input arcs, 
and we simply store them and answer a plate-intersection query on them na\"ively, by brute force. 
For arcs of $\G$ not lying in $Z(F)$, an intersection query with a plate of $\TT$ lying in $Z(F)$ 
reduces to a two-dimensional semi-algebraic range query with the query plate,
and thus can be answered in $O^*(n^{1/2})$ time using $O^*(n)$ space.

Next, we consider the case when $Z(F)$ is not a plane. Then no query plate lies in $Z(F)$, and 
we need a data structure for answering arc-intersection queries amid input arcs that lie in $Z(F)$.
So we assume that the arcs of $\G$ lie in $Z(F)$.
We follow the same recursive approach as in Section~\ref{sec:zero-set}.

Specifically, we fix a sufficiently large constant $E=D^{O(1)}$. Using the algorithm in~\cite{AAEZ}, 
we construct a partitioning polynomial $G$, of degree  $O^*(E)$  so that each cell of 
$Z(F)\setminus Z(G)$ is crossed by at most $n/E$ arcs of $\G$. The number of cells in $\A(G;F)$ 
is at most $c_2 E^2$, for a suitable constant $c_2$ that depends on $D$. A plate $\plate$ not lying in $Z(F)$ 
crosses at most $c_3 E$ cells of $\A(G;F)$, for another constant $c_3$, and it is \emph{wide} (i.e., 
the cell does not contain any endpoint of a connected component of $\plate\cap Z(F)$) at all of them except for 
$O(D) = O(1)$ cells where it is \emph{narrow}.

For each one-dimensional cell $\zcell \in \A(G;F)$, we preprocess the input arcs that overlap 
with $\zcell$ (i.e., they lie in $Z(F)\cap Z(G)$) for plate-intersection queries, by preprocessing 
them into a segment tree. Omitting the straightforward details, the size of this data structure is 
$O(n\log n)$ and an intersection query can be answered in $O(\log n)$ time.

For each two-dimensional cell $\zcell$ of $Z(F)\setminus Z(G)$, let $\G_\zcell$ denote the set of 
arcs of $\G$ that intersect $\zcell$.
We build a secondary data structure for answering plate-intersection queries (within $\zcell$) with plates 
that are wide at $\zcell$. This requires constructing a CAD $\ZCAD$ of $\ES$ induced by $\{\hat F, \hat G\}$, 
as in Section~\ref{subsec:z-wide}, and building an arc-intersection-searching data structure for each 
cell $\zCADcell$ of $\ZCAD$. This enables us to work with the plane supporting a query plate at all 
cells where the plate is wide, and thus the parametric dimension of the query plates can be taken
to be $3$ for these cells. Hence, for a storage parameter $s \in [n,n^3]$, the query time is 
$O^*\biggl ((n/s^{1/3})^{\tfrac{1-1/\tO}{2/3}}\biggr)$. 
Finally, we recursively construct the data structure on $\G_\tau$.

To answer a query with a plate $\plate$, we first query the one-dimensional cells $\zcell$ of $\A(G;F)$ 
to answer an intersection query on the arcs that lie in $\zcell$. Next, let $\zcell$ be a two-dimensional 
cell of $\A(G;F)$ that $\plate$ crosses. If $\plate$ is wide at $\zcell$, we use the secondary structure 
stored at $\zcell$ to answer the intersection query within $\zcell$. If $\plate$ is narrow at $\zcell$, 
we recursively search at $\zcell$. There are at most $c_2 D$ cells of $\A(G;F)$ at which $\plate$ is 
narrow, for some absolute constant $c_2  > 0$. Since $E\gg D$, the total query time in answering intersection queries on the arcs lying in 
$Z(F)$ is $O^*\biggl((n/s^{1/3})^{\tfrac{1-1/\tO}{2/3}} \biggr)$, using $O^*(s)$ space and expected preprocessing.
We thus have:
\begin{lemma}
\label{lem:arc-zero}
Let $\TT$ be a family of constant-complexity plates in $\reals^3$, let $F$ be a partitioning polynomial
of constant degree, let $\G$ be a set of $n$ constant-degree algebraic arcs in $\reals^3$  that lie in $Z(F)$, 
	and let $\tO$ be the  reduced parametric dimension of the arcs in $\G$ or of the curves supporting 
	them if they are planar (with respect to $\TT$).
For a storage parameter $s\in[n,n^3]$, $\G$ can be preprocessed into a data structure of size~$O^*(s)$, 
in expected time $O^*(s)$, so that a plate-intersection query with a plate of $\TT$ that does not lie 
	in $Z(F)$ can be answered in $O^*\biggl( (n/s^{1/3})^{\tfrac{1-1/\tO}{2/3}} \biggr)$ time.
\end{lemma}

\paragraph{Putting everything together.}
For a storage parameter $s\in [n,n^\tQ]$, we construct the overall data structure following the approach 
in  Section~\ref{sec:trade-off}; see also the space/query-time trade-off discussion for the case of lines in Section~\ref{subsec:line-input}. 
We construct a partition tree $\Psi$ on $\G$ 
using the polynomial partitioning technique of Guth~\cite{Guth} and choose the same threshold $n_0$ as 
in~\eqref{eq:threshold}. Let $v$ be a node of $\Psi$ associated with a subset $\G_v \subseteq \G$ of $n_v$ arcs and 
a storage parameter $s_v$; $s_v=s$ for the root, and $s_w = s_v/D^3$ for a child $w$ of $v$. If $n_v > n_0$, we associate secondary structures at $v$ of size $O^*(s_v)$ using Lemmas~\ref{lem:arc-wide} and~\ref{lem:arc-zero}.

If $n_v \le n_0$, then $v$ is a leaf of $\Psi$ and we proceed as follows. If $s \le n^{3/2}$, we construct 
a data structure on $\G_v$ of size  $O^*(n_v)$ for answering plate-intersection queries in $O^*(n_v^{1-1/\tO})$ time using Theorem~\ref{lem:Pi-lin}. If $s > n^{3/2}$, we construct a data structure on $\G_v$ of size $O^*(n^\tQ)$ so that a query can be answered in $O^*(1)$ time. The same analysis as in Section~\ref{sec:trade-off} implies that the 
total size of the data structure is $O^*(s)$, and that it can be constructed in $O^*(s)$ expected time.

The query procedure visits $O^*((n/n_0)^{1/2})$ leaves of $\Psi$, and by Lemmas~\ref{lem:arc-wide} 
and~\ref{lem:arc-zero}, the total time spent in querying the secondary structures is 
$O^* ((n/s^{1/3})^{\tfrac{1-1/\tO}{2/3}})$.
For $s \le n^{3/2}$, the query procedure spends $O^*(n_0^{1-1/\tO})$ time at each leaf, so the total time spent at the leaves is 
$O^* ( n^{2- 3/\tO} / s^{1-2/\tO})$, which is dominated by the  time spent on the secondary structures.
For $s>n^{3/2}$, the query procedure visits $O((n/s^{1/\tQ})^{\tfrac{1}{2-3/\tQ}})$ leaves and spends $O^*(1)$ time at 
each leaf.
Putting everything together, we obtain the following summary result.
\begin{theorem}
  \label{thm:arcs}
Let $\TT$ be a family of constant-complexity plates in $\reals^3$, and let $\G$ be a set of 
constant-degree algebraic arcs in $\reals^3$ so that only $O(1)$ of them lie on any plane. 
Let $\tQ$ be the reduced parametric dimensions of \ $\TT$, and let $\tO$ be the reduced parametric dimension of the arcs in $\G$ or of the curves supporting them if they are planar. 
For a storage parameter $s \in [n,n^{\tQ}]$, $\G$ can be preprocessed, in $O^*(s)$ expected time, 
into a data structure of size $O^*(s)$, so that for any query plate $\plate\in \TT$, an 
intersection query on $\G$ with $\plate$ can be answered in time
\[
	O^* \biggl ( (n/s^{1/3})^{\frac{1-1/\tO}{2/3}}+(n/s^{1/\tQ})^{\frac{1}{2-3/\tQ}} \biggr ).
\]
\end{theorem}

\begin{remark*}
  We note that if we formulate the plate-intersection query as semi-algebraic range searching in a 
  straightforward manner, by Theorem~\ref{thm:Pi-tradeoff}, 
  we will obtain a data structure of size $O^*(s)$ with query time
  $O^*\left((n/s^{1/\tQ})^{\tfrac{1-1/\tO}{1-1/\tQ}}\right)$. For the boundary cases of the storage parameter, 
  our query time is the same, but it is better for all intermediate values of $s$. In particular, 
  for $s=n^{3/2}$, the simple approach yields the query time 
  $O^*(n^{\rho})$, with $\rho= (1-\tfrac{1}{\tO})(1-\tfrac{1}{2(\tQ -1)})$, while 
  $\rho$ improves to $\tfrac{3}{4}(1-\tfrac{1}{\tO })$ for our approach, which is indeed smaller for $\tQ > 3$.
\end{remark*}

\section{Plate-Intersection Queries amid Plates}
\label{sec:plate-plate}

Finally, we move to the third type of queries, in which both input and query objects are plates. 
We first focus on the case where both input and query objects are triangles, and later comment on
the extension to the general case.

\subsection{The case of triangles}
\label{subsec:tri-tri}

We wish to preprocess a set $\T$ of $n$ triangles in $\reals^3$ for answering triangle-intersection queries.
The solution is quite simple.
Note that if a query triangle
$\nabla$ intersects an input triangle $\Delta$ (in general position) then their intersection 
is a line segment $e=pq$, where each of the endpoints $p$, $q$ is the intersection point of either
(i) an edge of $\nabla$ with $\Delta$, or (ii) an edge of $\Delta$ with $\nabla$.
(Any combination of (i) and (ii) can occur at the two endpoints $p,q$.)

For intersections of type (i), we construct a data structure using Corollary~\ref{cor:space-query-triangle} 
in Section~\ref{sec:trade-off}, which 
answers a query in time $O^* (n^{5/4}/s^{1/2} + n^{8/9}/s^{2/9})$ using $O^*(s)$ storage.
For intersections of type (ii), we use the data structure in Section~\ref{subsec:seg-input},
which can answer a query in time $O^* ( n^{5/4}/s^{1/2} + n^{4/5}/s^{1/5})$ using $O^*(s)$ storage. Hence, we obtain the following result:
\begin{theorem} \label{thm:tri-count}
Let $\T$ be a set of $n$ triangles in $\reals^3$, and let $s\in [n,n^4]$ be a storage parameter.
$\T$ can be preprocessed, in expected time $O^*(s)$, into a data structure of size $O^*(s)$ so that a 
	triangle-intersection query can be answered in time $O^* (n^{5/4}/s^{1/2} + n^{8/9}/s^{2/9})$.
\end{theorem} 

We note that Theorem~\ref{thm:tri-count} holds for triangle-intersection counting queries as well. If we are 
only interested in triangle-intersection detection or reporting queries, we can use the data structure 
in~\cite{ES} for handling type~(i) intersections to get a slightly improved bound, namely, the structure
can answer a query 
in $O^*(n^{5/4}/s^{1/2}+n^{4/5}/s^{1/5})$ time using $O^*(s)$ storage. We thus obtain the following:

\begin{theorem} \label{thm:trixx}
Let $\T$ be a set of $n$ triangles in $\reals^3$, and let $s\in [n,n^4]$ be a storage parameter.
$\T$ can be preprocessed, in expected time $O^*(s)$, into a data structure of size $O^*(s)$ so that a 
triangle-intersection detection (resp., reporting) query can be answered in time
$O^* ( n^{5/4}/s^{1/2} + n^{4/5}/s^{1/5})$ (resp., $O^* ( n^{5/4}/s^{1/2} + n^{4/5}/s^{1/5} + k)$,
where $k$ is the output size).
\end{theorem} 
\subsection{The case of plates}
\label{subsec:plate-plate}

Consider next the general setup, where both the input and query objects are plates. 
As in Section~\ref{sec:main}, we assume that the input plates are in general position, i.e., 
any plane contains $O(1)$ input plates, and any line is contained in the supporting planes of $O(1)$ input plates.
Furthermore, we assume that the edges of the input and query plates admit a parametric representation as 
described in the beginning of Section~\ref{sec:main}.
Many aspects of the algorithm of Theorem~\ref{thm:trixx} are fairly easy to generalize. 
Let $\nabla$ be a query plate. By our general position assumption, $h_\nabla$, the plane supporting $\nabla$ contains 
$O(1)$ input plates, which can be handled separately. So let $\Delta$ be an input plate that does not lie in $h_\nabla$.
Then $\Delta\cap\nabla$ is the union of $O(1)$ pairwise-disjoint segments, all lying on the intersection
line of the two supporting planes, and an endpoint of each of these segments
is an intersection point of either (i) $\Delta$ with a boundary arc of $\nabla$, 
or (ii) a boundary arc of $\Delta$ with $\nabla$. 
(There is only one intersection segment when both plates are convex.) 
For simplicity, we state the bounds only for the case $s=n^{3/2}$. 

\paragraph*{Intersections of type (i).}
We preprocess $\T$ into a data structure using Theorem~\ref{thm:space-query-tradeoff} in 
Section~\ref{sec:trade-off} with storage parameter $s=n^{3/2}$.
An intersection query with an arc bounding $\nabla$ takes
$O^*\biggl(n^{\frac{2\tQ -3}{3(\tQ -1)}}\biggr)$ time, where $\tQ$ is the reduced parametric 
dimension of the curves supporting the edges of query plates.
\paragraph*{Intersections of type (ii).}

For this case we use the technique presented in Section~\ref{subsec:seg-input}.
That is, we apply Theorem~\ref{thm:arcs} to the boundary arcs (edges) of the input plates,
and obtain a data structure of size $O^*(n^{3/2})$ (constructed in expected time $O^*(n^{3/2})$),
which supports plate-intersection queries in $O^*\left(n^{\frac{3(\tO -1)}{4\tO}}\right)$ time,
where $\tO \ge 3$ is the reduced parametric dimension of the curves supporting the edges of input plates.
Combining the bound in Theorem~\ref{thm:arcs} with the one in Theorem~\ref{thm:space-query-tradeoff},
we obtain:
\begin{theorem} \label{thm:ext}
  Let $\TT$ be a family of constant-complexity plates in $\reals^3$,
  let $\T$ be a set of $n$ constant-complexity plates (not necessarily from the family $\TT$) 
  in $\reals^3$, and let $\tQ$ (resp.\ $\tO$)
  be the reduced parametric dimension of the curves supporting the edges of 
  plates in $\TT$ (resp.\ $\T$) with respect to the intersection predicate 
  with $\T$ (resp.\ $\TT$).  $\T$ can be preprocessed, in expected time $O^*(n^{3/2})$, into a 
  data structure of size $O^*(n^{3/2})$ so that an intersection query amid $\T$ with a plate 
  in $\TT$ can be answered in time $O^*(n^\rho)$, where
  $\rho = \max \left\{ \frac{2\tQ -3}{3(\tQ -1)},\; \frac{3(\tO -1)}{4\tO} \right\}$,
  for $\tO=\tQ =t \ge 3$, $\rho=\frac{3(t-1)}{4t}$.
  For counting queries, it counts the number of 
  intersection segments between the query plate and the input plates.
\end{theorem}
\section{Segment-Intersection Queries for Spherical Caps}
\label{sec:caps}

In the final result of this study, we consider segment-intersection queries for a set $\S$ of 
$n$ spherical caps in $\reals^3$; a spherical cap is a portion of a sphere cut off by a halfspace.
This case is different from the previous cases in that the input objects are not flat. We use this
case to illustrate that our techniques can also be applied to non-flat objects. As earlier, we 
construct a partition tree $\Psi$ on $\S$ based on the polynomial partitioning technique of Guth~\cite{Guth}, 
and the main challenge is to answer 
segment intersection queries on spherical caps that are wide at a cell of the respective polynomial partition.
We go over the steps of the algorithms presented in Sections~\ref{sec:main}--\ref{sec:trade-off}
and discuss the modifications needed for the new problem.

\paragraph{Constructing the CAD}
In the spirit of the technique in Sections~\ref{sec:main}--\ref{sec:trade-off},
we replace the caps by their containing spheres by constructing a CAD, as follows.
Let $\ES_4 \coloneqq \reals^4$ denote the $(a,b,c,r)$- space of all spheres in $\reals^3$
(where a point $(a,b,c,r)$ represents the sphere centered at $(a,b,c)$ and of radius $r$; strictly speaking, $\ES_4 = \reals^3 \times \reals_{\ge 0}$ 
but for simplicity and without loss of generality, we allow spheres of negative radius). 
For a point $\xi = (a,b,c,r)\in \ES_4$,
let $S_\xi\colon (x-a)^2+(y-b)^2+(z-c)^2=r^2$ denote the sphere defined by $\xi$.
For a spherical cap $\Delta$, let $\Delta^* \in \ES_4$ be the point corresponding to the sphere that contains $\Delta$.
Set $\ES \coloneqq \ES_4 \times \reals^3$.
Define the polynomial $\hat G \in \reals[a,b,c,r,x,y,z]$ by 
\[ 
\hat G (a,b,c,r,x,y,z) \coloneqq (x-a)^2 + (y-b)^2 + (z-c)^2 - r^2 . 
\]
Let $F\in\reals[x,y,z]$ be a partitioning polynomial of degree $c_1D$ for some constant
$c_1>0$ and some sufficiently large constant $D>0$. We construct a CAD $\CAD_7$ of $\ES$ induced by 
$\{F,\hat G\}$, with coordinates $a,b,c,r,x,y,z$ in the reverse-elimination order (starting with $z$).

The \emph{base} CAD $\CAD_4$ is the  projection of $\CAD_7$ onto $\ES_4$. For a cell $C\in\CAD_7$, 
let $C^\downarrow$ be its projection onto $\ES_4$. For each point $\xi\in\ES_4$, the three-dimensional 
fiber of $\CAD_7$ over $\xi\in\ES_4$, denoted by $\fiber(\xi)$, is a refinement of $\A(\{S_\xi,F\})$.
As before, each cell of $\fiber(\xi)$ is the cross-section of a cell $C$ of $\CAD_7$ over $\xi$, 
denoted by $C(\xi)$, 
and thus has a constant-size semi-algebraic 
encoding, which depends only on $C$. Again, this encoding will be used in the subsequent range searching 
step. Each cell $C\in\CAD_7$ is associated with a cell $\tau = \tau_C$ of $\A(F)$, such that for all 
$\xi\in C^\downarrow$, $C(\xi)$ is contained in $\tau$. 
We will be mostly interested in two-dimensional cells 
of $\fiber(\xi)$ that are contained in $S_\xi$,  referred to as \emph{pseudo-trapezoids}.
These pseudo-trapezoids are cross-sections of the cells of $\CAD_7$ that lie in $Z(\hat G)$ and 
$\xi\in C^\downarrow$. 

We define wide caps and narrow caps in full analogy to the definitions for plates. If a cap $\Delta$ 
is wide at a cell $\tau$ of $\reals^3 \setminus Z(F)$ then $\fiber(\Delta^*)$ contains a partition 
of $\Delta\cap\tau$ into pseudo-trapezoids, all disjoint from the relative boundary of $\Delta$.

For each cell $C\in\CAD_7$ that lies in $Z(\hat G)$, let $\S_C$ be the set of spherical caps $\Delta$ 
such that $\Delta^*\in C^\downarrow$, $\Delta$ is wide at $\tau_C$, and the pseudo-trapezoid 
$C(\xi) \subseteq \Delta$. 

\paragraph{The range searching mechanism.}
For each cell $C$ of the CAD, we define a semi-algebraic \emph{$C$-intersection predicate} $\Pi(e, \xi)$,
similar to~\eqref{eq:intersect-pred}, which is $1$ if and only if the segment $e$ intersects the sphere $S_\xi$ and 
one of the intersection points lies in the pseudo-trapezoid $C(\xi)$. We preprocess the set $\S_C$ for 
$C$-intersection queries. The parametric dimensions of the spheres corresponding to the 
caps in $\S_C$ and of segments in $\reals^3$ are $4$ and $6$, respectively. Hence, for a storage 
parameter $s\in [n,n^6]$, $\S_C$ can be preprocessed into a data structure of $O^*(s)$ size, in
$O^*(s)$ expected time, so that a $C$-intersection query on $\S_C$ can be answered in 
$O^*((n/s^{1/6})^{9/10})$ time (see Theorem~\ref{thm:Pi-tradeoff} in the appendix).
We build such a data structure for each cell $C\in\CAD_7$ that lies in $Z(\hat G)$.

For a query segment $e$ and a cell $\tau \in \reals^3\setminus Z(F)$, we answer a segment-intersection query on 
the spherical caps  of $\S$ that are wide at $\tau$ by performing $C$-intersection queries for all 
cells $C$ such that $\tau_C=\tau$ and $e\cap \tau_C \ne \emptyset$.

\paragraph{The overall performance.}
Following the same analysis as in Section~\ref{sec:main} and observing that 
the query time of the secondary structure is $O^*(n^{27/40})$ 
for $s=n^{3/2}$, we obtain a data structure of size $O^*(n^{3/2})$ with
$O^*(n^{27/40})$ overall query time. 
To obtain a space/query-time trade-off, we combine the above partition tree with the 
machinery developed in Section~\ref{sec:trade-off}, namely we choose the same threshold $n_0$ as in~\eqref{eq:threshold}. For a leaf $v$ of $\Psi$ with $n_v \le n_0$ spherical caps, we proceed as follows:
For $s \le n^{3/2}$, we construct a data structure on $\S_v$ of size $O^*(n_v)$ using Theorem~\ref{lem:Pi-lin}.
Since the parametric dimension of a spherical cap is $\tO = 7$ 
(four for its sphere and three for the halfspace that cuts the cap off its sphere), 
a segment-intersection query on $\S_v$ can be answered in $O^*(n^{6/7})$ 
time. For $s > n^{3/2}$, we construct a data structure of size $O^*(n^6)$ (since the parametric dimension of segments is $6$) so that a segment-intersection query can be answered in $O^*(1)$ time. Following the analysis in
Section~\ref{sec:trade-off},
we obtain a data structure of size and expected preprocessing cost
$O^*(s)$, with query time $O^*(n^{11/7}/s^{5/7}+n^{9/10}/s^{3/20})$. 
\begin{theorem}
\label{thm:caps}
Let $\S$ be a set of $n$ spherical caps in $\reals^3$. For a storage parameter $s\in [n,n^6]$, $\S$ can be 
preprocessed, in $O^*(s)$ expected time, into a data structure of size $O^*(s)$, so that a segment 
	intersection query on $\S$ can be answered in $O^*(n^{11/7}/s^{5/7}+n^{9/10}/s^{3/20})$ time.
\end{theorem}

\begin{remarks*}
(i) Our main goal in considering segment-intersection queries amid spherical caps 
was to demonstrate the versatility of our technique. We did not make an effort to optimize the bounds. 
For example, it might be possible to improve the reduced parametric dimension of segments to $4$,
by removing the effect of its endpoints (as in Section~\ref{subsec:planar_imp}), or to improve the 
parametric dimension of spherical caps (ideally to $4$).

(ii) We considered segment-intersection queries amid spherical caps, but the technique 
extends 
to arc-intersection queries amid spherical caps or more broadly amid other types of 
surface patches. The bounds one would obtain are similar to those in Section~\ref{sec:trade-off},
and depend on the (reduced) parametric dimensions of the surface patches and query arcs.
\end{remarks*}

\section{Discussion}
\label{sec:disc}

In this paper we presented a general technique for answering intersection queries amid planar objects, 
which also extends to non-flat objects in some cases. Our main observation is that the CAD construction 
facilitates the passage from an input consisting of surface patches (such as triangles, disks or spherical 
caps) to the full surfaces containing them (such as planes or spheres). This leads to an improvement (often 
significant) in the reduced parametric dimension of the input objects, which in turn leads to improved 
performance bounds for the resulting algorithms. 

This paper raises many open issues that would be interesting to pursue. We mention a few here:
\begin{itemize}
\item 
Can our approach be extended to answer intersection queries where both input and query objects 
are non-flat surface patches such as spherical caps? Our idea in 
Section~\ref{sec:plate-plate} of working with the boundary arcs of input/query objects does 
not work in this setting, 
as the intersection curve of two spherical caps may be a closed 
curve that lies in the interior of both caps. However, it might be possible to reduce the problem to the sphere-intersection problem among a set of spheres by using a variant of the CAD construction.
\item 
  Can our idea of replacing the query planar arc with the curve supporting the arc (see Section~\ref{sec:trade-off}) be extended to non-planar arcs?
  
\item 
  Currently the arc-intersection counting query counts the number of intersection points between the query arc and the input plates. Can it be extended to count the number of plates intersected by the query arc?

  \item 
    Can the recent lower bounds on semi-algebraic range queries~\cite{AC21,AC22} and on intersection searching~\cite{AC23} be extended to prove that (some of) the data structures presented in this paper are near optimal?
\end{itemize} 

\begin{acks}
  We thank Peyman Afshani for sharing with us problems that have motivated our study 
  of segment-intersection searching amid spherical caps. We also thank Ovidiu Daescu 
  for suggesting the problems studied in the latter part of the paper.
  Moreover, we are grateful to Saugata Basu, Eric Kaltofen, and J. Rafael Sendra for useful discussions.
  Finally, thanks are due to the anonymous
  reviewers of the conference version and of a preliminary version of this work for their insightful comments, 
  which helped us improve the presentation.

  Work by P.~K.~A. partially supported by NSF grants CCF-20-07556,  CCF-22-23870, and IIS-24-02823,
  and by the US-Israel Binational Science Foundation Grant 2022131.
  Work by B.~A. partially supported by NSF Grants CCF-15-40656 and CCF-20-08551,
  and by Grant~2014/170 from the US-Israel Binational Science Foundation.
  Work by E.~E. partially supported by the US-Israel Binational Science Foundation Grant 2022131, %
  and by Grants 824/17 and 800/22 from the Israel Science Foundation.
  Work by M.~K. partially supported by Grant 495/23 from the Israel Science Foundation,
  and by Grants 2019715 and CCF-20-08551 from the US-Israel Binational Science Foundation/US National Science Foundation.
  Work by M.~S. partially supported by Grants 260/18 and 495/23 from the Israel Science Foundation.
\end{acks}
\bibliographystyle{abbrv}%
\bibliography{circshoot}

\newpage

\appendix
\section*{Appendix}

\section{Answering Semi-Algebraic Relation Queries}
\label{app:multi-level}

In this appendix we present a multi-level data structure for answering \emph{semi-algebraic-relation queries}, defined by a semi-algebraic predicate, by recursively composing partition trees based on polynomial-partitioning methods.
The concept of a multi-level data structure goes back (at least) to Bentley~\cite{Be80} who used 
multi-level range trees for orthogonal range searching. 
Dobkin and Edelsbrunner~\cite{DE87} used multi-level data structures in 
the context of partition trees. Over the last four decades, multi-level data structures, based on geometric cuttings and 
simplicial partitions, have been extensively used to answer queries formulated as a conjunction of linear inequalities, 
see, e.g., the survey papers~\cite{Mat94,AE99,Ag:rs}. Using the semi-algebraic range searching data structure 
by Agarwal and Matou\v{s}ek~\cite{AM94},
multi-level partition trees have been developed for answering queries 
that are formulated as conjunctions of polynomial inequalities~\cite{AE99,Mat93,Mat94}, but they lead to 
a weaker bound (e.g., $O^*(n^{1-\tfrac{1}{2d-4}})$ query time, instead of $O^*(n^{1-\tfrac{1}{d}})$,
using $O^*(n)$ space, for $d> 4$). 
In principle, partition trees based on the recently developed algorithmic polynomial-partitioning technique, 
as developed in~\cite{AAEZ,MP}, can also be composed to 
construct multi-level data structures, and thereby yield considerably more efficient data structures
than those available from \cite{AM94}. However, both the construction and the analysis are more subtle,
due to the complicated nature of the partitions constructed in \cite{AAEZ,MP}.
Since we are unaware of such constructions having been described in the literature and these data structures are
extensively used throughout this paper,
we present them and analyze their performance in this appendix, in a fairly comprehensive manner,
for the sake of completeness.
We also believe that the general machinery presented here will find additional applications
beyond those in the present work.

Let $\oset$ be a family of geometric objects (e.g., points, segments, balls, simplices) such that 
each object $\obj \in \oset$ can be specified by a vector of $t$ real values, 
for some constant $t$, and thus can be represented 
as a point $\obj^*$ in a $t$-dimensional real vector space $\reals^t$, which we refer to as the 
\emph{object} space (sometimes also called the \emph{data} space) and denote by $\ospace$. 
Similarly, let $\qset$ be a family of \emph{query} objects, where each query object $\query$ 
can be represented as a point $\query^*$ in $\reals^{t'}$, where $t'$ is the (another constant) number of parameters 
needed to specify a query. Let $\qspace$ denote the parametric space $\reals^{t'}$
of query objects. We refer to the dimensions 
$t$ and $t'$ of $\ospace$ and $\qspace$ as the respective \emph{parametric dimensions} of $\oset$ and $\qset$.
Let $\Pi\colon \ospace\times \qspace \rightarrow \{0,1\}$ be a \emph{semi-algebraic predicate} defined as a Boolean formula on
a set of polynomial inequalities. 
Without loss of generality, we can assume that $\Pi$ is of the form
\begin{equation}
\label{eq:semi-pred}
	\Pi (\xx, \yy) = \bigvee_{i=1}^r \bigwedge_{j=1}^{k_i} (g_{ij} (\xx, \yy) \ge 0),\quad\text{for } \xx\in \ospace,\;\yy\in\qspace ,
\end{equation}
where each $g_{ij} \in \reals[\xx,\yy]$ is a polynomial over the joint space $\ospace\times\qspace$. 
With a slight abuse of notation, for a pair $\obj\in\oset$ and $\query\in\qset$, we will use 
$\Pi(\obj,\query)$ to denote $\Pi(\obj^*,\query^*)$. We say that $\Pi$ has \emph{constant complexity} if 
$\sum_i k_i$ is a constant and the degrees of all the polynomials $g_{ij}$ are also bounded by some constant. 
For a query object $Q\in\qset$, let
\[ 
\Qout_\Pi(\query) \coloneqq \{\obj\in\oset \mid \Pi(\obj,\query) = 1 \} .
\]

Clearly, $\Qout_\Pi(\query)$ is a semi-algebraic set of constant complexity for each $\query\in\qspace$.
Our goal is to preprocess $\oset$ into a data structure 
so that for a query object $\query\in\qset$, a desired aggregate statistics on the set 
$\Qout_\Pi(\query)$ can be computed quickly. We refer to this task as a 
\emph{$\Pi$-query}. 
We use the standard semi-group model: let $(\Sigma,+)$ be a semigroup. 
Each object $\obj\in\oset$ has a weight $w(\obj)\in \Sigma$. For a query $\query\in\qset$, 
the goal is to compute the sum 
\[
	\ph (\query) \coloneqq \ph_\Pi (\query, \Sigma) = \sum_{\obj\in\Qout_\Pi(\query)} w(\obj) .
\]
For example, counting queries can be answered by
choosing the semigroup to be $(\mathbb{N},+)$, where $+$ denotes
the standard integer addition, and setting $w(p)=1$ for every $p\in S$;
detection queries by choosing the semigroup to be $(\{0,1\}, \vee)$
and setting $w(p)=1$ for every $p$; and reporting queries by choosing the semigroup to
be $(2^S,\cup)$ and setting $w(p) = \{p\}$.

For a query object $\query$, define the semi-algebraic set
\begin{equation}
	\label{eq:ospace}
	\pregion{\query}_\Pi \coloneqq \{ \xx\in\ospace \mid \Pi(\xx,\query^*) = 1 \} .
\end{equation}
Then a $\Pi$-query on $\oset$ with $\query$ can be formulated as a semi-algebraic range query 
with $\pregion{\query}_\Pi$ on the set $\oset^* = \{\obj^*\mid \obj\in\oset\}$ (in the object space).

Alternatively, we can map an object $\obj\in\oset$ to the semi-algebraic region $\pregion{\obj}_\Pi$ 
in the query space, given by
\begin{equation}
	\label{eq:qspace}
	\pregion{\obj}_\Pi \coloneqq  \{ \yy\in\qspace \mid \Pi(\obj^*,\yy) = 1 \} .
\end{equation}
A $\Pi$-query can now be formulated as a \emph{point-enclosure} query (in the query space) 
with $\query^*$ on the collection of regions
$\pregion{\oset}_\Pi \coloneqq \{ \pregion{\obj}_\Pi \mid \obj\in\oset\}$.
That is, we query with a point, and seek the aggregate weight of the regions that contain it.
Roughly speaking, the first approach leads to  an $O^*(n)$-size data structure for answering $\Pi$-queries while the second approach leads to a data structure with $O^*(1)$ query time.

The following two lemmas, taken respectively from \cite{MP} and \cite{AAEZ},
lead to partition trees for answering semi-algebraic range and point-enclosure queries:

\begin{lemma}[Matou\v{s}ek and Pat\'akov\'a \cite{MP}]
	\label{lem:mp}
	Let $V$ be an algebraic variety of dimension $k\ge 1$ in $\reals^d$ such that all of its irreducible components have 
	dimension $k$ as well, and the degree of every polynomial defining $V$ is at most $E$. 
	Let $S \subset V$ be a set of $n$ points, and let $D>1$ be a parameter. 
	There exists a polynomial $g\in\reals[x_1,\ldots,x_d]$ of degree at most 
	$E^{d^{O(1)}}D^{1/k}$ that does not vanish identically on 
	any of the irreducible components  of $V$ (i.e., $V\cap Z(g)$ has dimension at most $k-1$), and each cell of 
	$V\setminus Z(g)$ contains at most 
	$n/D$ points of $S$.  Assuming $D, E, d$ are constants, the polynomial~$g$, a semi-algebraic representation of the 
	cells in  $V\setminus Z(g)$, and the points of $S$ lying in each cell can be computed in $O(n)$ expected time.
\end{lemma}

The second lemma, proved in~\cite{AAEZ}, generalizes the above result to semi-algebraic sets,
albeit with a somewhat weaker claim.
\begin{lemma}[Agarwal~\etal~\cite{AAEZ}]
	\label{lem:aaez}
	Let $V$ be an algebraic variety of dimension $k\ge 1$ in $\reals^d$, defined by polynomials of degree at most $E$. 
	Let $S$ be a multiset of $n$ semi-algebraic sets in $\reals^d$, each of complexity at most $b$, and let
	$D>1$ be a parameter. 
There exists a polynomial $g\in\reals[x_1,\ldots,x_d]$ of degree $E^{d^{O(1)}} D$ so that 
	$V\cap Z(g)$ has dimension at most $k-1$; $V\backslash Z(g)$ is partitioned into a set $\Omega$ of 
	$O(E^{d^{O(1)}} D^k)$ 
 connected semi-algebraic cells, each of complexity $(ED)^{O(d^4)}$, so that each 
	cell of $\Omega$ is crossed by (that is, intersected by, but not contained in) at most $n/D$ sets of $S$. 
        Assuming $D$, $E$, $d$, and $b$ are constants, the polynomial $g$, a semi-algebraic representation 
        of the cells in $\Omega$, and the elements of $S$ crossing each cell of $\Omega$ can be computed in 
        $O(n)$ randomized expected time.
\end{lemma}

The first (resp., second) lemma is our main tool for constructing a partition tree for answering 
semi-algebraic range queries (resp., point-enclosure queries). 
We combine them to obtain a trade-off between the size of the data structure and its query time. 
The overall performance of the data structure can be improved by using the following observation.
Often, the parametric dimensions $\dim(\ospace)$ of $\oset$ and $\dim(\qspace)$ of $\qset$ might be large,
but each polynomial in the predicate~$\qpred$
uses only few of the parameters that define $\ospace$ and $\qspace$. 
For example, a triangle in $\reals^2$ requires six parameters, 
but many queries on triangles (e.g., triangle-intersection queries) can be expressed by 
disjunctions 
 and conjunctions of polynomial inequalities where each inequality uses only two parameters 
of each of the involved triangles (the coordinates of one vertex of the triangle or the two coefficients defining 
the line supporting one of its edges). In this case one can construct a multi-level data structure,
each of whose levels consists of a two-dimensional partition tree, rather than a six-dimensional
tree; see, e.g.,~\cite{DE87,AE99,Mat94}. 

We say that the \emph{reduced parametric dimension} of $\ospace$ (with respect to $\Pi$) is $\tO$ if 
each polynomial $g_i$ in $\Pi$ uses at most $\tO$ of the $\dim (\ospace)$ parameters of an object. 
Similarly, we define the reduced parametric dimension of $\qspace$ with respect to $\Pi$ and denote 
it by $\tQ$. Let $\xx_i$ (resp. $\yy_i$) denote the subset of the variables of $\xx$ (resp., $\yy$) 
used in $g_i$, and let $\ospace_i$ (resp., $\qspace_i$) be the 
subspace of $\ospace$ (resp., $\qspace$) spanned by $\xx_i$ (resp., $\yy_i$). For a data object $\obj$, 
let $\obj_i^*$ denote the projection of $\obj$ 
onto $\ospace_i$, and similarly define $\query_i^*$ for 
a query object $\query$. Each $g_i$ is defined over the corresponding subspace $\ospace_i\times \qspace_i$
of $\ospace \times \qspace$, of dimension at most $\tO +\tQ $. 
Let 
\begin{align*}
\pregion{\query}_i & = \{\xx_i \in \ospace_i \mid g_i(\xx_i,\query_i^*) \ge 0\}\quad\text{and} \quad
\pregion{\obj}_i  = \{ \yy_i \in \qspace_i \mid g_i (\obj_i^*,\yy_i) \ge 0\} .
\end{align*}

For simplicity, we describe the multi-level data structure for the case when $\Pi$ consists of only conjunctions, i.e., $\Pi$ is of the form
\begin{equation}
\label{eq:semi-pred-conj}
	\Pi (\xx, \yy) = \bigwedge_{i=1}^{k} (g_{i} (\xx, \yy) \ge 0),\quad\text{for } \xx\in \ospace,\;\yy\in\qspace .
\end{equation}
The disjunctions in a general predicate
$\Pi$ can be handled by constructing a separate data structure for each disjunct. That is,
if $\Pi(\xx,\yy) = \Pi^{(1)} (\xx, \yy) \vee \Pi^{(2)} (\xx,\yy)$, say, then we construct separate data 
structures for $\Pi^{(1)}$ and $\Pi^{(2)}$, query each of them with the query object, and aggregate their 
answers. This na\'{\i}ve approach works for detection, reporting, and even some semi-group aggregation
(e.g., max or min) queries but it does not work for counting queries,
as some objects may be counted more than once. 
To handle general aggregation queries,
we build a data structure for 
$\Pi^{(1)}$, augment it with one for $\Pi^{(2)}$, and modify the query procedure, as described below, to
ensure that each input object satisfying the query predicate is 
included in the sum exactly once. 

\paragraph{Overview of the data structure.}
For $i\in [1,k]$, let $\pi_i$ and $\Pi_i$ be the semi-algebraic predicates 
\begin{equation}
	\label{eq:prefix-pred}
	\pi_i(\xx,\yy) = (g_i(\xx_i,\yy_i) \ge 0) \quad \mbox{and}\quad 
	\Pi_i (\xx,\yy) = \bigwedge_{j=1}^i \pi_j(\xx,\yy) .
\end{equation}
A standard multi-level data structure for answering $\Pi$-queries recursively builds $k$ levels of 
a partition tree. For convenience, we index the levels in reverse order, so we refer to the topmost 
level as the level-$k$ tree, and to the bottommost level as the level-$1$ tree.  
The level-$i$ partition tree is constructed to \emph{extract}, from a current so-called \emph{canonical subset},
the objects that satisfy the polynomial inequality $g_i$ with respect to a query object. Thus the top $k-i+1$ 
levels of the data structure together extract the objects that satisfy $\bigwedge_{j\ge i}\pi_j(\xx,\yy)$.
Each node $v$ of any level-$i$ partition tree (except for $i=0$) recursively builds another
partition tree of $i-1$ levels for answering $\Pi_{i-1}$-queries,
and it is attached to $v$ as one of its ``secondary'' data structures. 

Since we construct our partition trees using polynomial partitioning, there is an additional complication,
because Lemmas~\ref{lem:mp} and~\ref{lem:aaez} do not provide any guarantees on the partitioning of the
points that lie on the zero set of the 
corresponding partitioning polynomial, so  we have to handle the zero set separately. Nevertheless, the lemmas
provide us with the means of doing this, as they are formulated to apply to point sets and regions that
lie on a variety, of any dimension. This leads to two nested recursions---the outer one recurses on the 
index of the query subpredicate, as above. For each outer recursive level, the inner recursion is on the 
dimension of the variety to which the input or query objects are mapped as points. 

We now describe the overall data structure in detail. We first present a data structure of size $O^*(n)$ 
and then describe how to improve the query time by increasing the size of the structure. Since the predicate 
$\Pi$ is fixed, we omit it from the subscripts in our notation. For a set $\F \subset \reals[x_1,\ldots,x_d]$ 
of polynomials, let $Z(\F)=\bigcap_{F\in \F} Z(F)$ denote the real variety defined by $\F$. 
For $\F=\emptyset$, $Z(\F) = \reals^d$.

\subsection{Near-linear-size data structure}
\label{subsec:Pi-lin}

\paragraph{Data structure.}

We construct a multi-level partition tree $\Psi$ on $\oset$ for answering semi-algebraic range queries 
with the sets $\pregion{\query}$, for $\query\in\qset$. At each recursive step, we are at a node $v$ 
of $\Psi$, associated with some canonical subset $\oset_v\subseteq \oset$, and with two indices
$0 \le i \le k$, the index of the corresponding sub-predicate $\Pi_i$, and 
$1 \le t \le t_i = \dim(\ospace_i)$, the dimension of the variety (of constant degree)
containing (the points representing the objects of) $\oset_v$. 
The task at $v$ is to construct a recursive partition tree $\Psi_v^{(i,t)}$ 
for answering $\Pi_i$-queries on $\oset_v$.
We refer to $\Psi_v^{(i,t)}$ as an \emph{$(i,t)$-level partition tree}.

More precisely, we have a triple $(\oset_v,\F,i)$, where $\F$ is a set of $O(1)$ polynomials of
constant degree in $\reals[\xx_i]$ and $\oset_v \subseteq \oset$ is the canonical subset associated
with $v$, so that the set $\oset_{v,i}^* \coloneqq \{ \obj_i^* \mid \obj\in\oset_v\}$ of the points 
representing the objects of $\oset_v$ is contained in $Z(\F)$. Our goal at $v$ 
is to answer $\Pi_i$-queries on $\oset_v$ (that is, on $\oset_{v,i}^*$).
Set $n_v \coloneqq |\oset_v|$. The node $v$ is associated with a cell 
$\tau_v\subseteq Z(\F)$ of a suitable polynomial partition $F$ 
(that is, a connected component of $Z(\F)\setminus Z(F)$). 
Initially, $\oset_v = \oset$, $\F=\emptyset$, 
$i=k$, and $\tau_v=\ospace_i$. 

For $i=0$, $\Psi_v^{(i,t)}$ is a singleton node that stores $w(\oset_v)$. 
So assume $i\ge 1$. We fix a threshold parameter $n_0 \coloneqq n_0(i,t)$, where $n_0=1$ for $t=1$. 
Consider first the case $t=1$. In this case, $\oset_{v,i}^*$ lies on a one-dimensional curve 
$Z(\F)$. 
For simplicity, assume that $Z(\F)$ is a connected irreducible curve (the general
case is handled by partitioning $Z(\F)$ into 
connected, irreducible 
components and handling each of them separately).
We sort $\oset_{v,i}^*$ along $Z(\F)$ and construct a one-dimensional range tree~\cite{dBCKO08} on that set.  
Each node $z$ of the range tree is associated with a subarc $\tau_z$ of $Z(\F)$ and a subset 
$\oset_z \subseteq \oset_v$, such that $\oset_{z,i}^* \subset \tau_z$. If $z$ is a leaf, we simply 
store $\oset_z$ at $z$. Otherwise, we construct an $(i-1, t_{i-1})$-level structure 
$\Psi_z^{(i-1,t_{i-1})}$ for the subproblem $(\oset_z,\emptyset,i-1)$ and attach it 
to $z$ as a secondary structure.

Next, assume $i\ge 1$ and $t>1$. If $n_v \le n_0$ then $v$ is a leaf and we simply store $\oset_v$ 
at $v$. So assume that $n_v > n_0$. 
We choose a sufficiently large constant 
$D \coloneqq D(i,t)$ (in particular, $D(i,t-1)\gg D(i,t)$),
and apply Lemma~\ref{lem:mp}, which
yields a partitioning polynomial $F_v$ for the point set $\oset_{v,i}^*$ with respect to $Z(\F)$
that satisfies the properties in the lemma, i.e., 
the degree of $F_v$ is $O(D^{\tfrac1t+\delta})$, for an arbitrarily small constant $\delta>0$, and each
cell of $Z(\F) \setminus Z(F_v)$ contains at most $n_v/D$ points of $\oset^*_{v,i}$.
We attach two secondary structures to $v$---one $(i,t-1)$-level data structure $\Psi^{(i,t-1)}_v$, and another 
$(i-1,t_{i-1})$-level data structure $\Psi^{(i-1,t_{i-1})}_v$---as described below.

Let $\oset_v^0 \coloneqq \{ \obj \in \oset_v \mid \obj_i^* \in Z(F_v)\}$. 
We recursively construct an $(i,t-1)$-level data structure $\Psi^{(i,t-1)}_v$ for the 
subproblem $(\oset_v^0, \F\cup\{F_v\}, i)$ and attach it to $v$ as one of its secondary structures. 
We also construct an $(i-1,t_{i-1})$-level data structure $\Psi^{(i-1,t_{i-1})}_v$ for the 
subproblem $(\oset_v, \emptyset, i-1)$ and attach it to $v$ as another secondary structure. 
Let $\tau$ be a cell of $Z(\F) \setminus Z(F_v)$. We create a child $z_\tau$ of $v$. 
We then compute a semi-algebraic representation of $\tau$ and store it at $z_\tau$.
Set $\oset_\tau \coloneqq \{ \obj\in\oset_v \mid \obj_i^* \in \tau\}$. 
We recursively construct an $(i,t)$-level subtree $\Psi_\tau^{k,i}$ for the subproblem 
$(\oset_\tau, \F, i)$ and store it as a subtree of $\Psi_v^{(i,t)}$ rooted at $z_\tau$.
This completes the description of the overall data structure.

\begin{figure}[htb]
   \centering
   \includegraphics[scale=0.8]{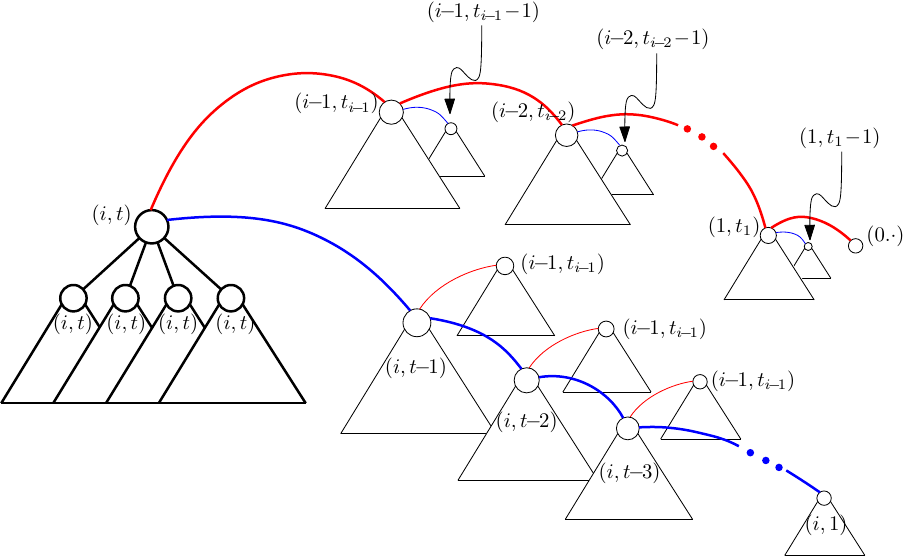}
   \caption{Schematic diagram of the $O^*(n)$-size data structure $\Psi_v^{(i,t)}$, constructed 
     on a $(i,t)$-level node $v$. Red and blue threads illustrate recursion on $i$ and $t$,
     respectively.}
   \label{fig:schem1}
\end{figure}

\paragraph{Query procedure.}
Let $\query\in\qset$ be a query object. Roughly speaking, for each $i\in[0,k]$, using $\Psi$, 
we compute the subset 
\[
	\Qout_\Pi^{(i)} (\query) \coloneqq 
	\{ \obj \in \oset \mid \bigwedge_{j=i+1}^k \left( g_j(\obj^*, \query^*)\ge 0 \right)\},
\]
the output of a $\Pi_i$-query on $\query$,
as 
a partition into 
a small number of pairwise-disjoint \emph{canonical} subsets $\C_1, \ldots, \C_u$, 
associated with respective $(i, \cdot)$-level nodes $v_1, \ldots, v_u$ of $\Psi$.
For $i=0$, $\Qout_\Pi^{(0)}(\query) = \Qout_\Pi(\query)$, so
these canonical subsets form the output of the $\Pi$-query for $\query$, and we simply add 
their prestored weights (in the corresponding semigroup).
For $i\ge 1$, for each $j \le u$, we recursively search in the secondary structure 
$\Psi^{(i-1,t_{i-1})}_{v_j}$ for answering the extended $\Pi_{i-1}$-query on~$\C_j$ with~$\query$.

In more detail, we traverse $\Psi$ in a top-down manner, and maintain a partial sum $\mu$. 
Initially, $\mu=0$ and we start at the root of $\Psi$. 
Suppose we are at a node $v$ of an $(i,t)$-level tree. If $i=0$, we simply add $w_v = w(\oset_v)$ to $\mu$. 
So assume $i\ge 1$. If $v$ is a leaf of the tree, we scan the set $\oset_v$ and add to $\mu$ 
the weights of those objects $\obj\in\oset_v$ for which $\Pi_i(\obj,\query)=1$.
If $v$ is an internal node, three cases can arise. Let 
$\pregion{\query}_i =\{\xx_i \in \ospace_i \mid g_i(\xx_i,\query_i^*)\ge 0\}$
be the semi-algebraic set defined above for the $i$-th inequality.
If $\tau_v\cap\pregion{\query}_i=\emptyset$, we do not continue the processing at $v$.
If $\tau_v \subseteq \pregion{\query}_i$,
we recursively search in the secondary structure $\Psi_v^{(i-1,t_{i-1})}$ with $\query$. 
Finally, if $\bd\pregion{\query}^{(i)} \cap \tau_v\ne \emptyset$, which is equivalent to $Z(g_i) \cap \tau_i \ne \emptyset$, we first recursively visit
the secondary structure $\Psi_v^{(i,t-1)}$ (to search in the set $\oset_v^0$),
and then recursively search at every child of~$v$. 

The canonical subsets associated with the level-$0$ nodes that are visited by the query procedure, i.e., 
the nodes whose weights are added to produce $\mu$, induce a partition of $\Qout_\Pi(Q)$. 
In contrast, the canonical subsets associated with the $(i,\cdot)$-level nodes $v$
visited by the query procedure for which $\tau_v\cap\pregion{\query}=\emptyset$ induce a partition of 
$\Qout_{\overline{\Pi}_i}(\query)$ where
\begin{equation}
	\label{eq:one-neg}
	\overline{\Pi}_i (\xx,\yy) = \Pi_{i-1} (\xx,\yy) \wedge \neg \pi_i(\xx,\yy).
\end{equation}
By collecting such canonical subsets at all levels, the same query procedure can also be used to compute a 
partition of ${\Qout}_{\neg\Pi}(\query) = \{ \obj\in\OS \mid \Pi(\obj,\query)=0\}$ into a few canonical subsets.
\paragraph{Analysis.} 
We now analyze the size and the query time of $\Psi$. For a node $v$ at some $(i,t)$-level, let 
$S(n_v,i,t)$ be the maximum size of the partition tree constructed on at most $n_v$ objects with these
parameters. We note that $S(n_v,0,t) = O(1)$ and that for $n_v \le n_0$ and $i\ge 1$ we have
$S(n_v,i,t) = O(n_v) = O(1)$ too. For $t=1$, the corresponding tree is a balanced binary tree, so $S(n_v,i,1)$ 
satisfies the following recurrence (for $n_v > n_0(i,1)$):
\begin{equation}
	S(n_v, i, 1) \le 2 S(n_v/2, i, 1) + S(n_v, i-1, t_{i-1}).
\end{equation}
Since $S(n_v,i,t)=\Omega(n_v)$,
the solution of the above recurrence is 
\[
S(n_v,i,1) = O(\log n_v) \cdot S(n_v,i-1,t_{i-1}) .
\]
Finally, for $t>1$ and $n_v > n_0$, we store at $v$ two
secondary structures of levels $(i,t-1)$ and $(i-1,t_{i-1})$, and we recursively construct an
$(i,t)$-level subtree on a set of at most $n_v/D$ objects for each of the $O(D)$ children of $v$. 
Hence, in this case, the recurrence is
\begin{equation}
	S(n_v,i,t) \le c_1 D \cdot S(n_v/D,i,t) + S(n_v,i-1,t_{i-1}) + S(n_v,i,t-1),
\end{equation}
where $t_{i-1} = \dim\left(\ospace_{i-1}\right)$ and $c_1 \coloneqq c_1(i,t)$ is a constant.
Using induction, it easily follows that, for any arbitrarily small constant $\eps>0$, there exists a constant 
$A \coloneqq A(i,t,\eps)$ such that the solution of the above recurrence is 
\[
S(n_v,i,t) \le A n_v^{1+\eps} .
\]
That is, $S(n_v,i,t) = O^*(n_v)$ for each $i$ and $t$, where the constant of proportionality 
also depends on $i$ and $t$.

Next, let $Q(n_v,i,t)$ be the maximum time spent by a query at a node $v$ of some $(i,t)$-level
that stores $n_v$ objects. We have $Q(n_v,0,t) = O(1)$, and $Q(n_v,i,t) = O(n_v)$ for $n_v \le n_0(i,t)$.  
For $t=1$, the analysis for a one-dimensional range tree implies that the query procedure visits nodes 
along $O(1)$ paths of the tree and the subproblem size at a node is at most half of that of its parent. 
Since $Q(n_v,i,t_i) = \Omega(n^\eps)$, for some constant $\eps>0$, we obtain 
\[
Q(n_v,i,1) = O(1) \cdot Q(n_v, i-1, t_{i-1}).
\]
Finally, consider $i\ge 1$, $t>1$ and $n_v > n_0(i,t)$. Since the degree of the partitioning polynomial is 
$O(D^{1/t+\delta})$, for a suitably small constant $\delta>0$,
and the degree of $g_i$ is constant, $Z(g_i)$ intersects $O(D^{1-1/t+\delta})$ cells of 
$Z(\F)\setminus Z(F_v)$~\cite{BPR}, which 
leads to the following recurrence:
\begin{equation}
	Q(n_v,i,t) \le c_2 D^{1-1/t+\delta} Q(n_v/D,i,t) + Q(n_v,i-1,t_{i-1}) + Q(n_v,i,t-1) ,
\end{equation}
where $c_2 \coloneqq c_2(i,t)$ is a constant. 
The first term in the above recurrence follows from \cite{MP}, and the second and third terms correspond to the 
query procedure visiting the secondary structures stored at $v$.
Again, it can easily be shown that, for any arbitrarily small constant $\eps> \delta$, there exists a 
constant $B \coloneqq B(i,t,\eps)$ such that the solution of the above recurrence is 
\[
Q(n_v,i,t) \le B n_v^{1-1/\tO +\eps} ,\qquad\text{where}\quad \tO  = \max_{1\le i\le k} t_i .
\]
That is, $Q(n_v,i,t) = O^*(n_v^{1-1/\tO})$ for each $i$ and $t$, where the constant of proportionality 
depends on $i$ and $t$. The same analysis implies that the subset of input objects that satisfy the query predicate
can be represented as the union of $ O^*(n_v^{1-1/\tO})$ pairwise-disjoint canonical subsets.

Finally, as mentioned above, the data structure can be adapted to handle disjunctions for general aggregation 
queries such as counting queries. More precisely, suppose, for concreteness,
that $\Pi(\xx,\yy) = \Pi_1(\xx,\yy) \vee \Pi_2(\xx,\yy)$, 
where each $\Pi_i$ is composed of only conjunctions. Set $\Pi'_2(\xx,\yy)=(\neg\Pi_1(\xx,\yy))\wedge 
\Pi_2(\xx,\yy)$.
We first build a multi-level data structure $\Psi_1$ 
for $\Pi_1$. At each node $v$ of $\Psi_1$ (at all levels), we construct a data structure $\Psi_2^v$ 
for $\Pi_2$-queries on the 
corresponding subset of input objects. To answer a $\Pi$-query, we first compute $\mu_1$,
the sum of weights of objects in $\Qout_{\Pi_1}(\query)$. As mentioned above, the same procedure can also compute 
a partition of $\neg\Qout_{\Pi_1}(\query)$ into a few canonical subsets. Each such canonical subset $\C_v$ is associated with a node $v$ of the partition tree. We query $\Psi_2^v$ with $\query$ and obtain  the sum of weights of points 
in $\C_v\cap \Qout_{\Pi_2}(\query)$.  
The sum of the weights over all canonical subsets of $\Qout_{\neg \Pi_1}(\query)$, denoted by $\mu_2$, returns the 
weight of $\Qout_{\Pi'_2}(\query)$. We return $\mu_1+\mu_2$. It is easily seen that no object is included 
multiple times in the sum.

We thus have the following result.
\begin{theorem}
	\label{lem:Pi-lin}
	Let $\oset$ be a set of $n$ geometric objects of constant complexity with reduced parametric dimension~$\tO$,
        and let $\qset$ be a family of query objects of constant complexity. Let 
	$\Pi \colon \oset\times\qset \rightarrow \{0,1\}$ be a semi-algebraic predicate of constant complexity. 
	$\oset$ can be preprocessed, in $O^*(n)$ randomized expected time, into a data structure of size $O^*(n)$,
        so that a $\Pi$-query on $\oset$ (with respect to any semigroup $\Sigma$)
	with an object in $\qset$ can be answered in $O^*(n^{1-1/\tO})$ time. 
	The subset of input objects that satisfy the query predicate can be represented as the union of 
	$O^*(n^{1-1/\tO})$ pairwise-disjoint canonical subsets.
\end{theorem}
\subsection{Space/query-time trade-off}
\label{subsec:Pi-tradeoff}

Next, we show how the query time can be improved by increasing the size of the data structure. 
We now define $t_i = \dim(\qspace_i)$ and set $\tQ \coloneqq \max_{1\le i \le k} t_i$. That is,
$\tQ$ is the reduced parametric dimension of $\qset$ with respect to $\Pi$. Let $s\in [n,n^{\tQ}]$ be 
a \emph{storage parameter}; 
the data structure we build will have size (and expected preprocessing cost) $O^*(s)$.

\paragraph{Data structure.}
We now work in the query space $\qspace$ and construct a multi-level partition tree $\qtree$ on 
$\pregion{\oset} \coloneqq \{ \pregion{\obj} \mid \obj \in \oset\}$ for answering 
point-enclosure queries with a query point $\query^*\in\qspace$, using Lemma~\ref{lem:aaez}. When the size of a subproblem falls 
below some threshold (now in general not a constant), 
we switch to the object space and build a data structure on the
current canonical set of objects using Theorem~\ref{lem:Pi-lin}. 
If we set the threshold to a constant and simply store the $O(1)$ objects at each leaf, 
we obtain a data structure of size $O^*(n^{\tQ})$ with $O^*(1)$ query time, see below.
Our structure interpolates between these two extreme performance bounds, the one just mentioned and the 
one in Theorem~\ref{lem:Pi-lin}. 

At each recursive step, we are at a node $v$ of $\qtree$, and we construct an $(i,t)$-level 
partition tree $\qtree^{(i,t)}$, for $0 \le i \le k$ and $1 \le t \le t_i$, for answering 
$\Pi_i$-queries on some canonical subset $\oset_v \subseteq \oset$, with a query object $\query$,
such that $\query_i^*$ lies on some given $t$-dimensional variety of constant complexity. 
More precisely, we now have a $4$-tuple
$(\oset_v, s_v, \F, i)$ where $\F$ is a set of $O(1)$ constant-degree polynomials in $\reals[\yy_i]$, 
$\oset_v \subseteq \oset$ is a canonical set associated with $v$, and $s_v$ is a storage parameter. 
We wish to construct a data structure of size $O^*(s_v)$ on $\oset_v$, in expected $O^*(s_v)$
preprocessing time, for answering efficiently $\Pi_i$-queries on $\oset_v$ with a query object $\query$ 
such that $\query_i^* \in Z(\F)$; see below for the analysis of the efficiency of the query. 
As before, $v$ is associated with a semi-algebraic cell 
$\tau_v \subset Z(\F)$, obtained from a suitable polynomial partitioning. 
Initially, $\oset_v = \oset$, $s_v=s$, $\F=\emptyset$, $i=k$, and $\tau_v=\qspace^{(i)}$. 

For $i=0$, $\qtree^{(0,t)}$ is a singleton node that stores $w(\oset_v)$. So assume $i\ge 1$. 

If $t=1$, $Z(\F)$ is a one-dimensional curve, and we consider query objects $\query$ for which
$\query_i^*$ lies on $Z(\F)$. As before, it suffices to consider situations in which $Z(\F)$
is connected. For an object $\obj\in\oset$, let $\pregion{\obj}\sqcap Z(\F)$ 
denote the set of connected components (arcs) of $\pregion{\obj}\cap Z(\F)$. We compute 
$\I_v \coloneqq \bigcup \{ \pregion{\obj}_i \sqcap Z(\F) \mid \obj \in \oset_v\}$, which is a 
set of $O(|\oset_v|)$ arcs on $Z(\F)$, and build a segment tree $\qtree_v^{(i,1)}$ on 
$\I_v$~\cite{dBCKO08}. 
Each node $z$ of $\qtree_v^{(i,1)}$ is associated with an arc 
$\tau_z \subseteq Z(\F)$, a subset $\oset_z\subseteq \oset_v$ of objects $\obj$ for which an endpoint of 
$\pregion{\obj}_i \cap Z(\F)$ lies in $\tau_z$,  and another subset $\C_z \subseteq \oset_v$, such that, for each
$\pregion{\obj} \in \C_z$ we have
$\tau_z \subseteq \pregion{\obj}_i$ but $\tau_{p(z)} \not\subset \pregion{\obj}_i\cap Z(\F)$ (where $p(z)$ is the parent of $z$).
If $z$ is a node of depth $\delta$ in the segment tree $\qtree_v^{(i,1)}$,
we construct an $(i-1, t_{i-1})$-level structure $\qtree_z^{(i-1,t_{i-1})}$ 
on $\C_z$ with space parameter $s_v/2^\delta$, i.e., for the subproblem $(\C_z, s_v/2^\delta, \emptyset, i-1)$, 
and attach $\qtree_z^{(i-1,t_{i-1})}$ to $z$ as its secondary structure. 

Finally, assume that $i\ge 1$ and $t>1$. 
We fix a threshold parameter $n_0 \coloneqq (n^t/s)^{\tfrac{1}{t-1}}$.
If $n_v \le n_0$, $v$ is a leaf of $\qtree$, and we construct a data structure $\Psi_v$ of size 
$O^*(n_v)$, as described in Section~\ref{subsec:Pi-lin}, for the subproblem $(\oset_v,\emptyset,i)$,
using Theorem~\ref{lem:Pi-lin} (the storage parameter is irrelevant in this case).
So assume that $n_v > n_0$. Set 
$\pregion{\oset}_{v,i} \coloneqq \{ \pregion{\obj}_i \subseteq \qspace_i \mid \obj\in\oset_v\}$, which is
a family of semi-algebraic sets in $\qspace_i$.
We choose a sufficiently large constant $D \coloneqq D(i,t)$
(in particular, $D(i,t-1)\gg D(i,t)$), 
and compute a 
partitioning polynomial $F_v$ for $\pregion{\oset}_{v,i}$ with respect to $Z(\F)$,
using Lemma~\ref{lem:aaez} and the notations therein. Taking $D$ to be sufficiently large, we write
the degree of $F_v$ as $O(D^{1+\delta})$, for an arbitrary constant $\delta>0$, and the number of cells 
in $Z(\F)\setminus Z(F_v)$ is $O(D^{t+\delta})$, 
where each cell is crossed by at most $n_v/D$ regions.
We recursively construct an $(i,t-1)$-level data structure $\qtree^{(i,t-1)}_v$ for the 
subproblem $(\oset_v, s_v, \F\cup\{F_v\}, i)$ and attach it to $v$ as one of its secondary structures. 
For each cell $\tau$ of $Z(\F) \setminus Z(F_v)$, we create a child $z_\tau$ of $v$. 
We construct a semi-algebraic representation of $\tau$ and store it at $z_\tau$.
Set $\oset_\tau = \{ \obj\in\oset_v \mid \bd\pregion{\obj}_i \cap \tau \ne \emptyset\}$, and
$\C_\tau = \{ \obj\in\oset_v \mid \tau \subseteq \pregion{\obj}_i\}$.
We construct an $(i-1,t_{i-1})$-level data structure $\qtree_\tau^{(i-1,t_{i-1})}$ for the subproblem
$(\C_\tau,s_v, \emptyset, t_{i-1})$ and store 
it as a secondary structure at the child $z_\tau$. Finally, we recursively construct at $z_\tau$
an $(i,t)$-level subtree $\qtree_\tau^{i,t}$ for the subproblem 
$(\oset_\tau, s_v/D^t, \F, i)$.

As for the linear-size data structure, two secondary structures are attached to each node $v$---an $(i,t-1)$-level data structure on $\oset_v^0$ 
and an $(i-1,t_{i-1})$-level data structure on the subset of objects $\obj\in\oset_{p(v)}$ for which 
$\tau_v \subseteq \pregion{\obj}_i$. 
This completes the description of the data structure.

\begin{figure}[htb]
  \centering
  \includegraphics[scale=0.8]{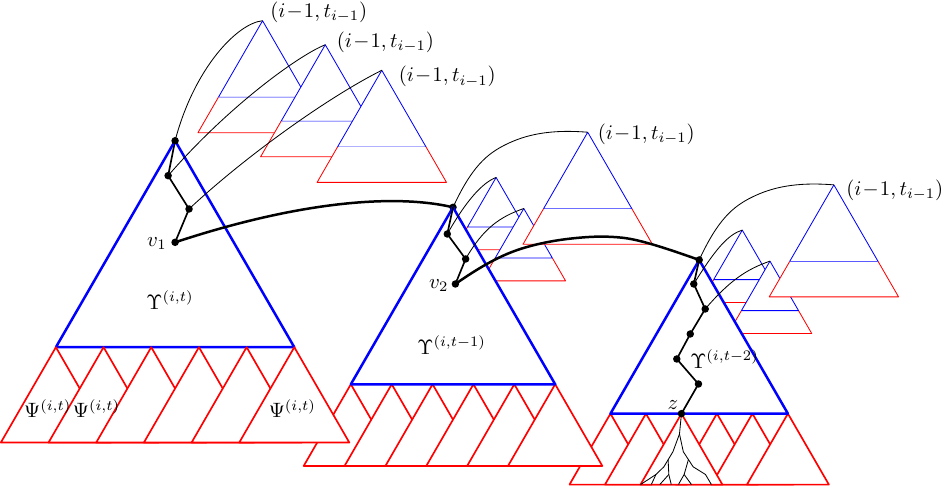}
  \caption{Schematic diagram of $\qtree^{(i,t)}$. The query procedure visits a path in level 
    $(i,\cdot)$ trees (shown in black), recursively visits nodes in $(i-1,\cdot)$-level 
    secondary structures attached to each node on the path, and eventually visits a subtree of 
    the small-size data structure $\Psi$ attached to a leaf (e.g., node $z$ in the figure).}
  \label{fig:schem2}
\end{figure}
\paragraph{Query procedure.}
Let $\query\in\qset$ be a query object. The basic idea is the same as in Section~\ref{subsec:Pi-lin}: 
we compute a partition of $\Qout_\Pi(\query)$ into canonical subsets, which are associated with (some of)
the nodes that the query procedure reaches. 
Since we allow more storage, the number of canonical subsets in the partition becomes smaller. 

In more detail, we traverse $\qtree$ in a top-down manner, and maintain a partial sum $\mu$. 
Initially, $\mu=0$ and we start at the root of $\qtree$. 
Suppose we are at a node $v$ of an $(i,t)$-level tree $\qtree_v^{(i,t)}$, such that $\query_i^* \in \tau_v$. 
If $i=0$, we add the weight $w_v = w(\oset_v)$ to $\mu$ and back up. So assume $i\ge 1$. 

We recursively query the $(i-1,t_{i-1})$-level secondary structure 
$\qtree^{(i-1,t_{i-1})}$ with $\query$ and add the resulting weight to $\mu$.
Next, if $v$ is a leaf, we query the structure $\Psi_v$ stored at $v$,
using Theorem~\ref{lem:Pi-lin}, and add the resulting aggregate weight to $\mu$. 
Finally, assume that $v$ is an internal node. 
If $t=1$, we check which of the two intervals associated with the children of $v$ contains $\query_i^*$,
and recursively search there. If $t>1$, we check whether the point $\query_i^*$ lies on $Z(F_v)$. 
If the answer is yes, we recursively search with $\query$ in the $(i,t-1)$-level structure $\qtree^{(i,t-1)}$ 
stored at $v$. Otherwise we find the child $z$ of $v$ such that $\query_i^* \in \tau_z$,
and recursively search at $z$.
Adding to $\mu$ the weights resulting from each recursive call, we obtain the aggregate output of the query.

\paragraph{Analysis.} 
We now analyze the size and the query time of $\qtree$. For a node $v$ at some $(i,t)$-level, 
let $S(n_v, s_v, i, t)$ be the maximum size of the partition tree constructed at $v$ 
(i.e., on a set of at most $n_v$ objects, with storage parameter $s_v$, and parameters $i$ and $t$). 
We note that $S(n_v,s_v,0,t) = O(1)$. %
For $t=1$, the structure is a segment tree, which is basically a balanced binary tree and each object is stored at $O(\log n_v)$ nodes, so it is easily seen that
$$S(n_v,s_v,i,1) = O(\log n_v) \cdot S(n_v,s_v,i-1, t_{i-1}).$$

Finally, for $i\ge 1, t>1, n_v > n_0$, we store two
secondary structures of levels $(i,t-1)$ and $(i-1,t_{i-1})$, and 
we construct an $(i,t)$-level subtree on a set of at most $n_v/D$ objects for each child of $v$, 
leading to the following recurrence:
\begin{equation}
	S(n_v, s_v,i,t) \le c_3 D^{t+\delta} \cdot S(n_v/D,s_v/D^t,i,t) + S(n_v,s_v,i-1,t_{i-1}) + S(n_v,s_v,i,t-1),
\end{equation}
where $c_3 \coloneqq c_3(i,t)$ is a constant.
By induction on $n_v, i, t$, we can show that for any arbitrarily small constant $\eps>\delta$, 
the overall size of the data structure is $O(sn^\eps)=O^*(s)$.
The analysis of the expected preprocessing cost is nearly identical, and yields the same bound $O^*(s)$.

Next, let $Q(n_v,s_v,i,t)$ denote the maximum query time (for at most $n_v$ objects, 
storage parameter $s_v$, and parameters $i$ and $t$). As before, $Q(n_v,s_v,0,t) = O(1)$. 

For $i \ge 1$ and $t=1$, 
the query procedure follows a single path in the segment tree until it reaches a leaf, so 
\[
	Q(n_v,s_v,i, 1) \le \sum_{j=0}^{\log (n_v/n_0)} Q(n_v/2^j, s_v/2^j, i-1,t_{i-1}) + O^*(n_0^{1-1/\tO}).
\]
Next, consider the case $i\ge 1$ and $t>1$. 
For $n_v \le n_0$, by Theorem~\ref{lem:Pi-lin}, 
the query time is $O(n_v^{1-1/\tO+\eps})$, for an arbitrarily small constant $\eps>0$,
so assume that $n_v > n_0$.
Recall that the query procedure visits the $(i-1,t_{i-1})$-level secondary structure attached to $v$, 
and it either visits a child of $v$ or the $(i,t-1)$-level secondary 
structure attached to $v$, depending on whether the point $\query_i^*$ does not or does lie on $Z(F_v)$.
We thus obtain the following recurrence:
\begin{equation}
	\label{eq:qtime-tradeoff}
	Q(n_v,s_v,i,t) \le Q(n_v,s_v,i-1,t_{i-1}) +  \max \Bigl\{Q(n_v/D,s_v/D^t,i,t),\, Q(n_v,s_v,i,t-1)\Bigr\}.
\end{equation}
We claim that the overall query time is 
\begin{equation} \label{eq:query-bd}
	Q(n_v,s_v,i,t) \le c_i n_0^{1-1/\tO+\eps} \log^{i-1} (n_v/n_0) (\log (n_v/n_0)+t) ,
\end{equation}
for any $\eps>0$ and a suitable sequence of constant coefficients $c_i$.
For $n_v > n_0$, by 
induction on $n_v, i$, and $t$, using~\eqref{eq:qtime-tradeoff}, and using the fact that the query time is 
$O(n_v^{1-1/\tO+\eps})$ for $n_v \le n_0$, we obtain for $t>1$ (the case of $t=1$ is simpler)
\begin{align*}
	Q(n_v,s_v,i,t) & \le c_{i-1} n_0^{1-1/\tO+\eps} \log^{i-2} (n_v/n_0)(\log (n_v/n_0)+t_{i-1})\\
			&\quad + c_i n_0^{1-1/\tO+\eps} \log^{i-1} (n_v/n_0) 
				\max \biggl \{ \log \frac{n_v/D}{n_0} + t,\, 
						\log \frac{n_v}{n_0} + t-1 \biggr \} \\
			& \le  c_i n_0^{1-1/\tO+\eps} \log^{i-1} (n_v/n_0) 
			\left ( 1 + 	\max \biggl \{ \log \frac{n_v/D}{n_0} + t,\,
				\log \frac{n_v}{n_0} + t-1 \biggr \} \right ) \\
				&\qquad \mbox{(assming $c_i \ge t c_{i-1}$)}\\
			& \le  c_i n_0^{1-1/\tO+\eps} \log^{i-1} (n_v/n_0) (\log(n_v/n_0)+t) .
\end{align*}

Hence, the overall query time is
\[
  O^*(n_0^{1-1/\tO}) = O^* \biggl ( (n^{\tQ }/s)^{\frac{1-1/\tO}{\tQ -1}} \biggr ) 
	= O^* \biggl ( (n/s^{1/\tQ})^{\frac{1-1/\tO}{1-1/\tQ}} \biggr ) .
\]
We thus have the main result of this appendix:
\begin{theorem}
	\label{thm:Pi-tradeoff}
	Let $\oset$ be a set of $n$ geometric objects of constant complexity,
        and let $\qset$ be a family of query objects of constant complexity. 
	Let $\Pi\colon \oset\times\qset \rightarrow \{0,1\}$ be a semi-algebraic predicate of constant complexity,
        and let $\tO, \tQ$ be the respective reduced parametric dimensions of $\oset$ and $\qset$
	with respect to $\Pi$. For a storage parameter $s\in[n,n^{\tQ}]$,
	$\oset$ can be preprocessed, in $O^*(s)$ expected time, into a data structure of size $O^*(s)$,
	so that a $\Pi$-query on $\oset$ with an object $\query\in\qset$ can be answered 
	in $O^*\biggl((n/s^{1/\tQ})^{\tfrac{1-1/\tO}{1-1/\tQ}} \biggr)$ time.
\end{theorem}

If we perform $m\ge 1$ $\Pi$-queries on $\oset$ with $m$ objects of $\qset$, 
then the total expected time spent, including the time to construct the data structure, is 
\[O^* \biggl (s+m(n/s^{1/\tQ})^{\tfrac{1-1/\tO}{1-1/\tQ}} \biggr ).\]
By choosing the storage parameter 
$$s = \min \left\{ n^\tQ, \max \left\{n, 
	m^{\tfrac{1-1/\tQ}{1-1/\tO\tQ}}n^{\tfrac{1-1/\tO}{1-1/\tO\tQ}}\right\}\right\},$$
we obtain the following:
\begin{corollary}
	\label{cor:Pi-tradeoff}
	Let $\oset$ be a set of $n$ geometric objects of constant complexity,
        and let $\qset$ be a family of query objects of constant complexity. 
	Let $\Pi\colon \oset\times\qset \rightarrow \{0,1\}$ be a semi-algebraic predicate of constant complexity,
        and let $\tO, \tQ$ be the respective reduced parametric dimensions of $\oset$ and $\qset$
	with respect to $\Pi$. For any given $m>0$, $m$ $\Pi$-queries on $\OS$ with $m$ objects of $\qset$ can be 
	performed in a total of 
	$$O^*\biggl (m^{\tfrac{1-1/\tQ}{1-1/\tO\tQ}}n^{\tfrac{1-1/\tO}{1-1/\tO\tQ}}+m+n \biggr )$$
	expected time. For $\tO=\tQ=t$, the total expected time is 
	$$O^* \biggl ((mn)^{\tfrac{1}{1+1/t}}+m+n \biggr ).$$
\end{corollary}
Corollary~\ref{cor:Pi-tradeoff} can be used to compute a compact representation of 
the collection of pairs of objects that satisfy a semi-algebraic predicate: 
Let $\Aset$ and $\Bset$ be two sets of geometric objects, let $\Pi \colon \Aset\times\Bset \rightarrow \{0,1\}$ be a 
semi-algebraic predicate of constant complexity, and let 
$\Aset\Pi\Bset = \{ (\Aobj,\Bobj) \mid \Pi(\Aobj,\Bobj)=1\}$ be the collection of pairs that satisfy the predicate. 
A \emph{biclique cover} of $\Aset\Pi\Bset$ is a family $\F=\{(\Aset_1,\Bset_1), \ldots, (\Aset_r,\Bset_r)\}$ 
such that (i) for any $1 \le i \le r$ and for any $(\Aobj,\Bobj)\in \Aset_i\times\Bset_i$, $\Pi(\Aobj,\Bobj)=1$, 
and (ii) for any $(\Aobj,\Bobj) \in \Aset\times\Bset$ with $\Pi(\Aobj,\Bobj)=1$, there is an 
index (not necessarily unique) $j\le r$ 
such that $(\Aobj,\Bobj)\in \Aset_j\times\Bset_j$. The size of the biclique cover $\F$ is 
$\sum_{i=1}^r \left( |\Aset_i|+|\Bset_i| \right)$. 
If every pair of $\Aset\Pi\Bset$ appears only once in $\F$, then we refer to $\F$ as a 
\emph{biclique partition} of $\Aset\Pi\Bset$. A biclique partition of $\Aset\Pi\Bset$ can be computed 
as follows: We preprocess $\Aset$ into a data structure using Theorem~\ref{lem:Pi-lin}, and we query it with each object in $\Bset$.
Let $\Aset_1, \ldots, \Aset_r$ be the family of canonical subsets constructed by the data structure. By construction,
the size of the data structure is proportional to $\sum_{i=1}|\Aset_i|$.
For an object $\Bobj\in\Bset$, let $\C(\Bobj)$ be the subfamily of the canonical subsets that represent 
the set $\Qout_\Pi(\Bobj)$. For $1 \le i \le r$, we set $\Bset_i\subseteq\Bset$ to be the subset of objects for which
$\Aset_i$ is in $\C(\Bobj)$. Then $\{(\Aset_i,\Bset_i) \mid 1 \le i \le r\}$ is the desired biclique partition, and 
$\sum_i |\Bset_i| = \sum_{\Bobj\in\Bset} |\C(\Bobj)|$.
By Corollary~\ref{cor:Pi-tradeoff}, we obtain the following:

\begin{corollary}
	\label{cor:biclique2} 
	Let $\Aset$ and $\Bset$ be two sets of geometric objects of sizes $n$ and $m$, respectively, 
	$\Pi  \colon \Aset\times\Bset \rightarrow \{0,1\}$ a semi-algebraic predicate of constant complexity, and 
	$\tA, \tB$ the reduced parametric dimension of $\Aset$ and $\Bset$, respectively. Then a 
	biclique partition of $\Aset\Pi\Bset$ of size  
	$O^*\biggl (m^{\tfrac{1-1/\tB}{1-1/\tA\tB}}n^{\tfrac{1-1/\tA}{1-1/\tA\tB}}+m+n \biggr )$
	can be computed in expected time 
	$O^*\biggl (m^{\tfrac{1-1/\tB}{1-1/\tA\tB}}n^{\tfrac{1-1/\tA}{1-1/\tA\tB}}+m+n \biggr )$.
	For $\tA=\tB=t$ and $m=n$, the bounds become $O^*(n^{\tfrac{2}{1+1/t}})$.
\end{corollary}

Special cases of Corollary~\ref{cor:biclique2} have been used for many  geometric 
optimization problems as well as for representing geometric graphs compactly~\cite{AV00,Mat91,AE99}. 
For example, it implies that the edges in the intersection graph of a set of $n$ 
segments or unit disks (resp.\ disks of arbitrary radii) in the plane can be represented as a 
biclique partition of size $O^*(n^{4/3})$ (resp.\ $O^*(n^{3/2})$) as $t=2$ (resp. $t=3$) in these setups.
The partition can be computed within the same time bound. 

Let $\Pi$ be a predicate of the form~\eqref{eq:semi-pred-conj} defined by a conjunction of 
polynomial inequalities.
Using the fact that the query procedure can also compute a partition of $\Qout_{\overline{\Pi}_i}(\query)$ 
(see \eqref{eq:one-neg}), 
for any $i \le k$, into a few canonical subsets, we can generalize the above corollary as follows:

\begin{corollary}
  \label{cor:biclique3} 
  Let $\Aset$ and $\Bset$ be two sets of geometric objects of sizes $n$ and $m$, respectively, 
  $\Pi  \colon \Aset\times\Bset \rightarrow \{0,1\}$ a semi-algebraic predicate of constant complexity of 
  the form $\Pi(A,B) = \bigwedge_{i=1}^k \pi_i(A,B)$, where each $\pi_i$ is defined by a 
  single polynomial inequality, and 
  $\tA, \tB$ the reduced parametric dimension of $\Aset$ and $\Bset$, respectively, with respect to $\Pi$. 
  Then a partition $\{(\Aset_1, \Bset_1), \ldots, (\Aset_u,\Bset_u)\}$ of $\Aset \times \Bset$ of size  
  $O^*\biggl (m^{\tfrac{1-1/\tB}{1-1/\tA\tB}}n^{\tfrac{1-1/\tA}{1-1/\tA\tB}}+m+n \biggr )$
  can be computed in expected time 
  $O^*\biggl (m^{\tfrac{1-1/\tB}{1-1/\tA\tB}}n^{\tfrac{1-1/\tA}{1-1/\tA\tB}}+m+n \biggr )$ such that for every 
  $1 \le i \le u$ and for every $1 \le j \le k$ either $\Pi_j(A,B)=1$ for all 
  $(A,B) \in \Aset_i \times \Bset_i$ or $\Pi_j(A,B) =0$ for all $(A,B) \in \Aset_i \times \Bset_i$.
  For $\tA=\tB=t$ and $m=n$, the bounds become $O^*(n^{\tfrac{2}{1+1/t}})$.
\end{corollary}

We conclude this discussion by making a few final remarks:
\smallskip
\begin{remarks*}
  (i) In the above data structure, we first construct a partition tree in the query space. 
  When the size of the subproblem falls below a certain threshold, we construct a partition 
  tree in the object space. We switch between the query and object spaces only once. 
  In some applications (see e.g.,~\cite{AS02,Mat93}), however, one obtains a better performance by 
  switching back and forth between the two spaces at each step.  Namely, we perform one step of polynomial 
  partitioning in the object space followed by one step of polynomial partitioning in the query space, 
  and then repeat. 

  (ii) The query time in the first data structure (see Theorem~\ref{lem:Pi-lin}) only 
  depends on $\tO$ (and not on $\tQ$). 
  Therefore if multiple polynomial inequalities use the same subset of parameters of the input 
  objects, we can combine these inequalities into a single predicate $\Pi_i^O(\xx_i,\yy)$. More 
  generally, the first data structure can handle a predicate of the form 
  \[ \Pi(\xx,\yy) = \bigwedge_{i=1}^k \Pi_i^O (\xx_i, \yy) , \]
  where each $\Pi_i^O$ is an arbitrary semi-algebraic predicate of constant complexity 
  (possibly containing disjunctions).
  Similarly, the top portion of the second data structure, which is constructed in the query space,
  can handle an equally general predicate of the form
  \[ \Pi(\xx,\yy) = \bigwedge_{i=1}^k \Pi_i^Q (\xx, \yy_i) ,\]
  by lumping together several subpredicates that use the same subset of the query parameters.
  We note that the compressed predicate could reduce the number of levels in the data structure, 
  and that each sub-predicate could have disjunctions. 
  Although this observation improves the performance of the data structure by only 
  at most a polylogarithmic factor, it often simplifies the data structure considerably.
  
  For example, suppose we wish to construct a linear-size data structure to perform 
  planar double-wedge range queries, where the query objects are 
  double wedges $W$, bounded by two lines $\ell_1$ and $\ell_2$, that do not contain the vertical line 
  (the segment-intersection query for a set of lines in $\reals^2$ can be formulated as 
  double-wedge range query using the standard duality transform). 
  Regarding $\ell_1, \ell_2$ as linear functions,
  the intersection predicate in this case is
  $$\Pi(\xx) = ((\ell_1(\xx)\ge 0) \wedge (\ell_2(\xx) \le 0))\vee 
  ((\ell_1(\xx)\le 0)\wedge (\ell_2(\xx)\ge 0)).$$
  Instead of constructing two separate two-level partition trees, we can construct a single 
  one-level partition tree on the input points. The query procedure recursively visits the children of a node 
  $v$ if the boundary $\bd W$ (i.e., one of the lines bounding $W$) intersects the cell $\tau_v$. 
  
  \smallskip

  (iii) In either of the two data structures, when the size of a subproblem falls below the threshold $n_0$ 
  (i.e., at a leaf), one could build a completely different data structure for answering queries on 
  the subproblem (such as our construction of a data structure based on polynomial partitioning in 
  $3$-space in Section~\ref{sec:trade-off}).
  Similarly, one could 
  replace the level-$1$ data structure with a different one that is more efficient. Such an approach sometimes 
  leads to better query time for detection (emptiness) and reporting queries, as in the case of 
  halfspace-emptiness queries. See, e.g.,~\cite{MoSh,ShSh}.
\end{remarks*}

\end{document}